\newtheorem{theorem}{Theorem}[section]
\newtheorem{lemma}[theorem]{Lemma}
\newtheorem{corollary}[theorem]{Corollary}
\newtheorem{proposition}[theorem]{Proposition}
\newtheorem{condition}{Condition}[]
\newcommand{\cmark}{\ding{51}}%
\newcommand{\xmark}{\ding{55}}%
\newcommand{\mypar}[1]{\noindent\textbf{#1}}
\definecolor{Azure1}{rgb}{0.95,1,1}
\lstdefinestyle{mystyle}{
    backgroundcolor=\color{Azure1},
    columns=fixed,
    basicstyle=\ttfamily\footnotesize,
    basewidth=0.5em,
    breakatwhitespace=false,
    breaklines=true,
    captionpos=b,
    frame=single,
    keepspaces=true,
    numbers=left,
    numberstyle=\small,
    numbersep=5pt,
    showspaces=false,
    showstringspaces=true,
    showtabs=false,
    tabsize=2
}
\crefname{lstlisting}{Listing}{Listings}
\begin{document}

\title{Optimal Computation in Anonymous Dynamic Networks}

\author{
Giuseppe A. Di Luna\thanks{
DIAG, Sapienza University of Rome, Italy, \texttt{g.a.diluna@gmail.com}}
\and
Giovanni Viglietta\thanks{
Department of Computer Science and Engineering, University of Aizu, Japan, \texttt{viglietta@gmail.com}}
}
\date{}
\singlespacing

\maketitle

\begin{abstract}
We give a simple characterization of the functions that can be computed deterministically by anonymous processes in dynamic networks, depending on the number of leaders in the network. In addition, we provide efficient distributed algorithms for computing all such functions assuming minimal or no knowledge about the network. Each of our algorithms comes in two versions: one that terminates with the correct output and a faster one that stabilizes on the correct output without explicit termination. Notably, these are the first deterministic algorithms whose running times scale \emph{linearly} with both the number of processes and a parameter of the network which we call \emph{dynamic disconnectivity} (meaning that our dynamic networks do not necessarily have to be connected at all times). We also provide matching lower bounds, showing that all our algorithms are asymptotically \emph{optimal} for any fixed number of leaders.

While most of the existing literature on anonymous dynamic networks relies on classic mass-distribution techniques, our work makes use of a novel combinatorial structure called \emph{history tree}, which is of independent interest. Among other contributions, our results make conclusive progress on two popular fundamental problems for anonymous dynamic networks: leaderless \emph{Average Consensus} (i.e., computing the mean value of input numbers distributed among the processes) and multi-leader \emph{Counting} (i.e., determining the exact number of processes in the network). In fact, our approach unifies and improves upon several independent lines of research on anonymous networks, including Yamashita--Kameda, IEEE T.\ Parall.\ Distr.~1996; Boldi--Vigna, Discrete Math.~2002; Nedi\'c et al., IEEE T.\ Automat.\ Contr.~2009; Kowalski--Mosteiro, J.\ ACM~2020.

Our contribution not only opens a promising line of research on applications of history trees, but also demonstrates that computation in anonymous dynamic networks is practically feasible and far less demanding than previously conjectured.
\end{abstract}

\clearpage

\tableofcontents

\clearpage

\section{Introduction}\label{s:1}

\subsection{Background}\label{s:1.1}

\mypar{Dynamic networks.} The study of theoretical and practical aspects of highly dynamic distributed systems has garnered significant attention in recent years~\cite{CFQS12,KO,MS18}. These systems involve a constantly changing network of computational devices called \emph{agents} (sometimes referred to as ``processes'' or ``processors''). Pairs of agents may send each other messages only through \emph{links} that appear or disappear unpredictably. This dynamicity is typical of modern real-world systems and is the result of technological innovations, such as the spread of mobile devices, software-defined networks, wirelesses sensor networks, wearable devices, smartphones, etc.

\smallskip
\mypar{Connected networks.} There are several models of dynamicity~\cite{CFQS12}; a popular choice is the \emph{1-interval-connected} network model~\cite{KLO10, DW05}. Here, a fixed set of $n$ agents communicate through links forming a time-varying graph, i.e., a graph whose edge set changes at discrete time units called \emph{rounds} (thus, the system is synchronous); such a graph changes unpredictably, but is assumed to be connected at all times.

\smallskip
\mypar{Disconnected networks.} The 1-interval-connected network model may not be a suitable choice for many real systems, due to the very nature of dynamic entities (think of P2P networks of smart devices moving unpredictably) or due to transient communication failures, which may compromise the network's connectivity. A more realistic assumption is that the union of all the network's links across any $\tau$ consecutive rounds induces a connected graph on the agents~\cite{KM22,NOOT09,OT09}. We say that such a network is \emph{$\tau$-union-connected}, and the smallest such parameter $\tau\geq 1$ is called \emph{dynamic disconnectivity}. Another widely used parameter for dynamic networks is the \emph{dynamic diameter} $d$. The relationship between $\tau$ and $d$ is discussed at the end of \cref{ss:disc}. In particular, it is worth noting that every occurrence of the parameter $\tau$ in the running times of our algorithms can be safely replaced with $d$. Observe that 1-interval-connected networks can equivalently be characterized as 1-union-connected networks. We remark that non-trivial (terminating) computation requires some conditions on temporal connectivity to be met, as well as some a-priori knowledge by agents (refer to \cref{o:timeimposs,o:timeimposs2}).

\smallskip
\mypar{Networks with unique IDs.} A large number of research papers have considered dynamic systems where each agent has a distinct identity (\emph{unique IDs})~\cite{DPR13,HF12,KLO10}. In this setting, there are efficient algorithms for consensus~\cite{KOM11}, broadcast~\cite{CFMS15}, counting~\cite{KLO10, DW05}, and many other fundamental problems~\cite{KLO11,MS18}.

It should be noted that networks with unique IDs allow for very simple algorithms for a wide variety of problems. Indeed, in a 1-interval-connected network of $n$ agents, if each agent broadcasts the set of its ``known IDs'' at every round (initially, an agent only knows its own ID), then every agent can learn the IDs of all other agents in $n$ rounds. If input values are attached to these IDs, then the network can compute any function of such input values within $n$ rounds.

For this reason, most research on networks with unique IDs has focused on the ``congested'' model, which limits the size of every message to $O(\log n)$ bits~\cite{PD00}. In this model, the \emph{all-to-all token dissemination} problem (where each agent begins with a unique token, and the objective is for all agents to collect every token) can be solved in $O(n^2)$ rounds using a \emph{token forwarding} strategy, where tokens are merely stored, copied, and sent without any alteration~\cite{KLO10}. Notably, any token-forwarding solution to the all-to-all token dissemination problem requires $\Omega(n^2/\log n)$ rounds~\cite{DPR13}.

\smallskip
\mypar{Anonymous networks.} Networks without IDs, also called \emph{anonymous systems}, pose an additional challenge compared to networks with unique IDs. In this model, all agents have identical initial states, and may only differ by their inputs. In the last thirty years, a large body of works have investigated the computational power of anonymous static networks, giving characterizations of what can be computed in various settings~\cite{BV01,BV02,CDS06,CGM08,JMM12,FPP00,SUW15,YK88}.

Anonymous systems are not only important from a theoretical standpoint, but also have a remarkable practical relevance. In a highly dynamic system, IDs may not be guaranteed to be unique due to operational limitations~\cite{DW05}, or may compromise user privacy. Indeed, users may not be willing to be tracked or to disclose information about their behavior; examples are COVID-19 tracking apps~\cite{SM20}, where a threat to privacy was felt by a large share of the public even if these apps were assigning a rotating random ID to each user. In fact, an adversary can easily track the continuous broadcast of a fixed random ID tracing the movements of a person~\cite{LAOOO20}. Anonymity is also found in insect colonies and other biological systems~\cite{GMTN15}.

We remark that, while in congested networks with unique IDs the difficulty is to cope with the severely limited message sizes, in (non-congested) anonymous networks the difficulty is to overcome the symmetry introduced by the anonymity of agents. Hence, the tools used in these two settings are radically different.

\smallskip
\mypar{Randomization.} One may wonder how randomization could help in breaking symmetry in anonymous networks. A simple strategy would be to sample unique IDs from a source of randomness, and then leverage such unique IDs. However, this approach would require an estimate on the network size to succeed with high probability, and is therefore unsuitable for safety-critical systems (where failures are not allowed, no matter how unlikely) or systems where reliable sources of randomness are not available, or when nothing is known about the network size. Furthermore, assigning unique IDs to agents compromises their anonymity, thereby forfeiting the security and privacy benefits of having an anonymous network.

Thus, in this paper we do not consider randomized algorithms, but only 
\emph{deterministic} ones. The only work we are aware of that proposes a randomized algorithm for determining the size of the network $n$ in the model considered in this paper is~\cite{KLO10}. However, this algorithm only provides approximate solutions and relies on prior knowledge of an upper bound on $n$. Additionally, its correctness is guaranteed only with high probability.

The problem of estimating the network size using randomized protocols has also been studied in the peer-to-peer community~\cite{massolino,jealino,massolino2}. We note that in these models, exact counting is not possible because agents continuously join and leave the network.

\smallskip
\mypar{Networks with leaders.} In order to deterministically solve several non-trivial problems in anonymous systems, it is necessary to have some form of initial ``asymmetry''~\cite{A80,BV02,MCS13,YK88}. The most common assumption is the existence of a single distinguished agent called \emph{leader}~\cite{AAE08,ABBS16,BBCD15,BBK11,DFIISV19,FPP00,KM18,KM20,MCS13,Sa99,YK96} or, less commonly, a subset of several leaders, and knowledge of their number~\cite{KM19,KM21,KMarxiv,KM22}.

Note that a network with an unknown number of leaders is equivalent to a network with no leaders at all. Also, if the leaders are distinguishable from each other, then any one of them can be elected as a unique leader. Hence, the only genuinely interesting multi-leader case is the one with a known number of indistinguishable leaders.

The presence of leaders is a realistic assumption: examples include base stations in mobile networks, gateways in sensor networks, etc. For these reasons, the computational power of anonymous systems enriched with one or more leaders has been extensively studied in the traditional model of static networks~\cite{FPP00,Sa99,YK96}, as well as in population protocols~\cite{AAE08,ABBS16,BBCD15,BBK11,DFIISV19}.

Apart from the theoretical importance of generalizing the usual single-leader scenario, studying networks with multiple leaders also has practical impacts in terms of privacy. Indeed, while the communications of a single leader can be traced, the addition of more leaders provides differential privacy for each of them.

\smallskip
\mypar{Leaderless networks.} In some networks, the presence of reliable leaders may not always be guaranteed. For example, in a mobile sensor network deployed by an aircraft, the leaders may be destroyed as a result of a bad landing; also, the leaders may malfunction during the system's lifetime. This justifies the extensive existing literature on networks with no leaders~\cite{CL22,C11,NOOT09,NOR18,O17,T84,YSSBG13}.
Notably, a large portion of works on leaderless networks have focused on the \emph{Average Consensus problem}, where the goal is to compute the mean of a list of numbers distributed among the agents~\cite{BT89,CL18,CL22,FSO18,OT09,OT11}.

\smallskip
\mypar{Problem classes.} Assume that each agent is assigned an input at the beginning of the computation, i.e., at round~$0$. The \emph{Input Frequency function} is the function that returns the percentage of agents that have each input value. Being able to compute the Input Frequency function allows a system to compute a wide class of functions called \emph{frequency-based} with no loss in performance~\cite{HOT}. Thus, we say that the Input Frequency function is \emph{complete} for this class of functions (refer to \cref{s:2.2}). The most prominent representative of this class of functions is given by the aforementioned Average Consensus problem, since the mean of a multiset of numeric input values is a frequency-based function (the percentage of agents that have each input can be used as weight to compute the mean of all inputs).

The \emph{Input Multiset function} is the function that returns the number of agents that have each input value. This function is complete for a class of functions called \emph{multiset-based}, which strictly includes the frequency-based ones (see \cref{s:2.2} for definitions). A well-studied example of a multiset-based function that is not frequency-based is given by the \emph{Counting problem}, which asks to determine the total number of agents in the system. The Counting problem is practically interesting in real-world scenarios such as large-scale ad-hoc sensor networks, where the individual agents may be unaware of the size of the system~\cite{DW05}.

\subsection{State of the Art}\label{s:1.2}
\mypar{Average Consensus problem.} In the Average Consensus problem, each agent starts with an input value, and the goal is to compute the average of these initial values. The typical approach found in the literature is based on local averaging algorithms, where each agent updates its local value at each round based on a convex combination of the values of its neighbors. This and similar techniques for Average Consensus have been studied for decades by the distributed control and distributed computing communities
~\cite{BT89,CL18,CL22,C11,KM21,NOOT09,O17,OT09,OT11,T84,YSSBG13}.
Existing works can be divided into those that give \emph{convergent} solutions, those that give \emph{stabilizing} solutions, and those that give \emph{terminating} solutions.

In convergent algorithms, the consensus is not reached in finite time, but each agent's local value asymptotically converges to the average. A state-of-the-art $\epsilon$-convergent algorithm based on Metropolis rules with a running time of $O\left(\tau n^{3} \log(1/{\epsilon})\right)$ communication rounds is given in~\cite{NOOT09}. However, this algorithm rests on the assumption that the degree of each agent in the network has a known upper bound; more generally, averaging algorithms based on Metropolis rules cannot be applied to our network model, because they require all agents to know their outdegree at every round prior to sending their messages. The running time of the algorithm in~\cite{NOOT09} can be improved to $O\left(n^2\log (1/\epsilon)\right)$ communication rounds if the network is always connected (i.e., $\tau=1$) and may only change every three rounds (as opposed to every round).

In stabilizing algorithms, the consensus is reached in a finite number of rounds, but no termination criteria are specified, and therefore all agents are assumed to run indefinitely. To the best of our knowledge, all stabilizing Average Consensus algorithms for leaderless networks are either probabilistic or assume the network to be static~\cite{FSO18,HOT,O17,YSSBG13}. However, some of these algorithms stabilize in a linear number of communication rounds~\cite{O17,YSSBG13}.

As for terminating Average Consensus algorithms, the one in~\cite{KM21} terminates in $O\left(n^{5+\epsilon} \log^{3} ({n})/\ell\right)$ communication rounds (for $\epsilon>0$) assuming the presence of a known number $\ell\geq 1$ of leaders and an always connected network.

A major research question left open in previous works is the following.

\begin{tcolorbox}
{\bf Research Question 1.} In anonymous dynamic leaderless networks, are there deterministic Average Consensus algorithms that stabilize (or terminate) in linear time?
\end{tcolorbox}

\mypar{Counting problem.} A long series of papers have focused the Counting problem in connected anonymous networks with a unique leader~\cite{CMM16,DB15,DB16,DBBC14a,DBBC14b,DBCB13,KM18,KM19,KM22,MCS13}. (We remark that the Counting problem cannot be solved in leaderless networks; this fact was implied by Angluin in~\cite{A80} and is now folklore.) These works have achieved better and better running times, with the first polynomial-time algorithm presented in J.\ ACM in 2020, having a running time of $O\left({n^{5}} \log^{2} (n)\right)$ rounds~\cite{KM20}. This was later improved and extended to networks with $\ell\geq 1$ leaders in~\cite{KM22}, solving the Counting problem in $O\left({n^{4+\epsilon}} \log^{3} (n)/\ell\right)$ communication rounds (for $\epsilon>0$).

The only result for $\tau$-union-connected networks is the recent preprint~\cite{KMarxiv}, which gives an algorithm that terminates in $\widetilde{O}\left({n^{3+2\tau(1+\epsilon)}}/{\ell}\right)$ communication rounds, assuming the presence of $\ell\geq 1$ leaders. We remark that this algorithm is exponential in $\tau$, and for $\tau=1$ it has a running time of $\widetilde{O}\left({n^{5+\epsilon}}/{\ell}\right)$.

Almost all of these works share the same basic approach of implementing a mass-distribution mechanism similar to the local averaging used to solve the Average Consensus problem. We should point out that such a mass-distribution approach requires agents to communicate numbers whose representation size grows at least linearly with the number of rounds; thus, all cited algorithms require agents to exchange messages of polynomial size. In spite of the technical sophistication of this line of research, there is still a striking gap in terms of running time between state-of-the-art algorithms for anonymous networks and networks with unique IDs. The same gap exists with respect to static anonymous networks, where the Counting problem is known to be solvable in $O(n)$ rounds~\cite{MCS13}. Given the current state of the art, solving non-trivial problems in large-scale dynamic networks is still impractical.

\begin{tcolorbox}
{\bf Research Question 2.} In anonymous dynamic networks with a fixed number of leaders, are there deterministic Counting algorithms that stabilize (or terminate) in linear time?
\end{tcolorbox}

\subsection{Our Contributions}\label{s:1.3}

\begin{table*}
\renewcommand{\arraystretch}{1.4}
\footnotesize
\begin{tabular}{|c|c|c|c|c|c|}
\hline
{\em Problem class} &{\em Leaders} &{\em Terminating} &{\em Assumptions} & {\em Running time} & {\em Lower bound}\\
\hline \hline
\multirow{2}{*}[-0.7em]{\makecell{Frequency-based \\ (e.g., Average Consensus)}}
& \multirow{3}{*}[0em]{$\ell= 0$} & \xmark &  & \hyperref[t:noleadstab]{$\tau(2n-2)$} & \hyperref[t:lower1]{$\tau(2n - 6)$}\\  \cline{3-6}
& & \multirow{2}{*}[0em]{\cmark} & $\tau$ and $N \geq n$ known & \hyperref[t:noleadterm]{$\tau(n+N-2)$} & \hyperref[t:lower1]{$\tau(2n - 4)$}\\ \cline{4-6}
& & & $d$ known & \hyperref[c:noleadterm]{$\tau(n-1)+d \leq \tau(2n-2)$} & \hyperref[t:lower3]{$\tau(2n - 4)$}\\
\hline \hline
\multirow{2}{*}[0em]{\makecell{Multiset-based\\(e.g., Counting)}}
& \multirow{2}{*}[0em]{$\ell \geq 1$} & \xmark & $\ell$ known & \hyperref[t:multileadstab]{$\tau(2n-2)$} & \hyperref[t:lower2]{$\tau(2n-\ell-5)$}\\  \cline{3-6}
& & \cmark & $\ell$ and $\tau$ known & \hyperref[t:multileadterm]{$\tau((\ell^2+\ell+1)(n-1)+1)$} & \hyperref[t:lower2]{$\tau(2n-\ell-3)$}\\
\hline
\end{tabular}
\caption{Summary of the results in this paper. The variable $n$ indicates the number of agents in the system, $\ell$ is the number of leaders, $N$ is an upper bound on $n$, $\tau$ is the dynamic disconnectivity of the network (or an upper bound thereof), and $d$ is its dynamic diameter. If $\ell=0$, only the frequency-based functions are computable; if $\ell\geq 1$, the multiset-based functions are also computable. All the running times hold for multigraph networks, while all lower bounds hold even for simple networks.\label{tab:summres}}
\end{table*}

\mypar{Summary.} Focusing on anonymous dynamic networks, in this paper we completely elucidate the relationship between leaderless networks and networks with one or more leaders, as well as the impact of the dynamic disconnectivity $\tau$ on the efficiency of distributed algorithms. In particular, we characterize the solvable problems in each of these settings and we provide optimal linear-time algorithms for all solvable problems. 

\smallskip
\mypar{Technique.} Our approach departs radically from the mass-distribution and averaging techniques traditionally adopted by most previous works on anonymous dynamic networks. Instead, all of our results are based on a novel combinatorial structure called \emph{history tree}, which completely represents an anonymous dynamic network and naturally models the idea that agents can be distinguished if and only if they have different ``histories'' (see \cref{s:3}). Thanks to the simplicity of our technique, this paper is entirely self-contained, our proofs are transparent and easy to understand, and our algorithms are elegant and straightforward to implement.

In \cref{s:3.3} we will compare history trees with other structures commonly used in the analysis of static networks. A more in-depth comparison is also found in~\cite{VIG24}. In particular, it should be noted that from the history tree of a static network it is possible to construct the views of Yamashita--Kameda, and vice versa. Hence, a history tree encodes all the information that the agents can extract from the network.

An implementation of most of the concepts discussed in this paper can be found at \url{https://github.com/viglietta/Dynamic-Networks}. The repository includes a dynamic network simulator that can be used to test our algorithms and visualize the history trees of custom networks. A browser version of the same program is also available at \url{https://giovanniviglietta.com/projects/anonymity}.

\smallskip
\mypar{Computability.} We give an exact characterization of the set of functions that can be computed deterministically in anonymous dynamic networks with and without leaders, respectively. Namely, in networks with at least one leader, a function is computable if and only if it is multiset-based (another way of stating this result is that it is sufficient to know the size of \emph{any} subset of distinguished agents in order to compute all multiset-based functions); on the other hand, in networks with no leaders, a function is computable if and only if it is frequency-based (a similar result, limited to \emph{static} leaderless networks and functions with additional constraints, was obtained in~\cite{HOT}).

Observe that computability is independent of the dynamic disconnectivity $\tau$. Notably, this is a consequence of the fact that our algorithms work for multigraphs, as opposed to simple graphs, as discussed later in this section.

\smallskip
\mypar{Algorithms.} Furthermore, we give efficient deterministic algorithms for computing the Input Frequency function in leaderless networks (\cref{s:stableaderless,s:termleaderless}) and the Input Multiset function in networks with leaders (\cref{s:stableader,s:termleader}). Since these functions are complete for the class of frequency-based and multiset-based functions, respectively, we automatically obtain efficient algorithms for computing \emph{all} functions in these classes (\cref{s:2.2}).

For each of the aforementioned functions, we give two algorithms: a \emph{terminating} version, where all agents are required to commit to their output and never change it, and a slightly more efficient \emph{stabilizing} version, where agents are allowed to modify their outputs, provided that they eventually stabilize on the correct output. The running times of our algorithms are summarized in \cref{tab:summres}.

The stabilizing algorithms for both functions give the correct output within $2\tau n$ communication rounds regardless of the number of leaders, and do not require any knowledge of the dynamic disconnectivity $\tau$ or the number of agents $n$. Our terminating algorithm for leaderless networks runs in $\tau(n+N)$ communication rounds with knowledge of $\tau$ and an upper bound $N\geq n$; the terminating algorithm for $\ell\geq 1$ leaders runs in $\tau(\ell^2+\ell+1) n$ communication rounds with no knowledge of $n$. The latter running time is reasonable (i.e., linear) in most applications, as $\ell$ is typically a constant or negligible compared to $n$. Note that the case where all agents are leaders is not equivalent to the case with no leaders: in the former case, agents do not have the information that $\ell=n$, and have to ``discover'' that there are no non-leader agents in the network.

We emphasize that the running times of our algorithms are specified as precise values, rather than being expressed in big-O notation. In fact, we made an effort to optimize the multiplicative constants, as well as the asymptotic complexity of our algorithms. To our knowledge, this feature is unique within the entire body of literature on anonymous dynamic networks.

We remark that the local computation time and the amount of internal memory required by our terminating algorithms is only polynomial in the size of the network. Also, like in previous works on anonymous dynamic networks, agents need to send messages of polynomial size.

\smallskip
\mypar{Negative results.} Some of our algorithms assume agents to have a-priori knowledge of some parameters of the network. In \cref{s:negative} we show that none of these assumptions can be removed, in the sense that removing any one of them would make the corresponding problem unsolvable by any algorithm. Note that we are not implying that this specific a-priori knowledge is strictly necessary in an absolute sense: it remains possible that alternative information, different from what we assume, could also suffice.

We also provide lower bounds that asymptotically match our algorithms' running times, assuming that the number of leaders $\ell$ is a constant (which is a realistic assumption in most applications). \cref{tab:summres} summarizes our lower bounds, as well. We point out that our lower bounds of roughly $2\tau n$ rounds are the first non-trivial lower bounds for simple undirected anonymous dynamic networks (i.e., better than $n-1$). 

\smallskip
\mypar{Multigraphs.} All of our results hold more generally if networks are modeled as multigraphs, as opposed to the simple graphs traditionally encountered in nearly all of the existing literature. This is relevant in many applications: in radio communication, for instance, multiple links between agents naturally appear due to the multi-path propagation of radio waves.

Furthermore, this turns out to be a remarkably powerful feature in light of \cref{p:time}, which establishes a relationship between multigraphs and $\tau$-union-connected networks. This finding single-handedly allows us to generalize our algorithms to disconnected networks at the cost of a mere factor of $\tau$ in their running times, which is worst-case optimal.

It is worth remarking that, while our algorithms apply to multigraphs, all the impossibility results concerning our algorithms' assumptions and all the principal lower bounds already hold for simple graphs (see \cref{s:negative}).

\smallskip
\mypar{Significance.} Our general technique based on history trees enables us to approach all problems related to anonymous networks in a uniform and systematic manner. This technique also allows us to design straightforward algorithms that achieve previously unattainable running times. Indeed, our results make conclusive advancements on long-standing problems within anonymous dynamic networks, particularly with regards to the Average Consensus and Counting problems, in several aspects:
\begin{itemize}
\item \emph{Running time.} Our algorithms are asymptotically worst-case optimal, i.e., linear (for networks with a fixed number of leaders $\ell$ and a fixed dynamic disconnectivity $\tau$).
\item \emph{Assumptions on the network.} Unlike most previous solutions, our algorithms work in any dynamic network that has a finite dynamic disconnectivity (i.e., a finite dynamic diameter).
\item \emph{Knowledge of the agents.} Our algorithms assume that the agents have a-priori knowledge about certain properties of the network only when the absence of such knowledge would render the Average Consensus or Counting problems unsolvable.
\item \emph{Quality of the solution.} Unlike several previous works, our algorithms are deterministic rather than probabilistic, and stabilize or terminate rather than converge.
\end{itemize}

Altogether, we settle open problems from several papers, including Nedi\'c et al., IEEE Trans.\ Automat.\ Contr.~2009~\cite{NOOT09}; Olshevsky, SIAM J.\ Control Optim.~2017~\cite{O17}; Kowalski--Mosteiro, J.\ ACM~2020~\cite{KM20} and SPAA~2021~\cite{KM21}. In particular, we settle the two research questions stated in \cref{s:1.2}:

\begin{tcolorbox}
{\bf Contribution 1.} In anonymous dynamic leaderless networks, any problem that can be solved deterministically (including Average Consensus) has an algorithm that stabilizes in linear time. If an upper bound on the number of agents is known, the algorithm also terminates in linear time.
\end{tcolorbox}

\begin{tcolorbox}
{\bf Contribution 2.} In anonymous dynamic networks with a fixed number of leaders, any problem that can be solved deterministically (including Counting) has an algorithm that terminates (thus stabilizes) in linear time.
\end{tcolorbox}

Our findings indicate that computing within anonymous dynamic networks entails an overhead when compared to networks equipped with unique IDs. However, this overhead is only linear, which is substantially lower than previously conjectured~\cite{KM20,KM22}. In fact, our results demonstrate that general computations in anonymous and dynamic large-scale networks are both feasible and efficient in practice.

\smallskip
\mypar{Unique-leader case.} Our results concerning connected networks with a unique leader (i.e., $\tau=\ell=1$) are especially noteworthy. We have an algorithm for the Counting problem that stabilizes in less than $2n$ rounds, which is worst-case optimal up to a small additive constant. We also have a terminating algorithm with a running time of less than $3n$ rounds, which is remarkably close to the lower bound of $2n$ rounds. We recall that, prior to our work, the state-of-the-art algorithm for this problem had a running time of $O\left({n^{4+ \epsilon}} \log^{3} (n)\right)$ rounds (with $\epsilon >0$)~\cite{KM22}.

\smallskip
\mypar{Previous versions.} This work is an extension of two preliminary conference papers that appeared at FOCS~2022~\cite{DV22} and DISC~2023~\cite{DVdisc}, respectively. In preparing the current version, we have reworked several sections, included all missing proofs, and provided a more accurate analysis of the performance of our algorithms. In addition, the upper bound in terms of the dynamic diameter $d$ for leaderless networks is a new contribution (see \cref{tab:summres}). We also added comparisons between history trees and previous structures and concepts related to anonymous and dynamic networks (see \cref{s:3.3}).

\section{Definitions and Preliminaries}\label{s:2}
In \cref{s:2.1} we define the model of computation of anonymous dynamic networks. In \cref{s:2.2} we define two classes of functions and their respective ``complete'' functions, which will play an important role in the rest of the paper. Finally, in \cref{ss:disc} we discuss $\tau$-union-connected networks and the impact of the dynamic disconnectivity $\tau$ on the running time of algorithms.

\subsection{Model of Computation}\label{s:2.1}
\mypar{Agents and networks.} A \emph{dynamic network} on a finite non-empty set $V=\{p_1,p_2,\dots, p_n\}$ is an infinite sequence $\mathcal G=(G_t)_{t\geq 1}$, where $G_t=(V,E_t)$ is an undirected multigraph, i.e., $E_t$ is a multiset of unordered pairs of elements of $V$. In this context, the set $V$ is called \emph{system}, and its $n\geq 1$ elements are the \emph{agents}. The elements of the multiset $E_t$ are called \emph{links}; note that we allow any (finite) number of ``parallel links'' between two agents, as well as ``self-loops''. Note that, in the dynamic networks literature, $G_t$ is typically assumed to be a \emph{simple} graph, for at most one link between the same two agents is allowed. However, our results hold more generally for multigraphs.

\smallskip
\mypar{Input and internal states.} Each agent $p_i$ starts with an \emph{input} $\lambda(p_i)$, which is assigned to it at \emph{round~$0$}. It also has an internal state, which is initially determined by $\lambda(p_i)$. At each \emph{round~$t\geq 1$}, every agent composes a message (depending on its internal state) and broadcasts it to its neighbors in $G_t$ through all its incident links.\footnote{In order to model wireless radio communication, it is natural to assume that each agent in a dynamic network broadcasts its messages to all its neighbors (a message is received by anyone within communication range). The network's anonymity prevents agents from specifying single destinations.} By the end of round~$t$, each agent reads all messages coming from its neighbors and updates its internal state according to a local algorithm $\mathcal A$. That is, $\mathcal A$ takes as input the current internal state of the agent and the multiset of incoming messages, and returns the new internal state of the agent. Note that $\mathcal A$ is deterministic and is the same for all agents.

The input of each agent also includes a \emph{leader flag}. The agents whose leader flag is set are called \emph{leaders} (or \emph{supervisors}). We will use the symbol $\mathcal N_{n,\ell}$ to denote the set of all dynamic networks of $n$ agents with all possible input assignments, with the condition that exactly $\ell$ agents are leaders. We also define $\mathcal N_{n}=\bigcup_{\ell=0}^n\mathcal N_{n,\ell}$ as the set of networks of $n$ agents with an ``unknown'' number of leaders.

\smallskip
\mypar{Stabilization and termination.} Each agent returns an \emph{output} at the end of each round; the output is determined by its current internal state. A system is said to \emph{stabilize} if the outputs of all its agents remain constant from a certain round onward; note that an agent's internal state may still change even when its output is constant.

An agent may also decide to explicitly \emph{terminate} and no longer update its internal state. An agent does so by setting a special \emph{termination flag} in its state, which implies that no further state transitions are possible. Since the output of an agent only depends on its internal state, a terminated agent can no longer update its output, either. When all agents have terminated, the system is said to \emph{terminate}, as well.

\smallskip
\mypar{Computation.} We say that an algorithm $\mathcal A$ \emph{computes} a function $F$ if, whenever the agents in the system are assigned inputs $\lambda(p_1)$, $\lambda(p_2)$, \dots, $\lambda(p_n)$ and all agents execute the local algorithm $\mathcal A$ at every round, the system eventually stabilizes with each agent $p_i$ giving the desired output $F(p_i,\lambda)$. A stronger notion of computation requires the system to not only stabilize but also to explicitly terminate with the correct output.

Formally, a function computed by a system of $n$ agents maps $n$-tuples of input values to $n$-tuples of output values. Writing such a function as $F(p_i,\lambda)$ emphasizes that the output of an agent may depend on all agents' inputs, as well as on the agent itself. That is, different agents may give different outputs.\footnote{We consider only functions whose outputs do not depend on the communication topology.}

The (worst-case) \emph{running time} of an algorithm $\mathcal A$, as a function of $n$, is the maximum number of rounds it takes for the system to stabilize (and optionally terminate), taken across all possible dynamic networks of $n$ agents and all possible input assignments, assuming that all agents run $\mathcal A$.

\smallskip
\mypar{Relationship between dynamic networks and Population Protocols.} The dynamic network model considered in this paper bears some minor similarities with Population Protocols~\cite{aspnes}. Indeed, both models describe systems where a set of entities exchange messages in a non-deterministic fashion and modify their internal states according to a deterministic local algorithm. However, a key difference is that in Population Protocols communication occurs only between two entities at a time; moreover, the two communicating entities have different roles: transmitter and receiver. This feature of Population Protocols automatically breaks symmetry between communicating entities, greatly simplifying problems such as Leader Election, Average Consensus, etc. It also eliminates all potential ambiguities arising from multiple entities in the same state sending equal messages to the same entity.

On the other hand, Population Protocols typically require the set of possible states to be independent of the number of entities $n$, while in our network model we allow agents to have an arbitrary amount of internal memory (although the terminating algorithms presented in this paper have polynomial memory requirements, as quantified in \cref{s:3.3}).

Another feature of Population Protocols is \emph{global fairness}, which roughly requires that the sequence of interactions between entities must allow any potentially reachable global state to eventually occur. This guarantees that the system can make progress despite an ``adversarial'' scheduler that might otherwise indefinitely block certain transitions. Our dynamic network model does not adopt the same notion of global fairness; however, a typical requirement is that the communication graph $G_t$ should be connected for every $t\geq 1$, or more generally the dynamic disconnectivity $\tau$ should be finite (see \cref{ss:disc}).

\subsection{Classes of Functions}\label{s:2.2}

\mypar{Multiset-based functions.} Let $\mu_\lambda=\{(z_1,m_1), (z_2,m_2), \dots, (z_k,m_k)\}$ be the multiset of all agents' inputs. That is, for all $1\leq i\leq k$, there are exactly $m_i$ agents $p_{j_1}, p_{j_2}, \dots, p_{j_{m_i}}$ whose input is $z_i=\lambda(p_{j_1})= \lambda(p_{j_2})= \dots= \lambda(p_{j_{m_i}})$; note that $n=\sum_{i=1}^k m_i$. A function $F$ is \emph{multiset-based} if it has the form $F(p_i,\lambda)=\psi(\lambda(p_i), \mu_\lambda)$. That is, the output of each agent depends only on its own input and the multiset of all agents' inputs.\footnote{Throughout the paper, all functions under consideration are understood to be effectively computable.} Multiset-based functions are also known as \emph{multi-aggregation} or \emph{multi-aggregate} functions~\cite{DV22,DVdisc}.

The special multiset-based functions $F_C(p_i,\lambda)=n$ and $F_{IM}(p_i,\lambda)=\mu_\lambda$ are called the \emph{Counting} function and the \emph{Input Multiset} function, respectively. It is easy to see that, if a system can compute the Input Multiset function $F_{IM}$, then it can compute any multiset-based function in the same number of rounds: thus, $F_{IM}$ is \emph{complete} for the class of multiset-based functions.

\begin{proposition}\label{xth:compl}
If the Input Multiset function $F_{IM}$ can be computed (with termination), then all multiset-based functions can be computed (with termination) in the same number of rounds, as well.
\end{proposition}
\begin{proof}
Once an agent $p_i$ with input $\lambda(p_i)$ has determined the multiset $\mu_\lambda$ of all agents' inputs, it can immediately compute any desired function $\psi(\lambda(p_i), \mu_\lambda)$ within the same local computation phase.
\end{proof}

We remark that \cref{xth:compl} does not require a unique leader or a connected network.

\smallskip
\mypar{Frequency-based functions.} For any $\alpha\in \mathbb R^+$, we define $\alpha\cdot \mu_\lambda$ as $\{(z_1,\alpha\cdot m_1), (z_2,\alpha\cdot m_2), \dots, (z_k,\alpha\cdot m_k)\}$. We say that a multiset-based function $F(p_i,\lambda)=\psi(\lambda(p_i), \mu_\lambda)$ is \emph{frequency-based} if $\psi(z,\mu_\lambda)=\psi(z,\alpha \cdot \mu_\lambda)$ for every positive integer $\alpha$ and every input $z$ (see~\cite{HOT}). That is, $F$ depends only on the ``frequency'' of each input in the system, rather than on their actual multiplicities. Notable examples of frequency-based functions include statistical functions of input values, such as mean, variance, maximum, median, mode, etc. Observe that the sum of all input values is a multiset-based function but not a frequency-based function.

The problem of computing the mean of all input values is called \emph{Average Consensus}
~\cite{BT89,CL18,CL22,C11,FSO18,KM21,NOOT09,NOR18,O17,OT09,OT11,T84,YSSBG13}.
The frequency-based function $F_{IF}(p_i,\lambda)=\frac 1n \cdot \mu_\lambda$ is called \emph{Input Frequency} function, and is complete for the class of frequency-based functions.

\begin{proposition}\label{th:concentration}
If the Input Frequency function $F_{IF}$ can be computed (with termination), then all frequency-based functions can be computed (with termination) in the same number of rounds, as well.
\end{proposition}
\begin{proof}
Suppose that an agent $p_i$ has determined $\frac 1n \cdot \mu_\lambda=\{(z_1,m_1/n), (z_2,m_2/n), \dots, (z_k,m_k/n)\}$. Then it can immediately find the smallest integer $d>0$ such that $d\cdot(m_i/n)$ is an integer for all $1\leq i\leq k$. Note that $\frac dn \cdot \mu_\lambda$ is a multiset. Hence, in the same round, $p_i$ can compute any desired function $\psi(\lambda(p_i), \frac dn \cdot \mu_\lambda)$, and thus any frequency-based function, by definition.
\end{proof}

\subsection{Disconnected Networks}\label{ss:disc}

Although the network $G_t$ at each individual round may be disconnected, in this paper we assume dynamic networks to be \emph{$\tau$-union-connected}. That is, there is a (smallest) parameter $\tau\geq 1$, called \emph{dynamic disconnectivity}, such that the sum of any $\tau$ consecutive $G_t$'s is a connected multigraph (by definition, the sum of (multi)graphs is obtained by adding together their adjacency matrices). Thus, for all $i\geq 1$, the multigraph $\left(V, \bigcup_{t=i}^{i+\tau-1}E_t\right)$ is connected (by definition, the union of multisets is obtained by adding together the multiplicities of equal elements). We will use the symbol $\mathcal N^\tau_{n,\ell}$ (respectively, $\mathcal N^\tau_{n}$) to indicate the subset of $\mathcal N_{n,\ell}$ (respectively, $\mathcal N_{n}$) consisting of the networks whose dynamic disconnectivity is exactly $\tau$.

Our $\tau$-union-connected networks should not be confused with the \emph{$\tau$-interval-connected} networks from~\cite{KLO10}. In those networks, the \emph{intersection} (as opposed to the union) of any $\tau$ consecutive $E_t$'s induces a connected (multi)graph. In particular, a $\tau$-interval-connected network is connected at every round, while a $\tau$-union-connected network may not be, unless $\tau=1$. Incidentally, a network is 1-interval-connected if and only if it is 1-union-connected.

\smallskip
\mypar{Knowledge of $\tau$.} We will argue that terminating computations are impossible without some a-priori knowledge about $\tau$. We say that a function $F(p_i,\lambda)$ is \emph{trivial} if and only if it is of the form $F(p_i,\lambda)=\psi(\lambda(p_i))$. That is, the output of any agent $p_i$ depends only on its own input $\lambda(p_i)$ and not on the inputs of other agents; therefore, $F$ can be computed ``locally'' and does not require communications between agents.

\begin{proposition}\label{o:timeimposs}
There is no algorithm that computes a non-trivial function with termination on all simple networks within $\bigcup_{\tau\geq 1}\bigcup_{n\geq 1} \mathcal N^\tau_{n}$.
\end{proposition}
\begin{proof}
Assume for a contradiction that a non-trivial function $F(p_i,\lambda)$ is computed with termination by an algorithm $\mathcal A$ in all simple networks in $\bigcup_{\tau\geq 1}\bigcup_{n\geq 1} \mathcal N^\tau_{n}$. Since $F$ is non-trivial, there is an input value $z$ and two distinct output values $y$ and $y'$ with the following properties.
\begin{itemize}
\item[(i)] If $p_1$ is the only agent in a network (i.e., $n=1$), and $p_1$ is assigned input $z$ and executes $\mathcal A$, it terminates in $t$ rounds with output $y$. \item[(ii)] There exists a network size $\tilde{n}>1$ and an input assignment $\lambda$ with $\lambda(p_1)=z$ such that, whenever the agents in a network of size $\tilde{n}$ are assigned input $\lambda$ and execute $\mathcal A$, the agent $p_1$ eventually terminates with output $y'\neq y$.
\end{itemize}
Let us now consider a simple dynamic network of $\tilde{n}$ agents, where $p_1$ is kept disconnected from all other agents for the first $t$ rounds (hence $\tau>t$). Assign input $\lambda$ to the agents, and let them execute algorithm $\mathcal A$. Due to property~(i), since $p_1$ is isolated for $t$ rounds and has no knowledge of $\tau$, it terminates in $t$ rounds with output $y$. This contradicts property~(ii), which states that $p_1$ should terminate with output $y'\neq y$.
\end{proof}

\cref{o:timeimposs} can be strengthened and extended in various ways. For instance, the same result can be shown to hold even if all agents know $n$ in advance. For a fixed $n\geq 1$, a function $F(p_i,\lambda)$ is called \emph{trivial for $n$ agents} if and only if it is of the form $F(p_i,\lambda)=\psi(\lambda(p_i),n)$ for every $n$-agent input assignment $\lambda$ and every agent $p_i$.

\begin{proposition}\label{o:timeimposs2}
For any fixed $n>1$, there is no algorithm that computes a function that is not trivial for $n$ agents with termination on all networks within $\bigcup_{\tau\geq 1} \mathcal N^\tau_{n}$.
\end{proposition}
\begin{proof}
We follow the same proof structure and notation as \cref{o:timeimposs}, with $\tilde{n}=n$, except that property~(i) is now different. Instead of having a network with a single agent, we have a network of $n$ agents arranged in a cycle (interpreted as two parallel links if $n=2$), each of which has input $z$. By assumption, all agents terminate after $t$ rounds with output $y$.

Now consider a second network of $n$ agents, where $p_1$ is connected only to itself via a double self-loop for the first $t$ rounds. If we assign input $\lambda$ to the agents, with $\lambda(p_1)=z$, then from the viewpoint of $p_1$ this network is indistinguishable from the previous one. Hence $p_1$ still terminates with output $y$, while the correct output is $y'\neq y$.
\end{proof}

A limitation of \cref{o:timeimposs2} is that, unlike \cref{o:timeimposs}, it requires the presence of parallel links or self-loops in the network. However, we can remove this requirement and prove a similar impossibility result for simple networks, at the cost of slightly narrowing the class of ``non-trivial functions'' that cannot be computed.

Specifically, if we assume that $\lambda$ assigns the same input $z$ to at least three agents, we can arrange these agents in a proper cycle. From the perspective of the agents within the cycle, this configuration is indistinguishable from any cycle of arbitrary length in which all agents have input $z$. Therefore, the same impossibility result still applies, but is now limited to functions that are not constant over all input assignments where at least three agents share the same input.

\smallskip
\mypar{Sufficiency of $\tau=1$.} The following result will be used repeatedly to reduce computations on $\tau$-union-connected networks to computations on 1-union-connected networks. Essentially, it states that any algorithm for 1-union-connected networks can be extended to $\tau$-union-connected networks (where $\tau$ is known to all agents) at the cost of a factor $\tau$ in the running time, which is optimal.

\begin{proposition}\label{p:time}
If there is an algorithm $\mathcal A_1$ that computes a function $F$ in $\mathcal N^1_{n}$ within $f(n)$ rounds (with termination) for all $n\geq 1$, then for any fixed $\tau\geq 1$ there is an algorithm $\mathcal A_\tau$ that computes $F$ in $\mathcal N^\tau_{n}$ within $\tau\cdot f(n)$ rounds (with termination) for all $n\geq 1$. Moreover, if $\mathcal A_1$ is optimal, then $\mathcal A_\tau$ is optimal.
\end{proposition}
\begin{proof}
Subdivide time into \emph{blocks} of $\tau$ consecutive rounds, and consider the following algorithm $\mathcal A_\tau$. Each agent collects and stores all messages it receives within the same block, and updates its state all at once at the end of the block. This reduces any $\tau$-union-connected network $\mathcal G=((V,E_t))_{t\geq 1}$ to a 1-union-connected network $\mathcal G'=((V,E'_t))_{t\geq 1}$, where $E'_t =\bigcup_{i=(t-1)\tau+1}^{t\tau}E_i$. Thus, if $F$ can be computed within $f(n)$ rounds in all 1-union-connected networks (which include $\mathcal G'$), then $F$ can be computed within $\tau f(n)$ rounds in the original network $\mathcal G$.

Conversely, consider a 1-union-connected network $\mathcal G$, and construct a $\tau$-union-connected network $\mathcal G'$ by inserting $\tau-1$ empty rounds (i.e., rounds with no links at all) before the first round of $\mathcal G$, as well as between every two consecutive rounds of $\mathcal G$. Since no information circulates during the empty rounds, if $F$ cannot be computed in fewer than $f(n)$ rounds in $\mathcal G$, then it cannot be computed in fewer than $\tau f(n)$ rounds in $\mathcal G'$ (recall that running times are measured in the worst case across all possible networks). Therefore, if $\mathcal A_\tau$ is not optimal, then $\mathcal A_1$ is not optimal either.
\end{proof}

We point out that the argument in the proof of \cref{p:time} is correct because algorithms are required to work for all \emph{multigraphs}, as opposed to simple graphs only. Indeed, since an agent $p_i$ may receive multiple messages from the same agent $p_j$ within a same block, the resulting network $\mathcal G'$ may have multiple links between $p_i$ and $p_j$ in a same round, even if $\mathcal G$ does not.

\smallskip
\mypar{Relationship between dynamic disconnectivity and the dynamic diameter.} A concept closely related to the dynamic disconnectivity $\tau$ of a network is its \emph{dynamic diameter} (or \emph{temporal diameter}) $d$, which is defined as the maximum number of rounds it may take for information to travel from any agent to any other agent at any point in time~\cite{CFQS12,KO}. It is a simple observation that $\tau\leq d\leq \tau(n-1)$.

We chose to predominantly use $\tau$, as opposed to $d$, to measure the running times of our algorithms for several reasons. Firstly, $\tau$ is well defined (i.e., finite) if and only if $d$ is; however, $\tau$ has a simpler definition, and is arguably easier to directly estimate or enforce in a real network. Secondly, \cref{p:time}, as well as all of our theorems in \cref{s:intermediate,s:termleader}, remain valid if we replace $\tau$ with $d$; nonetheless, stating the running times of our algorithms in terms of $\tau$ is preferable, because $\tau\leq d$.

We will use the symbol $\mathcal N^{[d]}_{n,\ell}$ to denote the subset of $\mathcal N_{n,\ell}$ consisting of the networks (with $n$ agents of which $\ell$ are leaders) whose dynamic diameter is exactly $d$. The square brackets around $d$ are used to avoid confusion with $\mathcal N^{\tau}_{n,\ell}$.

\section{History Trees}\label{s:3}
In this section, we introduce \emph{history trees} as a natural tool of investigation for anonymous dynamic networks. In \cref{s:3.1} we give definitions and we discuss basic structural properties of history trees. In \cref{s:3.2} we give a local algorithm to incrementally construct history trees and we prove a \emph{fundamental theorem} about this construction. Finally, in \cref{s:3.3} we discuss related concepts found in previous literature.

\subsection{Definition of History Tree}\label{s:3.1}
We will describe the structure of a history tree and its basic properties. For reference, an example showing part of a history tree is found in \cref{fig:ht}.

\begin{figure}
\centering
\includegraphics[scale=0.65]{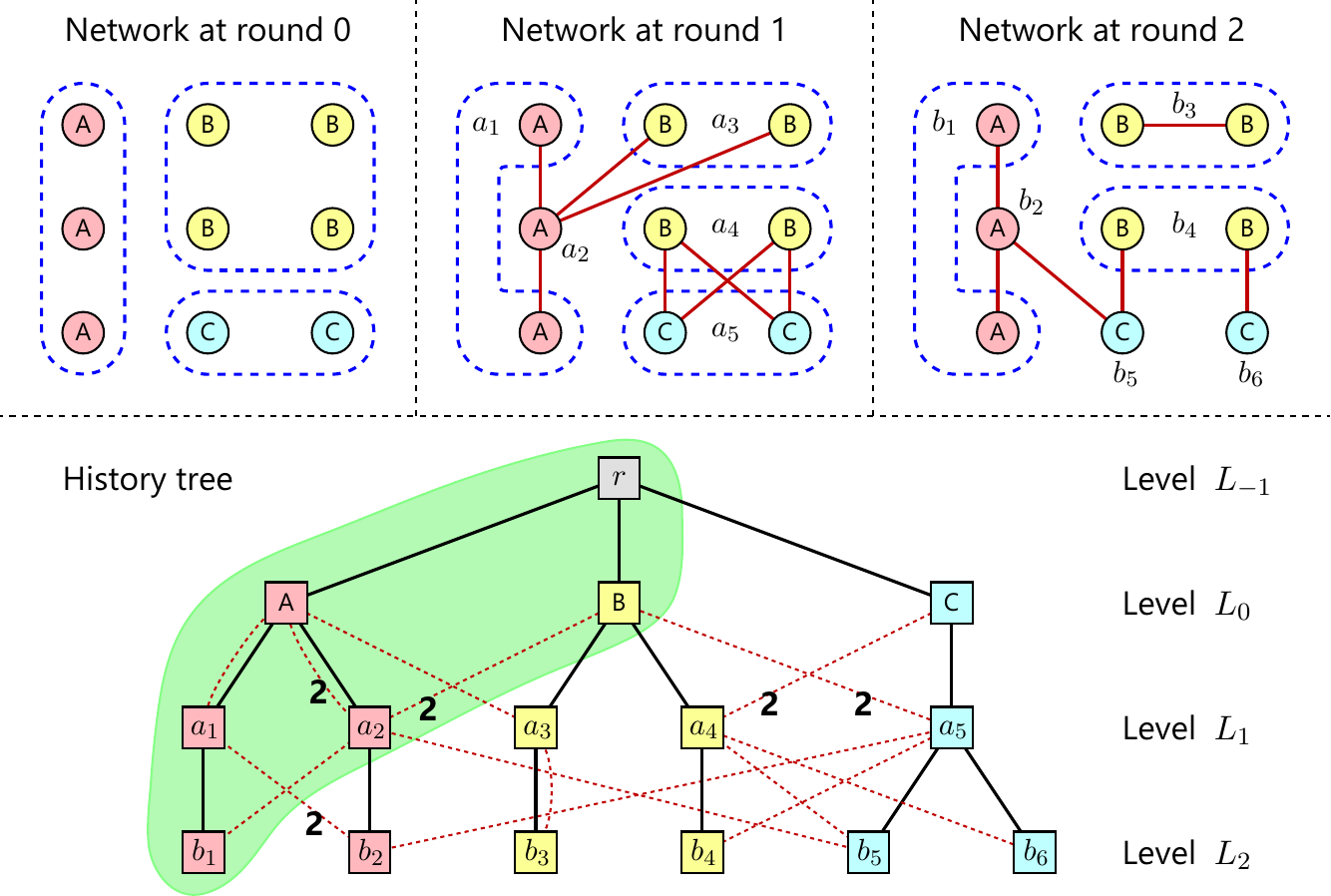}
\caption{The first rounds of a dynamic network with $n=9$ agents and the corresponding levels of the history tree. Level $L_t$ consists of all nodes at distance $t+1$ from the root $r$, which represent indistinguishable agents at the end of the $t$th communication round. The multiplicities of the red multi-edges of the history tree are explicitly indicated only when greater than $1$. The letters A, B, C denote agents' inputs; all other labels have been added for the reader's convenience, and indicate classes of indistinguishable agents (non-trivial classes are also indicated by dashed blue lines in the network). Note that the two agents in $b_4$ are still indistinguishable at the end of round~$2$, although they are linked to the distinguishable agents $b_5$ and $b_6$. This is because such agents were in the same class $a_5$ at round~$1$. The subgraph induced by the vertices in the green blob is the vista of the two agents in $b_1$. None of the levels of this vista is complete, except $L_{-1}$.}
\label{fig:ht}
\end{figure}

\smallskip
\mypar{Indistinguishable agents.} In an anonymous network, agents can only be distinguished by their inputs or by the multisets of messages they have received thus far. This leads to an inductive definition of \emph{indistinguishability}: two agents are indistinguishable at the end of round~$0$ if and only if they have the same input. At the end of round~$t\geq 1$, two agents $p$ and $q$ are indistinguishable if and only if they were indistinguishable at the end of round~$t-1$ and, for every equivalence class $A$ of agents that were indistinguishable at the end of round~$t-1$, both $p$ and $q$ receive an equal number of (identical) messages from agents in $A$ at round~$t$.

\smallskip
\mypar{Levels of a history tree.} A \emph{history tree} is a structure associated with a dynamic network. It is an infinite graph whose nodes are subdivided into \emph{levels} $L_{-1}$, $L_0$, $L_1$, $L_2$, \dots, where each node in layer $L_t$, with $t\geq 0$, represents an equivalence class of agents that are indistinguishable at the end of round~$t$. The level $L_{-1}$ contains a unique node $r$, representing all agents in the system.

In addition, each node of $L_0$ has a \emph{label} denoting the input of the agents it represents (by definition of indistinguishability at round~$0$, each node of $L_0$ represents all agents with a certain input). Nodes not in $L_0$ do not have any labels.

\smallskip
\mypar{Black and red edges.} A history tree has two types of undirected edges; each edge connects nodes in consecutive levels. The \emph{black edges} induce an infinite tree rooted at $r$ and spanning all nodes. A black edge $\{v, v'\}$, with $v\in L_{t}$ and $v'\in L_{t+1}$, indicates that the \emph{child node} $v'$ represents a subset of the agents represented by the \emph{parent node} $v$.

The \emph{red multiedges} represent messages. Each red edge $\{v, v'\}$ with multiplicity $m$, with $v\in L_{t}$ and $v'\in L_{t+1}$, indicates that, at round~$t+1$, each agent represented by $v'$ receives a total of $m$ (identical) messages from agents represented by $v$.

\smallskip
\mypar{Anonymity of a node.} The \emph{anonymity} $a(v)$ of a node $v$ of a history tree is defined as the number of agents represented by $v$. Since the nodes in a level represent a partition of all the agents, the sum of their anonymities must be equal to the total number of agents in the system, $n$. Moreover, by the definition of black edges, the anonymity of a node is equal to the sum of the anonymities of its children.

Observe that the problem of computing the Input Multiset function can be rephrased as the problem of determining the anonymities of all the nodes in $L_0$. Similarly, computing the Input Frequency function is the problem of determining the ratio $a(v)/n$ for each node $v$ in $L_0$.

\smallskip
\mypar{Vista of an agent.} A \emph{monotonic path} in the history tree is a sequence of nodes in distinct levels, such that any two consecutive nodes are connected by a black or a red edge. For any agent $p$, let $h(p,t)$ be the unique node representing $p$ in the level $L_t$ of the history tree. We define the \emph{vista}\footnote{The preliminary versions of this paper~\cite{DV22,DVdisc} used the term ``view'' instead of ``vista''. In the present version, we have adopted the new terminology to avoid confusion with the loosely related concept of \emph{view} introduced by Yamashita and Kameda in the context of static networks~\cite{YK88}. The relationship between vistas and views is discussed in \cref{s:3.3} and further explored in~\cite{VIG24}.} of $p$ at round~$t$ as the finite subgraph of the history tree induced by all the nodes spanned by monotonic paths with endpoints $h(p,t)$ and the root $r$.

A node of the history tree is \emph{missing} from a vista if it is not among the nodes in the vista. A level of a vista is \emph{complete} if no node is missing from that level. Observe that, if a level of a vista is complete, then all previous levels of the vista are also complete.

\subsection{Construction of Vistas}\label{s:3.2}
Intuitively, the vista of an agent represents the portion of the history tree that the agent ``knows'' at that time. In fact, there is an effective procedure that allows all agents in a network to locally construct a representation of their vista at all rounds. We will present this procedure and discuss some related results, including the \emph{fundamental theorem of history trees}, \cref{xth:view}.

\smallskip
\mypar{Merge operation.} The basic operation that is used to construct and update vistas is called \emph{merge}, and is illustrated in \cref{fig:merge}. Merging two vistas $\mathcal V$ and $\mathcal V'$ is a very natural operation whose result is the minimum graph that contains both $\mathcal V$ and $\mathcal V'$ as induced subgraphs.

Procedurally, the nodes of $\mathcal V'$ are matched and merged with the nodes of $\mathcal V$ level by level starting at the root $r$. That is, the procedure attempts to match the next vertex in level $L'_{t}$ of $\mathcal V'$ with an equivalent vertex already in level $L_t$ of $\mathcal V$. If no such vertex exists, it is created and connected with the appropriate vertices, which are already in the previous level. The procedure continues until all vertices of $\mathcal V'$ have been matched and merged.

\begin{figure}
\centering
\includegraphics[width=\linewidth]{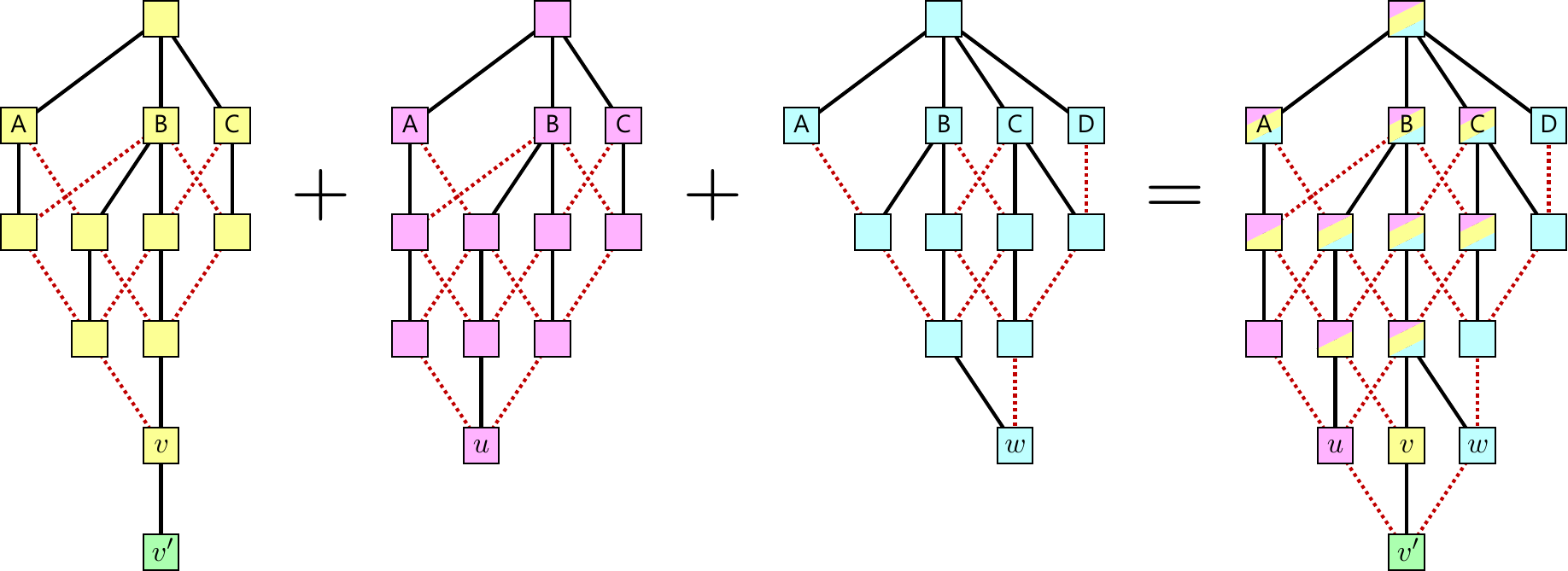}
\caption{Updating the vista of an agent represented by node $v$ after it receives messages from an agent represented by $u$ and an agent represented by $w$. The node $v'$ is created as a child of $v$, the vistas corresponding to $u$ and $w$ are merged with the old vista, and red edges are created connecting $v'$ with $u$ and $w$.}
\label{fig:merge}
\end{figure}

\smallskip
\mypar{Local construction.} We will now describe a local algorithm that allows agents to construct and update their vistas at every round. That is, assuming that each agent has its vista as its internal state at the beginning of round~$t$ and sends its vista to all its neighbors at round~$t$, there is a local algorithm $\mathcal A^\ast$ that allows the agent to construct its new vista at the end of round~$t$.

Constructing the vista at round~$0$ is simple, because the agent just has to read its own input and create a root $r$ with a single child bearing a label equal to its input.

After that, the vista can be updated at every round as shown in \cref{fig:merge}. Namely, the node $v$ that currently represents the agent gets a new child $v'$, representing the same agent in the next round. Then, if $m\geq 1$ copies of a vista $\mathcal V$ are received as messages, then $\mathcal V$ is merged into the current vista, and a red edge with multiplicity $m$ is added to connect $v'$ with the deepest node of $\mathcal V$. The correctness of this operation follows straightforwardly from the definition of vista.

\smallskip
\mypar{Size of the vista.} We remark that the size of the vista of an agent at round~$t$ is polynomial in $t$ and $n$, and so is the amount of local computation needed to update the agent's vista at round~$t$ based on the messages it has received. More specifically, the vista can be represented in $O(tn^2\log M)$ bits, where $M$ is the maximum number of messages that any agent may receive at any round (if the network is simple, then $M<n$).

Indeed, each of the $O(t)$ levels in the vista contains up to $n$ nodes. Thus, there are $O(n^2)$ possible red edges between any two consecutive levels, each of which has a multiplicity that can be represented in $O(\log M)$ bits.

\smallskip
\mypar{Fundamental theorem.} 
We will now give a concrete meaning to the idea that the vista of an agent at round~$t$ contains all the information that the agent can use at that round. This intuition is made precise by the following \emph{fundamental theorem of history trees}.

\begin{theorem}\label{xth:view}
If all agents in a system execute the same local algorithm $\mathcal A$ at every round, then the internal state of an agent $p$ at the end of round~$t$ is determined by a function $\mathcal F_\mathcal A$ of the vista of $p$ at round~$t$. The mapping $\mathcal F_\mathcal A$ depends entirely on the algorithm $\mathcal A$ and is independent of $p$.
\end{theorem}
\begin{proof}
Denote by $h(p,t)$ the node of the history tree that represents $p$ at round $t$, by $\mathcal V(p,t)$ the vista of $p$ at round $t$, and by $\sigma(p,t)$ the internal state of $p$ at round $t$. The construction of $\mathcal F_\mathcal A$ is done by induction on $t$ in such a way that $\mathcal F_\mathcal A(\mathcal V(p,t))= \sigma(p,t)$ for any agent $p$ and any round~$t$.

If $t=0$, then $\mathcal V(p,t)$ consists of a single node $h(p,t)$ in $L_0$ attached to the root $r$. Since each node of $L_0$ is labeled as the input of the agents it represents, $\mathcal F_\mathcal A$ can extract the input of $p$ from $h(p,t)$ and use it to compute its internal state $\sigma(p,t)$, which at round~$0$ is just a function of the input.

Let $t>0$ and assume that the inductive hypothesis $\mathcal F_\mathcal A(\mathcal V(p,t-1))= \sigma(q,t-1)$ holds for any agent $q$. Let $p$ be any agent, and let us prove that $\mathcal F_\mathcal A(\mathcal V(p,t))= \sigma(p,t)$. Observe that, by definition, if a vista contains a node in level $L_{i}$, it also contains the vista of the agents represented by that node at round~$i$.

It is easy to infer $h(p,t)$ from $\mathcal V(p,t)$, because it is the unique node of maximum depth. Thus, the node $h(p,t-1)$ is the parent of $h(p,t)$, which is in $\mathcal V(p,t)$ because it is connected to $h(p,t)$ by a black edge. It follows that $\mathcal F_\mathcal A$ can extract from $\mathcal V(p,t)$ the vista of $p$ at round~$t-1$ and, by the inductive hypothesis, it can compute $\sigma(p,t-1)$.

Also, for each agent $q$ that sends messages to $p$ at round~$t$, the node $h(q,t-1)$ is in $\mathcal V(p,t)$, because it is connected to $h(p,t)$ by a red edge with a certain multiplicity $m\geq 1$. Hence, $\mathcal F_\mathcal A$ can extract $m$ from $\mathcal V(p,t)$, as well as the vista of $q$ at round~$t-1$, and compute $\sigma(q,t-1)$ by the inductive hypothesis. From the internal state $\sigma(q,t-1)$, the message received by $p$ from $q$ at round~$t$ can also be computed.

Thus, $\sigma(p,t)$ can be computed by running the algorithm $\mathcal A$ with input $\sigma(p,t-1)$ (i.e., the previous internal state of $p$) and the messages received by $p$ at round~$t$ with the appropriate multiplicities. Since we have shown how $\mathcal F_\mathcal A$ can extract all this information from $\mathcal V(p,t)$, the theorem is proved.
\end{proof}

As a consequence, all the agents represented by the same node in level $L_t$ of the history tree must have the same state (and give the same output) at the end of round~$t$, regardless of the deterministic algorithm being executed. This is in agreement with the idea that the agents represented by the same node are indistinguishable.

\begin{corollary}\label{xcor:same}
At the end of any round~$t$, all agents represented by the same node in $L_t$ have the same internal state.
\end{corollary}
\begin{proof}
If two agents are represented by the same node of $L_t$, they have the same vista at round~$t$. Thus, by \cref{xth:view}, they have the same internal state at the end of round~$t$.
\end{proof}

\mypar{Significance.} The significance of the fundamental theorem is that it allows us to shift our focus from dynamic networks to history trees. Recall that there is a local algorithm $\mathcal A^\ast$ that allows agents to construct and update their vista at every round. Now, \cref{xth:view} guarantees that agents do not lose any information if they simply execute $\mathcal A^\ast$, regardless of their goal, and then compute their task-dependent outputs as a function of their respective vistas. Thus, in the following, we will assume without loss of generality that the internal state of every agent at every round, as well as all the messages it sends to all its neighbors, always coincide with its vista at that round.

\subsection{Related Concepts}\label{s:3.3}
We will now delve into related literature to place our history trees within the broader context of a line of research dating back to the 1980s.

\smallskip
\mypar{Graph coverings.} Borrowing from topological graph theory, Angluin was the first to use the notion of \emph{graph coverings} to prove that some problems cannot be solved in certain anonymous static networks~\cite{A80}. However, the conditions she stated were necessary but not sufficient.

\smallskip
\mypar{Views of Yamashita--Kameda.} Precise graph-theoretic characterizations of when certain fundamental problems for anonymous static networks are solvable were later given by Yamashita and Kameda~\cite{YK96}. The same authors also introduced the concept of \emph{view} of an agent in a static network whose links are endowed with \emph{port numbers} (i.e., the links incident to the same agent have distinct IDs)~\cite{YK88}. For Yamashita and Kameda, the view of an agent $p$ in a static network $G$ is an infinite tree rooted at $p$ that encodes all the information about the network that can possibly be gathered by $p$. In the language of topological graph theory, the view of $p$ is akin to the \emph{universal cover} of $G$ (i.e., the ``largest possible'' cover of $G$) rooted at $p$.

The views introduced by Yamashita and Kameda are loosely related to the vistas of history trees defined in \cref{s:3.1}, and their exact relationship has been detailed in~\cite{VIG24}. Notably, in static networks, one can construct an agent's vista at round~$t$ by inspecting the Yamashita--Kameda view of that agent truncated at depth~$t$, and vice versa. We stress that this correspondence holds only in static networks, as the Yamashita--Kameda views are not defined for dynamic networks.

If agents with isomorphic views are considered identical, one obtains the \emph{quotient graph} in the sense of Yamashita--Kameda. In the language of topological graph theory, this can also be characterized as the ``smallest possible'' graph that is covered by $G$.

\smallskip
\mypar{Minimum bases of Boldi--Vigna.} The above concepts were later extended by Boldi and Vigna to static networks with directed links and no port numbers~\cite{BV02}. In particular, they recognized that the appropriate topological tool for these networks is not the graph covering but the more general \emph{graph fibration}. In the language of Boldi--Vigna, the \emph{minimum base} $\widehat G$ of a static network $G$ is a structure analogous to the ``quotient graph'' of Yamashita--Kameda.

It is straightforward to construct the minimum base (equivalently, the quotient graph) of a static network $G$ given its history tree $\mathcal H$. In fact, the nodes in the level $L_t$ of $\mathcal H$ correspond to the isomorphism classes of the Yamashita--Kameda views truncated at round~$t$. It is easy to see that, if $G$ is a static network, the number of nodes in the levels of $\mathcal H$ strictly increases at every level until it becomes maximum, say, at level $L_s$. Now, the nodes of the minimum base $\widehat G$ are precisely the nodes of $L_s$, and their edges are given by the red edges between $L_s$ and $L_{s+1}$ (essentially, the endpoint in $L_{s+1}$ of each red edge is redirected to its parent in $L_s$).

In summary, previous structures and theories related to anonymous static networks can effectively be reinterpreted within the framework of history trees. The advantage of using history trees is that they also incorporate information about the time at which specific classes of agents become distinguishable. This ``temporal dimension'' makes history trees an ideal tool for dynamic networks, where one cannot rely on topological regularity to infer temporal information.

\smallskip
\mypar{Algorithms for static networks.} From a computational standpoint, it is important to consider how an agent in a network can algorithmically construct the aforementioned data structures and use them effectively to solve certain problems.

We already pointed out that the views of Yamashita--Kameda contain all the information that can be gathered locally by the agents; the same can be said of the minimum bases of Boldi--Vigna. Hence, constructing these structures allows an agent to access all the information that it can possibly use for computations.

Yamashita and Kameda showed that, at round~$t$, any agent can algorithmically construct its view truncated at depth~$t$. Thus, it can incrementally construct its view level by level. Moreover, they showed that, in a static network of $n$ agents, if two agents have isomorphic views up to depth~$n^2$, they have isomorphic views at any depth~\cite{YK88}. This leads to the conclusion that, if the number $n$ is known to the agents, they can effectively solve general problems and terminate in $O(n^2)$ communication rounds.

For example, since agents with isomorphic views are always indistinguishable, it is possible to elect a unique leader in a static network in a finite number of rounds if and only if there is a unique agent whose view truncated at depth~$n^2$ is distinct from all others.

\smallskip
\mypar{Linear bounds.} The $n^2$ upper bound was later improved by Norris, who showed that truncating views at depth~$n-1$ is sufficient and may be necessary in some networks. That is, if two agents have isomorphic views up to depth~$n-1$, they have isomorphic views at any depth~\cite{norris}.

Although this bound is worst-case optimal, it is far from being optimal in most cases. Later, Fraigniaud and Pelc further improved Norris' upper bound to $\widehat n-1$, where $\widehat n$ is the number of nodes in the minimum base $\widehat G$ of the network, and of course $\widehat n\leq n$~\cite{pelc}.

Proving these statements is immediate if one reframes them in terms of history trees. We have already argued that, in the history tree of a static network, the number of nodes in a level increases by at least one unit per level until level $L_s$, where it becomes maximum. That is, $|L_s|=\widehat n$, and clearly $s< \widehat n$, because $|L_{0}|\geq 1$. Thus, if two agents have isomorphic vistas (in the history tree) at round~$\widehat n-1$, they have isomorphic vistas at all rounds, which is equivalent to the statement of Fraigniaud and Pelc.

\smallskip
\mypar{Size of a view.} When designing efficient algorithms, one also wishes to optimize memory usage and message sizes, as well as running times. Unfortunately, in the worst case, a view of Yamashita--Kameda truncated at depth~$t$ grows exponentially in $t$, making their approach unsuitable for systems where memory limitation is an issue.

For this reason, Tani proposed a method to efficiently compress a truncated view by identifying isomorphic subtrees within the view itself. Using Tani's technique, a compressed view (of an undirected network with port numbers) truncated at depth $t$ can be stored in $O(tnM \log M)$ bits, where $M$ is the maximum degree of an agent in the network~\cite{tani}.

This essentially matches the upper bound on the size of a vista of a history tree of a $\tau$-union-connected network at round~$t$, which is $O(tn^2\log \tau M)$ bits (the small discrepancy is due to the fact that our networks are modeled by disconnected dynamic multigraphs, as opposed to connected static simple graphs). Thus, for instance, our linear-time terminating Counting algorithm based on history trees requires $O(n^3\log \tau M)$ bits of memory per agent in the worst case.

\smallskip
\mypar{Lamport causality.} In distributed systems, certain events are often caused by the occurrence of previous events; the concept of \emph{causal influence} was first introduced by Lamport in his seminal paper~\cite{lamport}.

A similar notion of causal influence can be formulated within the framework of history trees by stating that an agent $p$ exerts a causal influence on an agent $q$ at round~$t$ if the vista of $q$ at round~$t$ includes a node $v$ representing $p$. It is important to note that not all agents represented by $v$ necessarily initiate a sequence of messages reaching $q$ by round~$t$. However, at least one agent represented by $v$ does so, and the specific identity of this agent is irrelevant, since agents are anonymous.

\smallskip
\mypar{Causality in dynamic networks.} The concept of causal influence was further developed by Kuhn et al., who studied it in the context of dynamic networks with unique IDs~\cite{KLO10}. However, their findings apply equally well to anonymous dynamic networks.

Kuhn et al.\ established that, in a 1-union-connected dynamic network, the set of agents that are causally influenced by a given agent grows by at least one unit at every round, until all $n$ agents have been influenced. We can rephrase this observation in the language of history trees as follows.
\begin{itemize}
\item The vista of any agent at any round $t$ contains nodes representing at least $\min\{t+1, n\}$ agents.
\item For any agent $p$ and any round $t$, there are at least $\min\{t+1, n\}$ agents whose vista at round~$t$ contains a node representing $p$.
\end{itemize}

\smallskip
\mypar{Broadcasting speed.} As a consequence of the above, broadcasting a message, i.e., forwarding a message from a single agent to the whole dynamic network, takes at most $n-1$ rounds. Therefore, the \emph{dynamic diameter} $d$, defined as the maximum number of rounds it takes to forward a message from an agent to any other agent at any point in time, satisfies $d\leq n-1$ in any 1-union-connected dynamic network.

Since these facts will be used in the design of our algorithms, we give self-contained proofs below.

\begin{lemma}\label{xl:propamain} Let $P$ be a set of agents in a 1-union-connected dynamic network of size $n$, such that $1\leq |P|\leq n-1$, and let $t\geq 0$. Then, at every round~$t'\geq t+|P|$, in the vista of every agent there is a node at level $L_t$ representing at least one agent not in $P$.
\end{lemma}
\begin{proof}
Let $Q$ be the complement of $P$ (note that $Q$ is not empty), and let $Q_{t+i}$ be the set of agents represented by the nodes in $L_{t+i}$ whose vista contains a node in $L_t$ representing at least one agent in $Q$. We will prove by induction that $|Q_{t+i}|\geq |Q|+i$  for all $0\leq i\leq |P|$. The base case holds because $Q_t=Q$. The induction step is implied by $Q_{t+i}\subsetneq Q_{t+i+1}$, which holds for all $0\leq i < |P|$ as long as $|Q_{t+i}|<n$. Indeed, because $G_{t+i+1}$ is connected, it must contain a link between an agent $p\in Q_{t+i}$ and an agent $q\notin Q_{t+i}$. Thus, the vista of $q$ at round~$t+i+1$ contains the vista of $p$ at round~$t+i$, and so $Q_{t+i}\subsetneq Q_{t+i}\cup\{q\}\subseteq Q_{t+i+1}$.

Now, plugging $i:=|P|$, we get $|Q_{t+|P|}|=n$. In other words, the vista of each node in $L_{t+|P|}$ (and hence in subsequent levels) contains a node at level $L_t$ representing an agent in $Q$.
\end{proof}

\begin{corollary}\label{xl:propa}
In the history tree of a 1-union-connected dynamic network, every node at level $L_t$ is in the vista of every node at level $L_{t'}$, for all $t'\geq t+n-1$.
\end{corollary}
\begin{proof}
Let $v\in L_t$, and let $P$ be the set of agents not represented by $v$. If $P$ is empty, then all nodes in $L_{t'}$ are descendants of $v$, and have $v$ in their vista. Otherwise, $1\leq|P|\leq n-1$, and \cref{xl:propamain} implies that $v$ is in the vista of all nodes in $L_{t'}$.
\end{proof}

\section{Leaderless and Stabilizing Algorithms}\label{s:intermediate}
We will now present a general technique based on history trees, as well as three applications. Namely, we will give linear-time stabilizing and terminating algorithms for computing the Input Frequency function $F_{IF}$ in leaderless networks, as well as a linear-time stabilizing algorithm for computing the Input Multiset function $F_{IM}$ in networks with leaders. As we pointed out in \cref{s:2.2}, computing $F_{IF}$ or $F_{IM}$ is sufficient to compute the entire class of frequency-based functions or multiset-based functions, respectively, in the same number of rounds. 

This basic technique, however, falls significantly short when it comes to developing terminating algorithms for computing $F_{IM}$ in networks with leaders. This far more challenging problem will be discussed in \cref{s:termleader}.

Leveraging the theory developed in \cref{s:3.2}, we will assume, without any loss of generality, that all agents maintain their current vista of the history tree as their internal state and broadcast this vista across all available links at every round.

\subsection{Basic Technique}\label{s:technique}
Our basic technique makes use of the procedure in \cref{l:algorithm1}, which will be a subroutine of all algorithms presented in this section. The purpose of this procedure is to construct a homogeneous system of $k-1$ independent linear equations involving the anonymities of all the $k$ nodes in a level of an agent's vista (recall that a linear system is \emph{homogeneous} if all its constant terms are zero). We will first give some definitions.

\begin{figure}
\centering
\includegraphics[scale=0.65]{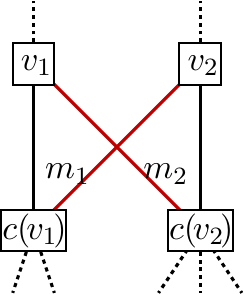}
\caption{The nodes $v_1$ and $v_2$ are \emph{exposed} with multiplicities $m_1$ and $m_2$, respectively.}
\label{fig:exposed}
\end{figure}

\smallskip
\mypar{Exposed nodes and strands.} In (a vista of) a history tree, if a node $v\in L_t$ has exactly one child (i.e., there is exactly one node $v'\in L_{t+1}$ such that $\{v,v'\}$ is a black edge), we say that $v$ is \emph{non-branching}. If $v$ has multiple children, it is \emph{branching} (thus, a leaf in a vista is neither branching nor non-branching). We emphasize that, if $v$ is a node of a vista $\mathcal V$ of a history tree $\mathcal H$, then $v$ may be branching in $\mathcal H$ but non-branching in $\mathcal V$. This happens if $v$ has multiple children in $\mathcal H$, but only one of them appears in $\mathcal V$.

We say that two non-branching nodes $v_1,v_2\in L_t$, whose respective children are $v'_1,v'_2\in L_{t+1}$, are \emph{exposed} with multiplicity $(m_1, m_2)$ if the red edges $\{v'_1, v_2\}$ and $\{v'_2, v_1\}$ are present with multiplicities $m_1\geq 1$ and $m_2\geq 1$, respectively (see \cref{fig:exposed}). Again, the same two nodes may be exposed in a vista of a history tree, but not in the history tree itself.

A \emph{strand} is a path $(w_1, w_2, \dots, w_k)$ in (a vista of) a history tree consisting of non-branching nodes such that, for all $1\leq i<k$, the node $w_i$ is the parent of $w_{i+1}$. We say that two strands $P_1$ and $P_2$ are \emph{exposed} if there are two exposed nodes $v_1\in P_1$ and $v_2\in P_2$.

Thanks to the following lemma, if we know the anonymity of a node in an exposed pair, we can determine the anonymity of the other node. (Recall that we denote the anonymity of $v$ by $a(v)$.)

\begin{lemma}\label{xl:guess1}
In a history tree $\mathcal H$, if the nodes $v_1$ and $v_2$ are exposed with multiplicity $(m_1, m_2)$, then $a(v_1)\cdot m_1=a(v_2)\cdot m_2$.
\end{lemma}
\begin{proof}
Let $v_1, v_2\in L_t$, and let $P_1$ and $P_2$ be the sets of agents represented by $v_1$ and $v_2$, respectively. Since $v_1$ is non-branching in $\mathcal H$, we have $a(c(v_1))=a(v_1)$, and therefore $c(v_1)$ represents $P_1$, as well. Hence, the number of links between $P_1$ and $P_2$ in $G_{t+1}$ (counted with their multiplicities) is $a(c(v_1))\cdot m_1 = a(v_1)\cdot m_1$. By a symmetric argument, this number is equal to $a(v_2)\cdot m_2$.
\end{proof}

Observe that \cref{xl:guess1} may not hold in a vista $\mathcal V$ of $\mathcal H$, because $v_1$ (or $v_2$) may be non-branching in $\mathcal V$ but have multiple children in $\mathcal H$. Thus, it is not necessarily true that $v_1$ (or $v_2$) and its unique child in $\mathcal V$ have the same anonymity.

\smallskip
\mypar{Main subroutine.} Intuitively, the procedure in \cref{l:algorithm1} searches for a long-enough sequence of levels in the given vista $\mathcal V$, say from $L_s$ to $L_t$, where all nodes are non-branching. That is, the nodes in $L_s \cup L_{s+1}\cup \dots \cup L_t$ can be partitioned into $k=|L_s|=|L_t|$ strands. Then the procedure searches for pairs of exposed strands, each of which yields a linear equation involving the anonymities of some nodes of $L_t$, until it obtains $k-1$ linearly independent equations. The reason why we have to consider strands spanning several levels of the history tree (as opposed to looking at a single level) is that the dynamic disconnectivity $\tau$ is not known, and thus \cref{p:time} cannot be applied directly. Note that the search may fail (in which case \cref{l:algorithm1} returns $t=-1$) or it may produce incorrect equations. The following lemma specifies sufficient conditions for \cref{l:algorithm1} to return a correct system of non-trivial equations for some $t\geq 0$.

\lstset{style=mystyle}
\begin{lstlisting}[caption={Constructing a system of equations in the anonymities of some nodes in a vista.\label{l:algorithm1}},captionpos=t,float,abovecaptionskip=-\medskipamount,mathescape=true]
# Input: a vista $\mathcal V$ with levels $L_{-1}$, $L_0$, $L_1$, $\dots$, $L_h$
# Output: $(t, S)$, where $t$ is an integer and $S$ is a system of linear equations

Assign $s := 0$
For $t := 0$ to $h$
   If $L_t$ contains a node with no children, return $(-1,\emptyset)$
   If $L_t$ contains a node with more than one child, assign $s := t+1$
   Else
      Let $k=|L_s|=|L_t|$ and let $u_1$, $u_2$, $\dots$, $u_k$ be the nodes in $L_t$
      Let $\mathcal P=\{P_1, P_2, \dots, P_k\}$, where $P_i$ is the strand starting in $L_s$ and ending at $u_i\in L_t$
      Let $G$ be the graph on $\mathcal P$ whose edges are pairs of exposed strands
      If $G$ is connected
         Let $G'\subseteq G$ be any spanning tree of $G$
         Assign $S := \emptyset$
         For each edge $\{P_i, P_j\}$ of $G'$
            Find any two exposed nodes $v_1\in P_i$ and $v_2\in P_j$
            Let $(m_1, m_2)$ be the multiplicity of the exposed pair $(v_1, v_2)$
            Add the equation $m_1 x_i = m_2 x_j$ to $S$
         Return $(t,S)$
\end{lstlisting}

\begin{lemma}\label{l:algo1corr}
Let $\mathcal V$ be the vista of an agent in a $\tau$-union-connected network of size $n$ taken at round $t'$, and let \cref{l:algorithm1} return $(t,S)$ on input $\mathcal V$. Assume that one of the following conditions holds:
\begin{enumerate}
\item $t\geq 0$ and $t'\geq t+\tau(n-1)+1$, or
\item $t'\geq \tau(2n-c)$, where $c$ is the number of distinct inputs held by agents at round $0$.
\end{enumerate}
Then, $0\leq t\leq \tau(n-c+1)-1$, and $S$ is a homogeneous system of $k-1$ independent linear equations (with integer coefficients) in $k=|L_{t}|$ variables $x_1$, $x_2$, \dots, $x_k$. Moreover, $S$ is satisfied by assigning to $x_i$ the anonymity of the $i$th node of $L_{t}$, for all $1\leq i\leq k$.
\end{lemma}
\begin{proof}
If $\tau=1$, it takes at most $n-1$ rounds for information to travel from an agent to any other agent, due to \cref{xl:propa}. In general, if $\tau\geq 1$, it takes at most $\tau(n-1)$ rounds, by \cref{p:time}. Therefore, since $\mathcal V$ is a vista taken at round $t'$, all levels of $\mathcal V$ up to $L_{t'-\tau(n-1)}$ are complete (recall that a level of a vista is complete if it has no missing nodes).

Assume Condition~2 first. Since $t'\geq \tau(2n-c)$, all levels of $\mathcal V$ up to $L_{\tau(n-c+1)}$ are complete. Thus, level $L_0$ is complete, and therefore it has exactly $c$ nodes, because two agents are distinguishable at round~$0$ if and only if they have distinct inputs. Since the sum of the anonymities of these $c$ nodes is $n$, there may be at most $n-c$ branching nodes in $\mathcal V$ (excluding the root). Hence, the levels from $L_0$ up to $L_{\tau(n-c+1)-1}$ contain an interval of at least $\tau$ consecutive levels, say from $L_r$ to $L_{r+\tau-1}$, where all nodes are non-branching and can be partitioned into $k=|L_r|=|L_{r+\tau-1}|$ strands $P_1$, $P_2$, \dots, $P_{k}$ (Lines~9--10).

Note that a link between two agents at any round $r'$ in the interval $[r+1,r+\tau]$ determines a pair of exposed nodes in $L_{r'-1}$. Thus, by definition of $\tau$-union-connected network, the graph of exposed strands between $L_r$ and $L_{r+\tau-1}$ (constructed as $G$ in Line~11) is connected. It follows that the execution of \cref{l:algorithm1} terminates at Line~19 (as opposed to Line~6) whenever $t\geq r+\tau-1$. Thus, the procedure returns a pair $(t,S)$ with $0\leq t\leq r+\tau-1\leq \tau(n-c+1)-1$. In particular, all levels of $\mathcal V$ up to $L_{t+1}$ are complete, since $t+1\leq \tau(n-c+1)$.

Now assume Condition~1. Since $t'\geq (t+1)+\tau(n-1)$, all levels of $\mathcal V$ up to $L_{t+1}$ are complete in this case, as well. Since $t\geq 0$ by assumption, the execution of \cref{l:algorithm1} terminates at Line~19. The termination condition is met when long-enough strands are found; as proved above, this event must occur when $t\leq \tau(n-c+1)-1$.

We have proved that, in both cases, the inequalities $0\leq t\leq \tau(n-c+1)-1$ hold, and all levels of $\mathcal V$ up to $L_{t+1}$ are complete. Let us now examine the linear system $S$. Observe that $S$ is homogeneous because it consists of homogeneous linear equations (cf.~Line~18). Also, since the spanning tree $G'$ constructed at Line~13 has $k-1$ edges, $S$ contains $k-1$ equations. We will prove that they are linearly independent by induction on $k$. If $k=1$, there is nothing to prove. Otherwise, let $P_i$ be a leaf of $G'$, and let $\{P_i, P_j\}$ be its incident edge. Then, $S$ contains an equation $Q$ of the form $m_1 x_i = m_2 x_j$ with $m_1m_2\neq 0$. Let $S'$ be the system obtained by removing $Q$ from $S$; equivalently, $S'$ corresponds to the tree obtained by removing the leaf $P_i$ from $G'$. By the inductive hypothesis, no linear combination of equations in $S'$ yields $0=0$. On the other hand, if $Q$ is involved in a linear combination with a non-zero coefficient, then the variable $x_i$ cannot vanish, because it only appears in $Q$. Therefore, the equations in $S$ are independent.

It remains to prove that a solution to $S$ is given by the anonymities of the nodes of $L_t$. Due to \cref{xl:guess1}, if $v_1$ and $v_2$ are exposed in $\mathcal V$, as well as in the history tree containing $\mathcal V$, with multiplicity $(m_1, m_2)$, then $m_1 a(v_1)=m_2 a(v_2)$. To conclude our proof, it is sufficient to note that, since the nodes of a strand $P_i$ are non-branching in $\mathcal V$ as well as in the underlying history tree (recall that all levels of $\mathcal V$ up to $L_{t+1}$ are complete), they all have the same anonymity, which is the anonymity of the ending node $w_i\in L_t$.
\end{proof}

\mypar{Diameter bounds.} We can also restate \cref{l:algo1corr} with respect to the dynamic diameter $d$ of the network.

\begin{corollary}\label{c:diameter}
The statement of \cref{l:algo1corr} remains valid if the two conditions are replaced with:
\begin{enumerate}
\item $t\geq 0$ and $t'\geq t+d+1$, or
\item $t'\geq \tau(n-c+1)+d$, where $c$ is the number of distinct inputs held by agents at round $0$,
\end{enumerate}
where $d$ is the dynamic diameter of the network (or an upper bound thereof).
\end{corollary}
\begin{proof}
Note that it takes at most $d$ rounds for information to travel across the network. Hence, the proof of \cref{l:algo1corr} can be repeated verbatim, substituting the term $\tau(n-1)$ with $d$ throughout.
\end{proof}

\subsection{Stabilizing Algorithm for Leaderless Networks}\label{s:stableaderless}
As a first application of \cref{l:algo1corr}, we will give stabilizing algorithm that efficiently computes the Input Frequency function $F_{IF}$ in all leaderless networks with a finite dynamic disconnectivity $\tau$, assuming no knowledge of $\tau$ or $n$. As a consequence, \emph{all} frequency-based functions are efficiently computable as well, due to \cref{th:concentration}. Moreover, \cref{t:scale} states that no other functions are computable in leaderless networks, and \cref{t:lower1} shows that our algorithm is asymptotically optimal.

\begin{theorem}\label{t:noleadstab}
There is an algorithm that computes the Input Frequency function $F_{IF}$ on all networks in $\bigcup_{\tau\geq 1}\bigcup_{n\geq 1}\mathcal N^\tau_{n,0}$ and stabilizes in at most $\tau(2n-2)$ rounds.
\end{theorem}
\begin{proof}
Our local algorithm is as follows. Each agent $p$ runs \cref{l:algorithm1} on its own vista $\mathcal V$, obtaining a pair $(t, S)$. If $t=-1$ or $S$ is not a homogeneous system of $k-1$ independent linear equations in $k$ variables, then $p$ outputs $\{(\lambda(p),1)\}$, where $\lambda(p)$ is its own input. Otherwise, since the rank of the coefficient matrix of $S$ is $k-1$, the general solution to $S$ has exactly one free parameter, due to the Rouch\'e--Capelli theorem. Therefore, by Gaussian elimination, it is possible to express every variable $x_i$ as a rational multiple of $x_1$, i.e., $x_i=\alpha_i x_1$ for some $\alpha_i\in \mathbb Q^+$ (recall that the coefficients of $S$ are integers). Let $L_{t}=\{w_1, w_2, \dots, w_k\}$ and $L_0=\{v_1, v_2, \dots, v_{k'}\}$. For every node $v_i\in L_0$, define $\beta_i\in\mathbb Q^+$ as $\beta_i=\sum_{w_j\in L_t \text{ descendant of }v_i} \alpha_j$, and let $\beta = \sum_i \beta_i$. Then, $p$ outputs
$$\{(\text{label}(v_1), \beta_1/\beta), (\text{label}(v_2), \beta_2/\beta), \dots, (\text{label}(v_{k'}), \beta_{k'}/\beta)\}.$$

The correctness and stabilization time of the above algorithm directly follow from \cref{l:algo1corr}. Specifically, if not all agents have the same input (hence $n\geq c\geq 2$), then, at any round $t'$ such that $t' \geq \tau(2n-2)\geq \tau(2n-c)$, Condition~2 of \cref{l:algo1corr} is met, and the system $S$ is satisfied by the anonymities of the nodes in $L_t$. Thus, $a(v_i)=\alpha_i a(v_1)$ for all $v_i\in L_0$, and therefore $\beta_i/\beta= a(v_i)/n$. We conclude that, for any input assignment $\lambda$, the algorithm stabilizes on the correct output $\frac 1n \cdot \mu_\lambda$ within $\tau(2n-2)$ rounds.

In the special case where all agents have the same input, the default output $\{(\lambda(p),1)\}$ is correct since round~$0$. Also, whenever $S$ is a homogeneous system of $k-1$ independent linear equations, the algorithm still returns the correct output $\{(\lambda(p),1)\}$, because all nodes in level $L_0$ have the same label $\lambda(p)$. Hence, in this particular case the output is always correct.
\end{proof}

\subsection{Terminating Algorithm for Leaderless Networks}\label{s:termleaderless}
We will now give a certificate of correctness that can be used to turn the stabilizing algorithm of \cref{s:stableaderless} into a terminating algorithm. The certificate relies on a-priori knowledge of the dynamic disconnectivity $\tau$ and an upper bound $N$ on the size of the network $n$; these assumptions are justified by \cref{o:timeimposs2} and \cref{t:knowN}, respectively. Again, \cref{t:lower1} shows that our algorithm is asymptotically optimal.

\begin{theorem}\label{t:noleadterm}
For every $\tau\geq 1$ and $N\geq 1$, there is an algorithm that computes the Input Frequency function $F_{IF}$ on all networks in $\bigcup_{n\leq N} \mathcal N^{\tau}_{n,0}$ and terminates in at most $\tau(n+N-2)$ rounds.
\end{theorem}
\begin{proof}
Given $\tau$ and $N$, our terminating local algorithm is as follows. Run \cref{l:algorithm1} on the agent's vista $\mathcal V$, obtaining a pair $(t, S)$, and then do the same computations as in the algorithm for \cref{t:noleadstab}. If $t\geq 0$ and the current round $t'$ satisfies $t'\geq t+\tau(N-1)+1$, then the output is certifiably correct, and the agent terminates. Moreover, if $t'=\tau(N-1)$ and the level $L_0$ of $\mathcal V$ contains a single node, then the agent outputs $\{(\lambda(p),1)\}$ and terminates, where $\lambda(p)$ is the agent's own input.

Due to \cref{xl:propa}, the latter condition is verified only if all agents have the same input (because level $L_0$ of $\mathcal V$ must be complete at round $\tau(N-1)\geq \tau(n-1)$). In this case, it is safe to conclude that the correct output is $\{(\lambda(p),1)\}$, and termination occurs within $\tau(N-1)\leq \tau(n+N-2)$ rounds, as desired (obviously, $n\geq 1$ holds).

In all other cases, the correctness of the algorithm is a direct consequence of \cref{l:algo1corr}. Indeed, if the algorithm terminates when $t\geq 0$ and $t'\geq t+\tau(N-1)+1\geq t+\tau(n-1)+1$, it gives the correct output because Condition~1 of \cref{l:algo1corr} is met. Thus, the algorithm cannot terminate with an incorrect output.

As for the running time, we may leverage \cref{l:algo1corr} with $c\neq 1$, since we have already shown that for $c=1$ termination occurs within $\tau(n+N-2)$ rounds. So, assume that $c\geq 2$ and $t'=\tau(n+N-2)\geq \tau(2n-2)\geq \tau(2n-c)$. Since Condition~2 of \cref{l:algo1corr} is met, we have $0\leq t\leq \tau(n-c+1)-1\leq \tau (n-1)-1$. Thus, $t'=\tau(n+N-2)\geq t+\tau(N-1)+1$, and the algorithm terminates at round $t'$, as desired.
\end{proof}

We can also trade the knowledge of $\tau$ and $N$ for the knowledge of the dynamic diameter $d$ of the network (or an upper bound thereof).

\begin{corollary}\label{c:noleadterm}
For every $d\geq 1$, there is an algorithm that computes the Input Frequency function $F_{IF}$ on all networks in $\bigcup_{n\geq 1} \mathcal N^{[d]}_{n,0}$ and terminates in at most $\tau(n-1)+d\leq \tau(2n-2)$ rounds, where $\tau$ is the dynamic disconnectivity of the network.
\end{corollary}
\begin{proof}
We can repeat the proof of \cref{t:noleadterm} verbatim, replacing the termination condition $t'\geq t+\tau(N-1)+1$ with $t'\geq t+d+1$ and the condition $t'=\tau(N-1)$ with $t'=d$. Then \cref{c:diameter} is invoked in lieu of \cref{l:algo1corr}, recalling that $d\leq \tau(n-1)$.
\end{proof}

Note that knowledge of an upper bound $D\geq d$ is actually enough for the algorithm in \cref{c:noleadterm} to work. Indeed, with the termination condition $t'\geq t+D+1$, the same algorithm computes $F_{IF}$ on all networks in $\bigcup_{d\leq D}\bigcup_{n\geq 1} \mathcal N^{[d]}_{n,0}$ and terminates in at most $\tau(n-1)+D$ rounds.

\subsection{Stabilizing Algorithm for Networks with Leaders}\label{s:stableader}
To conclude this section, we will give a stabilizing algorithm that efficiently computes the Input Multiset function $F_{IM}$ in all networks (of unknown size $n$) with a known number $\ell\geq 1$ of leaders and a finite (but unknown) dynamic disconnectivity $\tau$. Therefore, \emph{all} multiset-based functions are efficiently computable as well, due to \cref{xth:compl}. Moreover, \cref{t:aggr} states that no other functions are computable in networks with leaders, and \cref{t:lower2} shows that our algorithm is asymptotically optimal. We will once again make use of the subroutine in \cref{l:algorithm1}, this time assuming that the number of leaders $\ell\geq 1$ is known to all agents. This assumption is justified by \cref{t:knowL}.

\begin{theorem}\label{t:multileadstab}
For every $\ell\geq 1$, there is an algorithm that computes the Input Multiset function $F_{IM}$ on all networks in $\bigcup_{\tau\geq 1} \bigcup_{n\geq \ell} \mathcal N^{\tau}_{n,\ell}$ and stabilizes in at most $\tau(2n-2)$ rounds.
\end{theorem}
\begin{proof}
The algorithm proceeds as in \cref{t:noleadstab}, with two differences. First, the default output of an agent $p$, instead of being $\{(\lambda(p),1)\}$, is now $\{(\lambda(p),\ell)\}$ (recall that the function $F_{IM}$ does not return frequencies, but numbers of agents). Second, when the fractions $\beta_1$, $\beta_2$, \dots, $\beta_{k'}$ have been computed, as well as their sum $\beta$, the following additional steps are performed. Let $L_0=\{v_1, v_2, \dots, v_{k'}\}$, and let $\{v_{j_1}, v_{j_2}, \dots, v_{j_l}\} \subseteq L_0$ be the set of nodes in $L_0$ representing leader agents, i.e., such that $\text{label}(v_{j_i})$ has the leader flag set for all $1\leq i\leq l$ (observe that, in general, we have $l\leq \ell$, because some nodes of $L_0$ may represent more than one leader). If $l=0$, retain the default output. Otherwise, compute $\beta'=\sum_{i=1}^l \beta_{j_i}$ and $\gamma_i=\ell\beta_i/\beta'$ for all $1\leq i\leq k'$, and output
$$\{(\text{label}(v_1), \gamma_1), (\text{label}(v_2), \gamma_2), \dots, (\text{label}(v_{k'}), \gamma_{k'})\}.$$

The correctness follows from the fact that, as shown in \cref{t:noleadstab}, at any round $t'\geq \tau(2n-c)$ we have $\beta_i/\beta= a(v_i)/n$ for all $1\leq i\leq k'$. Adding up these equations for all $i\in\{j_1, j_2, \dots, j_l\}$, we obtain $\beta'/\beta=\ell/n$, and therefore $n=\ell\beta/\beta'$. We conclude that
$$\gamma_i=\frac{\ell\beta_i}{\beta'}=\frac{\ell\beta\beta_i}{\beta'\beta}=\frac{n\beta_i}{\beta}=a(v_i).$$
Thus, within $\tau(2n-c)$ rounds, the algorithm stably outputs the anonymities of all nodes in $L_0$. As observed in \cref{s:2}, this is equivalent to computing the Input Multiset function $F_{IM}$.

If $c\geq 2$, then the stabilization time is at most $\tau(2n-2)\geq \tau(2n-c)$ rounds, as desired. If $c=1$, all agents have the same input, including the same leader flag. Since $\ell\geq 1$, this implies that all agents are leaders and $\ell=n$; hence the default output $\{(\lambda(p),\ell)\}$ is correct at every round since round~$0$. Thus, the stabilization time is at most $\tau(2n-2)$ rounds in every case.
\end{proof}

\section{Terminating Algorithm for Networks with Leaders}\label{s:termleader}
We will now present the main result of this paper. As already remarked, giving an efficient certificate of correctness for the Input Multiset function with one or more leaders is a highly non-trivial task for which a radically new approach is required. In fact, there are two crucial difficulties to overcome.

Firstly, the strategies developed in \cref{s:intermediate} are too shallow and ineffective even for networks with a unique leader, which are the easiest to treat. In \cref{s:naive}, we will give counterexamples to some naive termination strategies that one may devise in an attempt to generalize those in \cref{s:intermediate}. This indicates that an entirely different technique is necessary.

Secondly, networks with more than one leader significantly add to the difficulty of the problem. For instance, once a terminating algorithm for networks with a unique leader has been designed, one may be tempted to simply adapt it to the multi-leader case by setting the anonymity of the leader node in the history tree to $\ell>1$ instead of $1$. Unfortunately, this approach neglects one important factor. Indeed, while the history tree of a network with a unique leader contains a single leader branch, all of whose nodes have anonymity $1$, this may not be the case in a network with multiple leaders. In such a network, as soon as some leaders get disambiguated, the leader node branches into several children nodes whose anonymities are unknown (we only know that their sum is $\ell$). Moreover, some leader branches may be missing from the vistas of other leaders for several rounds. Thus, in the case of multi-leader networks, we must deal with the fact that even the vista of a leader may have levels where the sum of the anonymities of any subset of nodes is unknown.

\cref{s:approxcount} contains the technical core of our algorithm. Here we develop a subroutine that, in $O(\tau\ell n)$ rounds, counts the number of agents in a network assuming that its history tree contains a leader node of known anonymity whose descendants are non-branching for sufficiently many rounds. In particular, in the case of networks with a unique leader, this subroutine yields a full-fledged linear-time terminating algorithm for the Counting problem (because in this case the leader nodes are necessarily non-branching).

In \cref{s:main}, we approach the general multi-leader problem by repeatedly guessing the anonymity of a leader node in the history tree and invoking the subroutine of \cref{s:approxcount} to confirm our guesses. This procedure introduces additional overhead, bringing the running time of our final algorithm to $O(\tau\ell^2n)$ rounds.

\subsection{Naive Termination Strategies Are Incorrect}\label{s:naive}
If we were to use a technique such as the one in \cref{s:stableader} to devise a terminating algorithm for the Counting problem (or, more generally, for computing the Input Multiset function), our attempts would be bound to fail.

\smallskip
\mypar{Basic naive strategy.} As a first example, consider the dynamic network in \cref{xfig:counter0}. Observe that the leader of this network always receives exactly four messages from indistinguishable non-leader agents. In turn, the agents that send messages to the leader in the first four rounds are still unaware of the agents labeled $p_3$. As a result, the vista of the leader up to round~$4$ consists of only two strands whose nodes are exposed with multiplicity $(4,1)$ at every level. Thus, according to the algorithm in \cref{s:stableader}, the leader assigns an anonymity of $4$ to all the non-leader nodes in its vista, concluding that there are only five agents in the network. Essentially, for the first four rounds, the leader cannot distinguish this network from a static star graph with the leader at the center.

\begin{figure}
\centering
\includegraphics[scale=0.5]{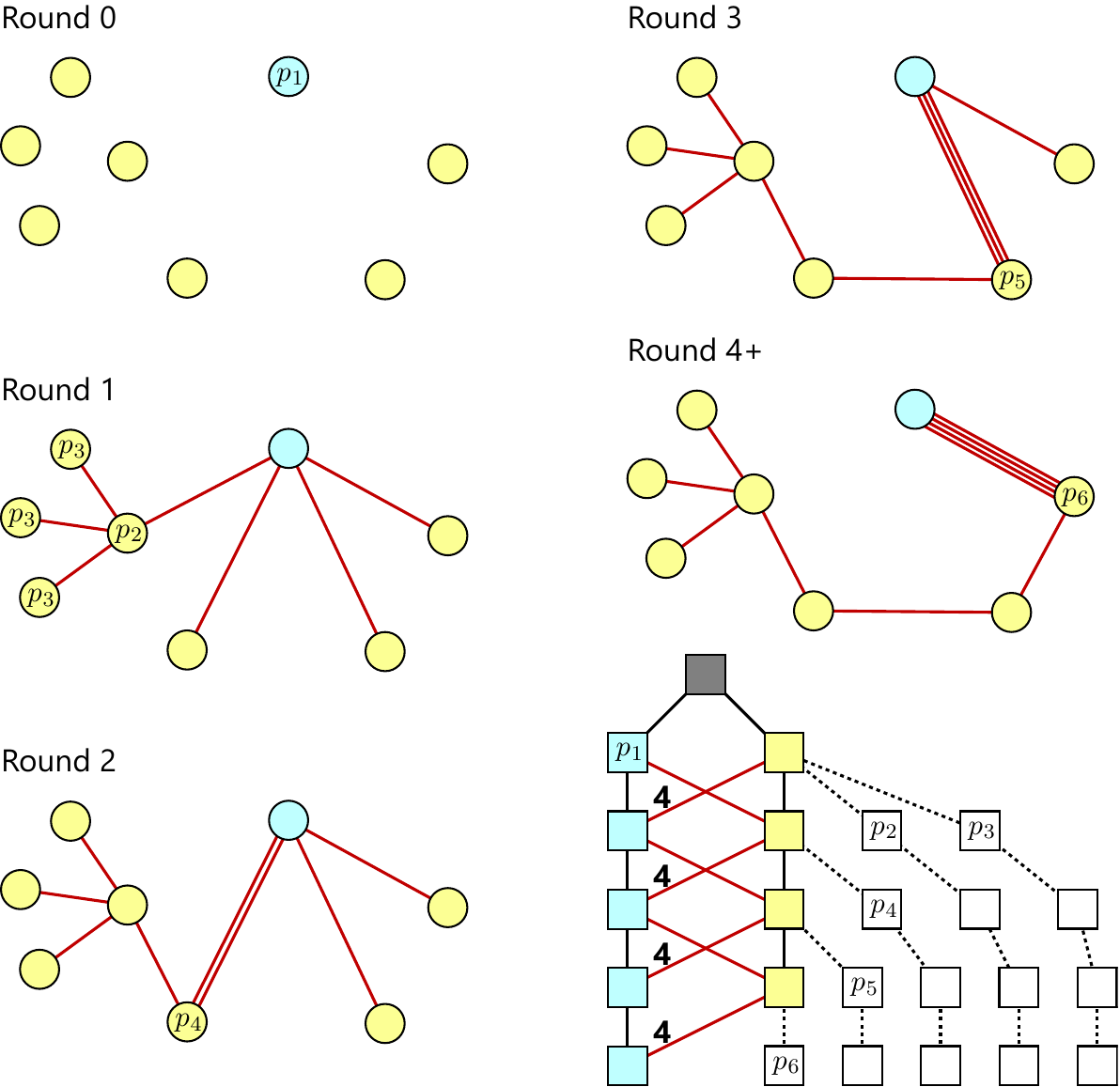}
\caption{An example of a dynamic network where the naive technique of \cref{s:stableader} fails to provide a correct termination condition. The white nodes in the history tree are not in the vista of the leader at the last round; the red edges not in the vista are not drawn. Same-colored agents have equal inputs. For the first four rounds, from the leader's perspective, this network is indistinguishable from the complete bipartite graph $K_{1,4}$. Thus, throughout this time, the leader is unaware of the agents labeled $p_3$, and therefore cannot compute the total number of agents.}
\label{xfig:counter0}
\end{figure}

It is straightforward to generalize this example to networks with $n=k_1+k_2+1$ agents, $k_1+1$ of which are counted by the leader, while the other $k_2$ are not discovered by the leader until round~$k_1+1$ (such as the agents labeled $p_3$ in \cref{xfig:counter0}). In these networks, our naive algorithm consistently returns $k_1+1$ for several rounds, which can be made arbitrarily far from the true value $k_1+k_2+1$ by increasing $k_2$ indefinitely.

\smallskip
\mypar{Improved naive strategy.} One may conjecture that a good termination certificate would be to compute the number of agents $n'$ according to the algorithm in \cref{s:stableader}, and then wait for a number of rounds depending on $n'$ to confirm that no relevant nodes were missing from the vista. In fact, in the previous example, simply waiting for $n'$ rounds would suffice.

\begin{figure}
\centering
\includegraphics[scale=0.5]{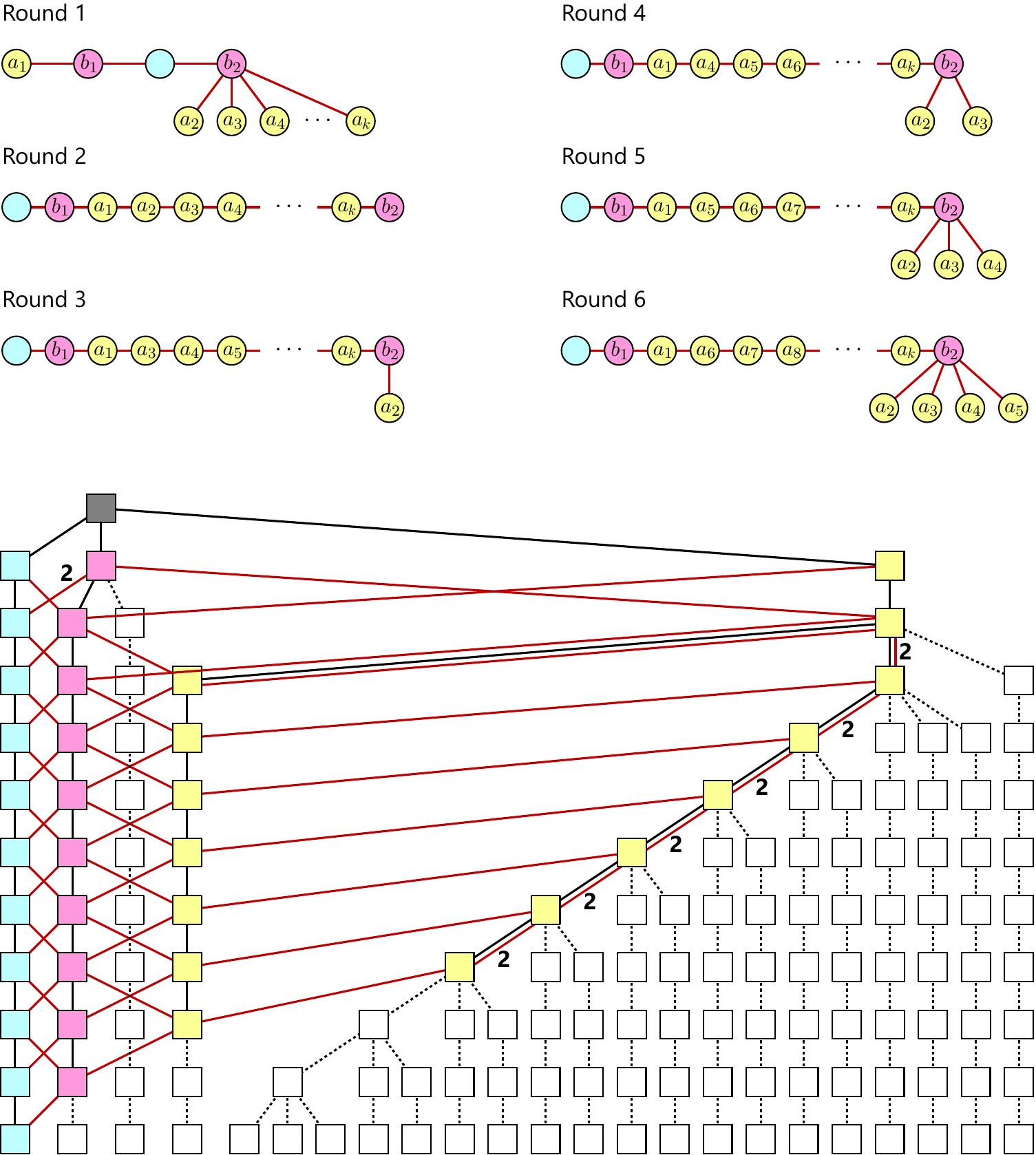}
\caption{A dynamic network where, after level $L_1$, all levels in the leader's vista are identical for an arbitrarily long sequence of rounds (depending on the parameter $k$). The color schemes and stylistic conventions are as in \cref{xfig:counter0}.}
\label{xfig:counter1}
\end{figure}

Unfortunately, this strategy fails on the network in \cref{xfig:counter1}. Here the leader's vista at levels $L_0$ and $L_1$ causes the algorithm to count only $n'=5$ agents (i.e., one leader, two agents represented by the purple node, and two agents represented by the yellow node). Afterwards, the leader has to wait until round $k-5$ for the appearance of the node that was missing from $L_1$. Since $k$ is arbitrary, this type of strategy is bound to fail no matter what the waiting time is.

\smallskip
\mypar{Challenges.} Essentially, the recurring issue is that an agent seems to have no way of knowing whether any level in its vista is complete; thus, it may end up terminating too soon with an incorrect output. To formulate a correct termination condition, we will have to considerably develop the theory of history trees. This will be undertaken in \cref{s:main,s:approxcount}.

\subsection{Main Algorithm}\label{s:main}
\mypar{The subroutine \texttt{ApproxCount}.} We first introduce the subroutine \texttt{ApproxCount}, whose formal description and proof of correctness are postponed to \cref{s:approxcount}. The purpose of \texttt{ApproxCount} is to compute an approximation $n'$ of the total number of agents $n$ (or report various types of failure). It takes as input a vista $\mathcal V$ of an agent, the number of leaders $\ell$, and two integer parameters $s$ and $x$, representing the index of a level of $\mathcal V$ and the anonymity of a leader node in $L_s$, respectively.

\smallskip
\mypar{Discrepancy $\delta$.} Suppose that \texttt{ApproxCount} is invoked with arguments $\mathcal V$, $s$, $x$, $\ell$, where $1\leq x\leq \ell$, and let $\vartheta$ be the first leader node in level $L_s$ of $\mathcal V$ (if $\vartheta$ does not exist, the procedure immediately returns the error message $n'=\texttt{"MissingNodes"}$). We define the \emph{discrepancy} $\delta$ as the ratio $x/a(\vartheta)$. Clearly, $\delta\leq \ell$. Note that, since $a(\vartheta)$ is not a-priori known by the agent executing \texttt{ApproxCount}, then neither is $\delta$.

\smallskip
\mypar{Conditional anonymity.} \texttt{ApproxCount} starts by assuming that the anonymity of $\vartheta$ is $x$, and makes deductions on other anonymities based on this assumption. Thus, we will distinguish between the actual anonymity of a node $a(v)$ and the \emph{conditional anonymity} $a'(v)=\delta a(v)$ that \texttt{ApproxCount} may compute under the initial assumption that $a'(\vartheta)=x=\delta a(\vartheta)$.

\smallskip
\mypar{Overview of \texttt{ApproxCount}.} The procedure $\texttt{ApproxCount}$ scans the levels of $\mathcal V$ starting from $L_s$, making ``guesses'' on the conditional anonymities of nodes based on already known conditional anonymities. Generalizing some lemmas from~\cite{DV22}, we develop a criterion to determine when a guess is correct. This yields more nodes with known conditional anonymities, and therefore more guesses (the details are in \cref{s:approxcount}). As soon as it has obtained enough information, the procedure stops and returns $(n',t)$, where $L_t$ is the level scanned thus far. If the information gathered satisfies certain criteria, then $n'$ is an approximation of $n$. Otherwise, $n'$ is an error message, as detailed below.

\smallskip
\mypar{Error messages.} If $L_s$ contains no leader nodes, the procedure returns the error message $n'=\texttt{"MissingNodes"}$. If, before gathering enough information on $n$, the procedure encounters a descendant of $\vartheta$ with more than one child in $\mathcal V$, it returns the error message $n'=\texttt{"StrandTooShort"}$. If it determines that the conditional anonymity of a node is not an integer, it returns the error message $n'=\texttt{"WrongGuess"}$. Finally, if it determines that the sum $\ell'$ of the conditional anonymities of the leader nodes is not $\ell$, it returns $n'=\texttt{"MissingNodes"}$ if $\ell'<\ell$ and $n'=\texttt{"WrongGuess"}$ if $\ell'>\ell$.

\smallskip
\mypar{Correctness of \texttt{ApproxCount}.} The following lemma gives some conditions that guarantee that \texttt{ApproxCount} has the expected behavior; it also gives bounds on the number of rounds it takes for \texttt{ApproxCount} to produce an approximation $n'$ of $n$, as well as a criterion to determine if $n'=n$. The lemma's proof is rather lengthy and technical, and is found in \cref{s:approxcount}.

\begin{lemma}\label{l:approxcount}
Let \textup{$\texttt{ApproxCount}(\mathcal V, s, x, \ell)$} return $(n',t)$. Assume that $\vartheta$ exists and $x\geq a(\vartheta)$. Let $\vartheta'$ be the (unique) descendant of $\vartheta$ in $\mathcal V$ at level $L_{t}$, and let $L_{t'}$ be the last level of $\mathcal V$. Then:
\begin{itemize}
\item[(i)] If $x=a(\vartheta)=a(\vartheta')$, then $n'\neq \texttt{"WrongGuess"}$.
\item[(ii)] If $n'$ is not an error message and $a(\vartheta)=a(\vartheta')$, and if either $t'\geq t+n'$ or level $L_t$ is complete, then $n'=n$.
\item[(iii)] If $t'\geq s+(\ell+2)(n-1)$, then $s\leq t\leq s+(\ell+1)(n-1)$ and $n'\neq \texttt{"MissingNodes"}$. Moreover, if $n'=\texttt{"StrandTooShort"}$, then $L_t$ contains a leader node with at least two children in $\mathcal V$.\qed
\end{itemize}
\end{lemma}

Our terminating algorithm assumes that all agents know the number of leaders $\ell\geq 1$ and the dynamic disconnectivity $\tau$ (but no knowledge of $n$ is assumed). Again, this is justified by \cref{t:knowL} and \cref{o:timeimposs2}.

\begin{theorem}\label{t:multileadterm}
For every $\tau\geq 1$ and $\ell\geq 1$, there is an algorithm that computes the Input Multiset function $F_{IM}$ on all networks in $\bigcup_{n\geq \ell} \mathcal N^{\tau}_{n,\ell}$ and terminates in at most $\tau((\ell^2+\ell+1)(n-1)+1)$ rounds.
\end{theorem}
\begin{proof}
Due to \cref{p:time}, since $\tau$ is known and appears as a factor in the claimed running time, we can assume that $\tau=1$ without loss of generality. Also, note that determining $n$ is enough to compute $F_{IM}$. Indeed, if an agent determines $n$ at round~$t'$, it can wait until round~$\max\{t',\tau(2n-2)\}$ and run the algorithm in \cref{t:multileadstab}, which is guaranteed to give the correct output by that time.

In order to determine $n$ assuming that $\tau=1$, we let each agent run the algorithm in \cref{l:algorithm2} with input $(\mathcal V, \ell)$, where $\mathcal V$ is the vista of the agent at the current round $t'$. We will prove that this algorithm returns a positive integer (as opposed to \texttt{"Unknown"}) within $(\ell^2+\ell+1)(n-1)+1$ rounds, and the returned number is indeed the correct size of the system $n$.

\lstset{style=mystyle}
\begin{lstlisting}[caption={Solving the Counting problem with $\ell\geq 1$ leaders.\label{l:algorithm2}},captionpos=t,float,abovecaptionskip=-\medskipamount,mathescape=true]
# Input: a vista $\mathcal V$ and a positive integer $\ell$
# Output: either a positive integer $n$ or "Unknown"

Assign $n^\ast :=-1$ and $s:=0$ and $c:=0$
Let $b$ be the number of leader branches in $\mathcal V$
While $c\leq \ell-b$
   Assign $t^\ast := -1$
   For $x:=\ell$ downto $1$
      Assign $(n',t):=\text{ApproxCount}(\mathcal V, s, x, \ell)$      # see $\text{\cref{l:algorithm3}}$ in $\text{\cref{s:approxcount}}$
      Assign $t^\ast := \max\{t^\ast, t\}$
      If $n'=\;$"MissingNodes", return "Unknown"
      If $n'=\;$"StrandTooShort", break out of the for loop
      If $n'\neq\;$"WrongGuess"
         If $n^\ast=-1$, assign $n^\ast:=n'$
         Else if $n^\ast \neq n'$, return "Unknown"
         Assign $c:=c+1$ and break out of the for loop
   Assign $s := t^\ast+1$
Let $L_{t'}$ be the last level of $\mathcal V$
If $t'\geq t^\ast+n^\ast$, return $n^\ast$
Else return "Unknown"
\end{lstlisting}

\smallskip
\mypar{Algorithm description.} Let $b$ be the number of branches in $\mathcal V$ representing leader agents (Line~5). The initial goal of the algorithm is to compute $\ell-b+1$ approximations of $n$ using the information found in as many disjoint intervals $\mathcal L_1$, $\mathcal L_2$, \dots, $\mathcal L_{\ell-b+1}$ of levels of $\mathcal V$ (Lines~6--17).

If there are not enough levels in $\mathcal V$ to compute the desired number of approximations, or if the approximations are not all equal, the algorithm returns \texttt{"Unknown"} (Lines~11 and~15).

In order to compute an approximation of $n$, say in an interval of levels $\mathcal L_i$ starting at $L_s$, the algorithm goes through at most $\ell$ phases (Lines~8--16). The first phase begins by calling \texttt{ApproxCount} with starting level $L_s$ and $x=\ell$, i.e., the maximum possible value for the anonymity of a leader node (Line~9). Specifically, \texttt{ApproxCount} chooses a leader node in $\vartheta\in L_s$ and tries to estimate $n$ using as few levels as possible.

Let $(n',t)$ be the pair of values returned by \texttt{ApproxCount}. If $n'=\texttt{"MissingNodes"}$, this is evidence that $\mathcal V$ is still missing some relevant nodes, and therefore \texttt{"Unknown"} is immediately returned (Line~11). If $n'=\texttt{"StrandTooShort"}$, then a descendant of $\vartheta$ with multiple children in $\mathcal V$ was found, say at level $L_t$, before an approximation of $n$ could be determined. As this is an undesirable event, the algorithm moves $\mathcal L_i$ after $L_t$ and tries again to estimate $n$ (Line~12). If $n'=\texttt{"WrongGuess"}$, then $x$ may not be the correct anonymity of the leader node $\vartheta$ (see the description of \texttt{ApproxCount}), and therefore the algorithm calls \texttt{ApproxCount} again with the same starting level $L_s$, but now with $x=\ell-1$. If $n'=\texttt{"WrongGuess"}$ is returned again, then $x=\ell-2$ is tried, and so on. After all possible assignments down to $x=1$ have failed, the algorithm just moves $\mathcal L_i$ forward and tries again from $x=\ell$.

As soon as $n'$ is an integer (hence not an error message), it represents an approximation of $n$ that is stored in the variable $n^\ast$. If it is different from the previous approximations, then \texttt{"Unknown"} is returned (Line~15). Otherwise, the algorithm proceeds with the next approximation in a new interval of levels $\mathcal L_{i+1}$, and so on.

Finally, when $\ell-b+1$ approximations of $n$ (all equal to $n^\ast$) have been found, a correctness check is performed: the algorithm takes the last level~$L_{t^\ast}$ visited thus far; if the current round~$t'$ satisfies $t'\geq t^\ast+n^\ast$, then $n^\ast=n$ is accepted as correct and returned; otherwise \texttt{"Unknown"} is returned (Lines~18--20).

\smallskip
\mypar{Correctness and running time.} We will prove that, if the output of \cref{l:algorithm2} is not \texttt{"Unknown"}, then it is indeed the number of agents, i.e., $n^\ast=n$. Since the $\ell-b+1$ approximations of $n$ have been computed on disjoint intervals of levels, there is at least one such interval, say $\mathcal L_j$, where no leader node in the history tree has more than one child (because there can be at most $\ell$ leader branches). With the notation of \cref{l:approxcount}, this implies that $a(\vartheta)=a(\vartheta')$ whenever \texttt{ApproxCount} is called in $\mathcal L_j$. Also, since the option $x=\ell$ is tried first, the assumption $x\geq a(\vartheta)$ of \cref{l:approxcount} is initially satisfied. Note that \texttt{ApproxCount} cannot return $n'=\texttt{"MissingNodes"}$ or $n'=\texttt{"StrandTooShort"}$, or else $\mathcal L_j$ would not yield any approximation of $n$. Moreover, by \cref{l:approxcount}~(ii) and by the termination condition (Line~19), if $n'$ is not an error message while $x\geq a(\vartheta)$, then $n^\ast=n'=n$. On the other hand, due to \cref{l:approxcount}~(i), by the time $x=a(\vartheta)$ we necessarily have $n'\neq \texttt{"WrongGuess"}$ and therefore $n'$ is not an error message.

It remains to prove that \cref{l:algorithm2} actually gives an output other than \texttt{"Unknown"} within the claimed number of rounds; it suffices to show that it does so if it is executed at round~$t'=(\ell^2+\ell+1)(n-1)+1$. By \cref{xl:propa}, all nodes in the first $t'-n+1=\ell(\ell+1)(n-1)+1$ levels of the history tree are contained in the vista $\mathcal V$ at round $t'$. It is straightforward to prove by induction that the assumption $t'\geq s+(\ell+2)(n-1)$ of \cref{l:approxcount}~(iii) holds every time \texttt{ApproxCount} is invoked. Indeed, as long as this condition is satisfied, \cref{l:approxcount}~(iii) implies that $n'\neq \texttt{"MissingNodes"}$, and so \texttt{"Unknown"} is not returned at Line~11. Also, reasoning as in the previous paragraph, we infer that $n'\neq \texttt{"WrongGuess"}$ by the time $x=a(\vartheta)$. Thus, within $\Delta=(\ell+1)(n-1)$ levels, either a branching leader node is found (hence $n'=\texttt{"StrandTooShort"}$) or a new approximation of $n$ is computed (hence $n'$ is not an error message). Every time either event occurs, $s$ is increased by at most $\Delta$ at Line~17. Thus, after $\ell-1$ (or fewer) updates of $s$, we have $s\leq (\ell-1)\Delta= t' - (\ell+2)(n-1)-1$. Hence the condition of \cref{l:approxcount}~(iii) holds again, and the induction goes through for at least $\ell$ steps.

Observe that, since there can be at most $\ell-1$ branching leader nodes in $\mathcal V$, at least one approximation $n'>0$ of $n$ is computed within the $\ell$th iteration of the loop at Lines~6--17. This occurs within $t^\ast \leq \ell\Delta = t'-n$ levels. Because every level through $L_{t^\ast}$ is complete in $\mathcal V$, the condition $a(\vartheta)=a(\vartheta')$ holds in every interval $\mathcal L_1$, $\mathcal L_2$, \dots, $\mathcal L_{\ell-b+1}$. Hence, by \cref{l:approxcount}~(ii), every non-error approximation yielded by these intervals is equal to $n$. Thus, every time Line~15 is executed, we have $n^\ast=n'$, and the algorithm does not return \texttt{"Unknown"}. Finally, when Line~19 is reached, we have $t^\ast\leq t'-n=t'-n^\ast$, and therefore \texttt{"Unknown"} is not returned. Thus, an output other than \texttt{"Unknown"} is returned within the desired number of rounds.
\end{proof}

\subsection{Subroutine \texttt{ApproxCount}}\label{s:approxcount}
We will now define the subroutine $\texttt{ApproxCount}(\mathcal V, s, x, \ell)$ introduced in \cref{s:main} and invoked in \cref{l:algorithm2}. We will also give a proof to the technical \cref{l:approxcount}.

\smallskip
\mypar{Overview.} In $\texttt{ApproxCount}$, we are given a vista $\mathcal V$ with a strand of leader nodes hanging from the first leader node $\vartheta$ in level~$L_s$, where the anonymity $a(\vartheta)$ is an unknown number not greater than $\ell$. The algorithm begins by assuming that $a(\vartheta)$ is the given parameter $x$, and then it makes deductions on the anonymities of other nodes until it is able to either make an estimate $n'>0$ on the total number of agents or report failure in the form of an error message $n'\in\{\texttt{"MissingNodes"},\texttt{"StrandTooShort"},\texttt{"WrongGuess"}\}$. In particular, since the algorithm requires the existence of a long-enough strand hanging from $\vartheta$, it reports failure if some descendants of $\vartheta$ (in the relevant levels of $\mathcal V$) have more than one child.

An important difficulty that is unique to the multi-leader case is that, even if the vista $\mathcal V$ contains a long-enough strand of leader nodes, some nodes in the strand may still be branching in the history tree, and all branches except one may be missing from $\mathcal V$.

We remark that \texttt{ApproxCount} assumes that the network is 1-union-connected, as this is sufficient for the main result of \cref{s:main} to hold for any $\tau$-union-connected network (see the proof of \cref{t:multileadterm}).

\smallskip
\mypar{Discrepancy $\delta$.} \texttt{ApproxCount} is invoked with arguments $\mathcal V$, $s$, $x$, $\ell$, where $1\leq x\leq \ell$. If there are no nodes representing leaders in level $L_s$ of $\mathcal V$, the procedure immediately returns the error message $n'=\texttt{"MissingNodes"}$. Otherwise, we denote by $\vartheta$ the first such node. We define the \emph{discrepancy} $\delta$ as the ratio $x/a(\vartheta)$. Since $a(\vartheta)$ is not known in advance by the agent executing \texttt{ApproxCount}, then neither is $\delta$. Nonetheless, since both $x$ and $a(\vartheta)$ range between $1$ and $\ell$, it follows that $1/\ell\leq \delta\leq \ell$.

\smallskip
\mypar{Conditional anonymity.} The procedure \texttt{ApproxCount} begins by assuming that the anonymity of $\vartheta$ is $x$, and then makes deductions on other anonymities based on this assumption. Thus, for every node $v$ of $\mathcal V$, we distinguish between its actual anonymity $a(v)$ and its \emph{conditional anonymity}, defined as $a'(v)=\delta a(v)$. Essentially, conditional anonymities are values that \texttt{ApproxCount} computes under the initial assumption that $a'(\vartheta)=x=\delta a(\vartheta)$. Clearly, these values are correct if and only if $\delta=1$.

\smallskip
\mypar{Guessing conditional anonymities.} Let $u$ be a node of a history tree, and assume that the conditional anonymities of all its children $u_1$, $u_2$, \dots, $u_k$ have been computed: such a node $u$ is called a \emph{guesser}. If $v$ is not among the children of $u$ but is at their same level, and the red edge $\{v, u\}$ is present with multiplicity $m\geq 1$, we say that $v$ is \emph{guessable} by $u$ (see \cref{fig:guess}). In this case, we can make a \emph{guess} $g(v)$ on the conditional anonymity $a'(v)$:
\begin{equation}\label{xe:guess}
g(v)=\frac{a'(u_1)\cdot m_1+a'(u_2)\cdot m_2+\dots+a'(u_k)\cdot m_k}m,
\end{equation}
where $m_i$ is the multiplicity of the red edge $\{u_i, v'\}$ for all $1\leq i\leq k$, and $v'$ is the parent of $v$ (possibly, $m_i=0$). Note that $g(v)$ may not be an integer. Although a guess may be inaccurate, it never underestimates the conditional anonymity:

\begin{figure}
\centering
\includegraphics[scale=0.65]{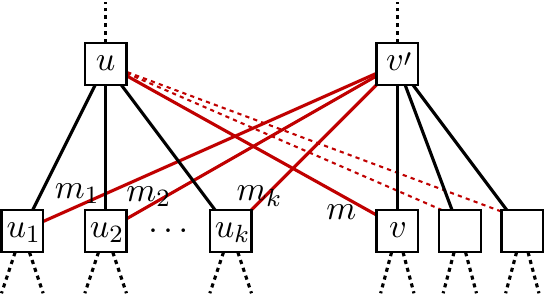}
\caption{If the anonymities of $u$, $u_1$, $u_2$, \dots, $u_k$ are known, then $v$ is \emph{guessable} by $u$.}
\label{fig:guess}
\end{figure}

\begin{lemma}\label{xl:guess2}
If $v$ is guessable, then $g(v)\geq a'(v)$. Moreover, if $v$ has no siblings, $g(v)=a'(v)$.
\end{lemma}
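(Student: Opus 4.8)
The plan is to prove the inequality $g(v)\ge a'(v)$ by a double-counting argument on the links present at a single round of the underlying dynamic network, and then to read off the equality case. First I would fix the level structure. Since $u$ is a guesser, its children $u_1,\dots,u_k$ lie one level below it; write $u\in L_t$, so that $u_1,\dots,u_k\in L_{t+1}$. Because red edges join nodes of adjacent levels, and the definition of $g(v)$ also involves red edges $\{u_i,v'\}$ with $v'$ the parent of $v$, the only placement consistent with all of these is $v\in L_{t+1}$ and $v'\in L_t$. Hence every red edge occurring in the guess formula records messages exchanged at round $r_{t+1}$, and the whole argument can be phrased in terms of the processes active at that round. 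I also note that $u$-processes and $v'$-processes are disjoint (distinct nodes of $L_t$), and that the $u_i$-processes partition the $u$-processes by \cref{obs:hist}.

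Next I would count, in two ways, the links present at round $r_{t+1}$ between processes represented by $v'$ and processes represented by $u$. On one hand, the red edge $\{u_i,v'\}$ with multiplicity $m_i$ means each of the $a(u_i)$ processes of $u_i$ receives exactly $m_i$ messages from processes of $v'$ at round $r_{t+1}$, and since the $u_i$-processes partition the $u$-processes, the number of $v'$--$u$ links is $\sum_{i=1}^{k} m_i\,a(u_i)$. On the other hand, the processes of $v$ form a subset of those of $v'$ (as $v$ is a child of $v'$), and the red edge $\{v,u\}$ of multiplicity $m$ says each of the $a(v)$ processes of $v$ receives $m$ messages from processes of $u$ at round $r_{t+1}$, so the number of $v$--$u$ links is $m\,a(v)$. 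Every $v$--$u$ link is also a $v'$--$u$ link, which gives $m\,a(v)\le\sum_{i=1}^k m_i\,a(u_i)$; moreover, if $v$ has no siblings, the processes of $v$ are exactly the processes of $v'$, so the two link sets coincide and the inequality becomes an equality. (This generalizes the exposed-nodes identity of \cite{DV22} used in the proof of \cref{l:algo1corr}.)

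Finally I would pass to conditional anonymities by multiplying through by the discrepancy $\delta=x/a(\tau)$ and using $a'(w)=\delta\,a(w)$ (for the $u_i$, whose conditional anonymities are computed, and for $v$): this yields $m\,a'(v)\le\sum_i m_i\,a'(u_i)$, hence $a'(v)\le\frac1m\sum_i m_i\,a'(u_i)\le\bigl\lceil\frac1m\sum_i m_i\,a'(u_i)\bigr\rceil=g(v)$, which is the first assertion. When $v$ has no siblings the chain collapses to $a'(v)=\frac1m\sum_i m_i\,a'(u_i)$; since this common value is a positive integer (the conditional anonymity of the sibling-free node $v$, which the procedure handles as an integer), the ceiling in the definition of $g(v)$ is vacuous and $g(v)=a'(v)$. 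The step I expect to be most delicate is the bookkeeping in the double count: one must be sure that $u_1,\dots,u_k$ are \emph{all} the children of $u$ (so that they genuinely partition its processes and the first count is exact, not merely a lower bound), that $v$, $v'$, $u$ sit in exactly the levels claimed so that the multiplicities $m$ and $m_i$ are all read at the same round $r_{t+1}$, and that the only containment invoked is $v\subseteq v'$ — it is precisely this containment, together with its collapse to equality in the sibling-free case, that drives both halves of the statement.
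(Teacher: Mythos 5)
Your proof is correct and follows essentially the same route as the paper's: a double count of the round-$(t+1)$ links between the processes of $u$ and those of $v'$, giving $\sum_i m_i\,a(u_i)$ on one side and bounding it below by $m\,a(v)$ on the other (the paper writes the second count as a sum over all children of $v'$ and drops the siblings' terms, which is the same containment you invoke), with equality when $v$ is an only child, then scaling by $\delta$ and applying the ceiling. The level bookkeeping and the caveat that $u_1,\dots,u_k$ must be all children of $u$ match the paper's setting, so no further changes are needed.
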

\begin{proof}
Let $u,v'\in L_t$, and let $P_1$ and $P_2$ be the sets of agents represented by $u$ and $v'$, respectively. By counting the links between $P_1$ and $P_2$ in $G_{t+1}$ in two ways, we have
$$\sum_i a(u_i)\,m_i = \sum_i a(v_i)\,m'_i\geq a(v)\,m,$$
where the two sums range over all children of $u$ and $v'$, respectively (note that $v=v_j$ for some $j$), and $m'_i$ is the multiplicity of the red edge $\{v_i,u\}$ (so, $m=m'_j$). Therefore, we have the inequality
$$a(v)\leq \frac{\sum_i a(u_i)\,m_i}{m}$$
which becomes an equality if $v$ has no siblings. Thus,
$$a'(v)=\delta a(v)\leq \frac{\sum_i \delta a(u_i)\,m_i}{m}= \frac{\sum_i a'(u_i)\,m_i}{m}=g(v)$$
and so $g(v)\geq a'(v)$, with equality if $v$ has no siblings.
\end{proof}

\mypar{Heavy nodes.} The subroutine \texttt{ApproxCount} assigns guesses in a \emph{well-spread} fashion, that is, in such a way that no two sibling nodes are assigned a guess. In other words, at most one of the children of each node is assigned a guess.

Suppose now that a node $v$ has been assigned a guess. We define its \emph{weight} $w(v)$ as the number of nodes in the subtree hanging from $v$ that have been assigned a guess (this includes $v$ itself). Recall that subtrees are determined by black edges only. We say that $v$ is \emph{heavy} if $w(v)\geq \lfloor g(v)\rfloor$.

\begin{lemma}\label{xl:limit}
Assume that $\delta\geq 1$. In a well-spread assignment of guesses, if $w(v)>a'(v)$, then some descendants of $v$ are heavy (the \emph{descendants} of $v$ are the nodes in the subtree hanging from $v$ other than $v$ itself).
\end{lemma}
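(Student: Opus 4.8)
The plan is to prove, by strong induction on the number of nodes of the subtree $T_u$ hanging from a node $u$, the following statement for \emph{every} node $u$: if no proper descendant of $u$ is heavy, then $W(u)\le a'(u)$, where $W(u)$ denotes the number of guessed nodes in $T_u$ (so $W(u)=w(u)$ when $u$ is itself guessed). The lemma is exactly the contrapositive of the case $u=v$. Two facts drive the induction. First, conditional anonymity is additive along the black tree: by \cref{obs:hist} the anonymity of a node equals the sum of the anonymities of its children, and multiplying through by the discrepancy $\delta$ yields $a'(u)=\sum_c a'(c)$ over the children $c$ of $u$; iterating, for any antichain $A$ of nodes inside $T_u$ one has $\sum_{z\in A}a'(z)\le a'(u)$, with equality only when every leaf of $T_u$ lies below some node of $A$. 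Second, by \cref{xl:guess2}, every guessed node $z$ satisfies $g(z)\ge a'(z)$, with equality when $z$ has no siblings.

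For the inductive step, let $c_1,\dots,c_k$ be the children of $u$; each $c_j$ has no heavy proper descendant, so $W(c_j)\le a'(c_j)$ by the inductive hypothesis. If $u$ is not guessed, summing gives $W(u)=\sum_j W(c_j)\le\sum_j a'(c_j)=a'(u)$, so assume $u$ is guessed. Let $C(u)$ be the set of \emph{nearest guessed descendants} of $u$ — the guessed proper descendants $z$ of $u$ with no guessed node strictly between $u$ and $z$. These form an antichain whose subtrees partition the guessed nodes of $T_u$ other than $u$, hence $W(u)=1+\sum_{z\in C(u)}W(z)$; moreover each $z\in C(u)$ is a non-heavy guessed node with no heavy proper descendant, so $W(z)\le a'(z)$ and $w(z)<g(z)$.

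It remains to absorb the stray $+1$. If the antichain $C(u)$ does not cover every leaf of $T_u$, the uncovered part contributes at least $1$ to the anonymities, so (using integrality of anonymities and $\delta\ge1$, i.e.\ $x\ge a(\tau)$, which is the regime in which the subroutine is invoked) $\sum_{z\in C(u)}a'(z)\le a'(u)-1$, whence $W(u)\le 1+(a'(u)-1)=a'(u)$. If instead $C(u)$ covers every leaf of $T_u$, let $z^\ast$ be the deepest node of $C(u)$ — unique, by well-spreadness. I claim $z^\ast$ has no siblings: if its parent had another child $z'$, then $z'$ would be unguessed by well-spreadness, and any guessed node of $T_{z'}$ would again belong to $C(u)$ and be at least as deep as $z^\ast$, contradicting uniqueness of the deepest guessed node at each level; so $T_{z'}$ would contain no guessed node and the leaves below $z'$ would be uncovered, a contradiction. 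Hence $z^\ast$ has no siblings, $g(z^\ast)=a'(z^\ast)$ by \cref{xl:guess2}, and $z^\ast$ not being heavy forces $W(z^\ast)=w(z^\ast)<a'(z^\ast)$, i.e.\ $W(z^\ast)\le a'(z^\ast)-1$; summing over $C(u)$ gives $W(u)=1+\sum_{z\in C(u)}W(z)\le\sum_{z\in C(u)}a'(z)\le a'(u)$.

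I expect the covering case — and specifically the claim that the deepest nearest guessed descendant has no siblings — to be the main obstacle: it is the only point where the well-spread hypothesis is genuinely used, and the whole argument hinges on arranging the antichain/coverage bookkeeping so that the extra $+1$ contributed by $u$ itself is always cancelled, either by uncovered conditional mass or by the exact guess at a sibling-free node. The base case $W(v)=1$ is subsumed in the non-covering case, and is in any event immediate since it would force $a'(v)<1$, impossible as anonymities (hence conditional anonymities, in the regime $\delta\ge1$) are positive.
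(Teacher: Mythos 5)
Your overall strategy---decomposing at the nearest guessed descendants, using the antichain bound on conditional anonymities together with \cref{xl:guess2} and well-spreadness, and extracting the needed slack at the unique deepest guessed descendant---is essentially the paper's argument, recast as a contrapositive induction on subtree size instead of a contradiction argument with induction on $w(v)$. However, your covering case has a genuine gap: the claim that $z^\ast$ ``has no siblings'' is established only \emph{inside the view}. The whole discussion of $z'$, of guessed nodes in $T_{z'}$, and of uncovered leaves takes place in $\mathcal V$, and it says nothing about children of $z^\ast$'s parent that exist in the underlying history tree but are missing from $\mathcal V$ (views can miss entire branches; this is exactly why the paper distinguishes complete from incomplete isles). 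The equality case of \cref{xl:guess2} is proved by double-counting links over \emph{all} children of the parent in the history tree, so it needs $z^\ast$ to have no siblings at all: an unseen sibling with a red edge to the guesser can make $g(z^\ast)>a'(z^\ast)$, and then non-heaviness of $z^\ast$ no longer yields $W(z^\ast)\le a'(z^\ast)-1$, so the stray $+1$ is not absorbed. For the same reason, ``$C(u)$ covers every leaf of $T_u$'' does not imply that $\sum_{z\in C(u)}a'(z)$ is close to $a'(u)$, so the two cases of your dichotomy do not together cover what can actually happen.

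The paper sidesteps this by first forcing its chain of inequalities to collapse into the exact identity $a(v)=\sum_i a(v_i)$ between \emph{actual} anonymities; since anonymities count all processes, seen or unseen, this identity excludes any sibling of the deepest guessed descendant in the history tree, and only then is \cref{xl:guess2} invoked. Your proof is repaired by replacing the view-leaf-coverage dichotomy with the anonymity dichotomy: either $\sum_{z\in C(u)}a(z)\le a(u)-1$, in which case the slack absorbs the $+1$ directly, or $\sum_{z\in C(u)}a(z)=a(u)$, in which case the disjointness argument shows $z^\ast$ has no siblings in the history tree and your computation goes through. A smaller caveat: your appeal to $\delta\ge 1$ (and, implicitly, to integrality of conditional anonymities when passing from $W(z^\ast)<a'(z^\ast)$ to $W(z^\ast)\le a'(z^\ast)-1$) is not licensed by the statement of the lemma, since \texttt{ApproxCount} is also invoked with $x<a(\tau)$ and \cref{xl:bound1} is used for those calls as well; the paper's own proof, however, makes analogous implicit assumptions (it deduces $w(v)>1$ and $a'(v)\le w(v)-1$ from $w(v)>a'(v)$), so this is a shared weakness rather than one specific to your write-up.
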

\begin{proof}
Our proof is by well-founded induction on $w(v)$. Assume for a contradiction that no descendants of $v$ are heavy. Observe that some descendants of $v$ have been assigned guesses, because $w(v)\geq 2$. Indeed,
$$w(v)>a'(v)=\delta a(v) \geq a(v)\geq 1.$$
Thus, $v$ has $k\geq 1$ ``immediate'' descendants $v_1$, $v_2$, \dots, $v_k$ that have been assigned guesses. That is, for all $1\leq i\leq k$, the node $v_i$ has been assigned a guess, and no internal nodes of the unique black path with endpoints $v$ and $v_i$ have been assigned guesses.

By the basic properties of history trees, $a(v)\geq \sum_i a(v_i)$, and therefore $a'(v)\geq \sum_i a'(v_i)$. Also, the induction hypothesis implies that $w(v_i) \leq a'(v_i)$ for all $1\leq i\leq k$, or else one of the $v_i$'s would have a heavy descendant. Therefore,
$$w(v)-1 = \sum_i w(v_i) \leq \sum_i a'(v_i)\leq a'(v) < w(v).$$
Observe that all the terms in this chain of inequalities are between the two consecutive integers $w(v)-1$ and $w(v)$. It follows that
$$w(v_i)\leq a'(v_i)< w(v_i)+1$$
for all $1\leq i\leq k$ (recall that $a'(v_i)$ may not be an integer). Also,
$$a'(v)-1< \sum_i a'(v_i)\leq a'(v).$$
However, since every conditional anonymity is an integer multiple of the discrepancy $\delta\geq 1$, we conclude that $a'(v)=\sum_i a'(v_i)$. Hence, $a(v)=\sum_i a(v_i)$.

Let $1\leq d\leq k$ be such that $v_d$ has maximum depth. Since the assignment of guesses is well spread, no sibling of $v_d$ has been assigned a guess. However, since $a(v)=\sum_i a(v_i)$, it follows that $v_d$ has no siblings at all, for otherwise $a(v)>\sum_i a(v_i)$. Due to \cref{xl:guess2}, we have $g(v_d)=a'(v_d)$. Thus,
$$w(v_d)\leq a'(v_d) = g(v_d) < w(v_d)+1,$$
which implies that $\lfloor g(v_d)\rfloor = w(v_d)$, and so $v_d$ is heavy.
\end{proof}

\mypar{Correct guesses.} We say that a node $v$ has a \emph{correct} guess if $v$ has been assigned a guess and $g(v)=a'(v)$. The next lemma gives a criterion to determine if a guess is correct.

\begin{lemma}\label{xl:crit}
Assume that $\delta\geq 1$. In a well-spread assignment of guesses, if a node $v$ is heavy and no descendant of $v$ is heavy, then $v$ has a correct guess or the guess on $v$ is not an integer.
\end{lemma}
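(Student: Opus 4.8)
The plan is to obtain the claim by chaining the two preceding lemmas, since ``heavy'' already packages one of the two inequalities we need. Write $v$ for the node in question; since $v$ is heavy, it has been assigned a guess, so $g(v)$ and the weight $w(v)$ are both defined, and heaviness means precisely $w(v)\geq g(v)$.

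First I would invoke \cref{xl:guess2}: as $v$ has been assigned a guess it is in particular guessable, so $g(v)\geq a'(v)$. Combining this with heaviness gives the chain $w(v)\geq g(v)\geq a'(v)$. Second, I would apply the contrapositive of \cref{xl:limit}: the hypothesis ``no descendant of $v$ is heavy'' forces $w(v)\leq a'(v)$ (otherwise \cref{xl:limit} would produce a heavy descendant). Putting the two bounds together yields $w(v)=g(v)=a'(v)$, and in particular $g(v)=a'(v)$, which is exactly the statement that $v$ has a correct guess.

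There is essentially no hard step here: the only things to check are bookkeeping, namely that the hypotheses of \cref{xl:guess2} and \cref{xl:limit} are met. For \cref{xl:guess2} this is automatic because ``heavy'' is only defined for nodes that have been assigned a guess (hence guessable). For \cref{xl:limit} one only needs the assignment of guesses to be well spread, which is a standing assumption of the lemma statement, and that $a'(v)$ is meaningful — but $a'(v)=\delta a(v)$ is a well-defined quantity regardless of whether the algorithm ever computes it. The one point worth stating explicitly, to avoid any confusion, is that we are using \cref{xl:limit} in contrapositive form; with that remark the argument is a two-line deduction.
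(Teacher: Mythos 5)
Your argument is correct and is essentially the paper's own proof: both combine the definition of heaviness ($g(v)\leq w(v)$), \cref{xl:limit} in contrapositive form ($w(v)\leq a'(v)$ since $v$ has no heavy descendants), and \cref{xl:guess2} ($a'(v)\leq g(v)$) into one chain of inequalities forcing $g(v)=a'(v)$. The bookkeeping remarks you add (well-spreadness for \cref{xl:limit}, and that a guessed node was guessable so \cref{xl:guess2} applies) match the paper's implicit usage, so there is nothing to fix.
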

\begin{proof}
If $g(v)$ is not an integer, there is nothing to prove. Otherwise, because $v$ is heavy, $g(v)=\lfloor g(v)\rfloor\leq w(v)$. Since $v$ has no heavy descendants, \cref{xl:limit} implies $w(v)\leq a'(v)$. Also, by \cref{xl:guess2}, $a'(v)\leq g(v)$. We conclude that
$$g(v)\leq w(v)\leq a'(v)\leq g(v).$$
Therefore $g(v)=a'(v)$, and $v$ has a correct guess.
\end{proof}

When the criterion in \cref{xl:crit} applies to a node $v$, we say that $v$ has been \emph{counted}. So, counted nodes are nodes that have been assigned a guess, which was then confirmed to be the correct conditional anonymity.

\smallskip
\mypar{Cuts and isles.} Fix a vista $\mathcal{V}$ of a history tree $\mathcal H$. A set of nodes $C$ in $\mathcal{V}$ is said to be a \emph{cut} for a node $v\notin C$ of $\mathcal{V}$ if two conditions hold: (i)~for every leaf $v'$ of $\mathcal{V}$ that lies in the subtree hanging from $v$, the black path from $v$ to $v'$ contains a node of $C$, and (ii)~no proper subset of $C$ satisfies condition~(i). A cut for the root $r$ whose nodes are all counted is said to be a \emph{counting cut} (see \cref{fig:cuts}).

Let $s$ be a counted node in $\mathcal{V}$, and let $F$ be a cut for $s$ whose nodes are all counted. Then, the set of nodes spanned by the black paths from $s$ to the nodes of $F$ is called \emph{isle}; $s$ is the \emph{root} of the isle, while each node in $F$ is a \emph{leaf} of the isle. The nodes in an isle other than the root and the leaves are called \emph{internal}. An isle is said to be \emph{trivial} if it has no internal nodes.

If $s$ is an isle's root and $F$ is its set of leaves, we have $a(s)\geq \sum_{v\in F} a(v)$, which is equivalent to $a'(s)\geq \sum_{v\in F} a'(v)$. Note that equality does not necessarily hold, because $s$ may have some descendants in the history tree $\mathcal H$ that do not appear in the vista $\mathcal{V}$. If equality holds, then the isle is said to be \emph{complete}. In this case, given the conditional anonymities of $s$ and of all nodes in $F$, we can easily compute the conditional anonymities of all the internal nodes by adding them up starting from the nodes in $F$ and working our way up to $s$.

\begin{figure}
\centering
\includegraphics[scale=0.75]{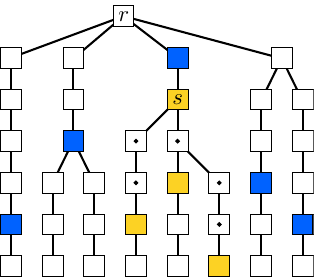}
\caption{The first levels of (a vista of) a history tree, where the colored nodes are counted. The blue nodes form a \emph{counting cut}, and the orange ones define a non-trivial \emph{isle} with root $s$, where the nodes with a dot are internal.}
\label{fig:cuts}
\end{figure}

\lstset{style=mystyle}
\begin{lstlisting}[caption={The subroutine \texttt{ApproxCount} invoked in \cref{l:algorithm2}.\label{l:algorithm3}},captionpos=t,float,abovecaptionskip=-\medskipamount,mathescape=true]
# Input: a vista $\mathcal V$ and three integers $s$, $x$, $\ell$
# Output: a pair $(n', t)$, where $n'$ is an integer or an error message, and $t$ is an integer

Let $L_{-1}$, $L_0$, $L_1$, $\dots$ be the levels of $\mathcal V$
Assign $t:=s$
If $L_s$ does not contain any leader nodes, return $(\texttt{"MissingNodes"},t)$
Let $\vartheta$ be the first leader node in $L_s$
Mark all nodes in $\mathcal V$ as not guessed, not counted, and not locked
Assign $u:=\vartheta$; assign $a'(u) := x$; mark $u$ as counted
While $u$ has a unique child $u'$ in $\mathcal V$
   Assign $u:=u'$; assign $a'(u) := x$; mark $u$ as counted
While there are guessable, non-counted, non-locked nodes in $\mathcal V$ and a counting cut has not been found
   Let $v$ be a guessable, non-counted, non-locked node of smallest depth in $\mathcal V$
   Let $L_{t'}$ be the level of $v$; assign $t:=\max\{t, t'\}$
   Assign a guess $g(v)$ to $v$ as in $\text{\cref{xe:guess}}$; mark $v$ as guessed; lock $v$ and all its siblings
   Let $P_v$ be the black path from $v$ to its ancestor in $L_s$
   If there is a heavy node in $P_v$
      Let $v'$ be the heavy node in $P_v$ of maximum depth
      If $g(v')$ is not an integer, return $(\texttt{"WrongGuess"},t)$
      Assign $a'(v') := g(v')$; mark $v'$ as counted and not guessed; unlock $v'$ and all its siblings
      If $v'$ is the root or a leaf of a non-trivial complete isle $I$
         For each internal node $w$ of $I$
            Assign $a'(w) := \sum_{w'\text{ leaf of }I\text{ and descendant of }w} a'(w')$
            Mark $w$ as counted, not guessed, and not locked
If no counting cut has been found, return $(\texttt{"StrandTooShort"},t)$
Else
   Let $C$ be a counting cut (between $L_s$ and $L_t$)
   Assign $n' := \sum_{v \in C}a'(v)$
   Assign $\ell' := \sum_{v \text{ leader node in } C}a'(v)$
   If $\ell'<\ell$, return $(\texttt{"MissingNodes"},t)$
   Else if $\ell'>\ell$, return $(\texttt{"WrongGuess"},t)$
   Else return $(n',t)$
\end{lstlisting}

\smallskip
\mypar{Overview of \texttt{ApproxCount}.} Our subroutine \texttt{ApproxCount} is found in \cref{l:algorithm3}. It repeatedly assigns guesses to nodes based on known conditional anonymities, starting from $\vartheta$ and its descendants. Eventually some nodes become heavy, and the criterion in \cref{xl:crit} causes the deepest of them to become counted. In turn, counted nodes eventually form isles; the internal nodes of complete isles are marked as counted, which gives rise to more guessers, and so on. In the end, if a counting cut is created, the algorithm checks whether the conditional anonymities of the leader nodes in the cut add up to $\ell$.

\smallskip
\mypar{Algorithmic details of \texttt{ApproxCount}.} The algorithm \texttt{ApproxCount} uses flags to mark nodes as ``guessed'', ``counted'', or ``locked''; initially, no node is marked. Thanks to these flags, we can check if a node $u\in \mathcal V$ is a guesser: let $u_1$, $u_2$, \dots, $u_k$ be the children of $u$ that are also in $\mathcal V$ (recall that a vista does not contain all nodes of a history tree); $u$ is a \emph{guesser} if and only if it is marked as counted, all the $u_i$'s are marked as counted, and $a'(u)=\sum_i a'(u_i)$ (which implies $a(u)=\sum_i a(u_i)$, and thus no children of $u$ are missing from $\mathcal V$).

\texttt{ApproxCount} will ensure that nodes marked as guessed are well-spread at all times: if a node of $\mathcal V$ is guessed, all its siblings (including itself) become \emph{locked}. While a node is locked, it cannot receive guesses. As defined earlier, a node $v$ in level $L_t$ of $\mathcal V$ is \emph{guessable} if there is a guesser $u$ in $L_{t-1}$ and the red edge $\{v,u\}$ is present in $\mathcal V$ with positive multiplicity.

The algorithm starts by assigning a conditional anonymity $a'(\vartheta)=x$ to the first leader node $\vartheta\in L_s$. If no leader node exists in $L_s$, the error message \texttt{"MissingNodes"} is immediately returned (Line~6). The algorithm also finds the longest strand $P_\vartheta$ hanging from $\vartheta$, assigns the same conditional anonymity $x$ to all of its nodes (including the unique child of the last node of $P_\vartheta$) and marks them as counted (Lines~7--11). Then, as long as there are nodes that can receive a guess and no counting cut has been found, the algorithm keeps assigning guesses to nodes. A node can receive a guess if it is guessable, not counted, and not locked (Line~12).

When a guess is made on a node $v$, the node itself and all of its siblings become locked (Line~15). This is to ensure that guessed nodes will always be well spread. Moreover, as a result of $v$ becoming guessed, some nodes in the path from $v$ to its ancestor in $L_s$ may become heavy; if this happens, let $v'$ be the deepest heavy node (Line~18). If $g(v')$ is not an integer, the algorithm returns the error message \texttt{"WrongGuess"} (Line~19). (As we will prove later, this can only happen if $\delta\neq 1$ or some nodes in the strand $P_\vartheta$ have children that are not in the vista $\mathcal V$.) Otherwise, if $g(v')$ is an integer, the algorithm marks $v'$ as counted and not guessed, in accordance with \cref{xl:crit}. Also, since $v'$ is no longer guessed, its siblings become unlocked and are again eligible to receive guesses (Line~20). Furthermore, if the newly counted node $v'$ is either the root or a leaf of a complete isle $I$, then the conditional anonymities of all the internal nodes of $I$ are determined, and such nodes are marked as counted (Lines~21--24).

In the end, the algorithm performs a ``reality check'' and possibly returns an estimate $n'$ of $n$ as follows. If no counting cut was found, the algorithm returns the error message \texttt{"StrandTooShort"} (Line~25). Otherwise, a counting cut $C$ has been found. The algorithm computes $n'$ (respectively, $\ell'$) as the sum of the conditional anonymities of all nodes (respectively, all leader nodes) in $C$. If $\ell'= \ell$, then the algorithm returns $n'$ (Line~32). Otherwise, it returns the error message \texttt{"MissingNodes"} if $\ell'<\ell$ (Line~30) or the error message \texttt{"WrongGuess"} if $\ell'>\ell$ (Line~31). In all cases, the algorithm also returns the maximum depth $t$ of a guessed or counted node (excluding $\vartheta$ and its descendants), or $s$ if no such node exists.

\smallskip
\mypar{Consistency condition.} In order for our algorithm to work properly, a condition has to be satisfied whenever a new guess is made. Indeed, note that all of our previous lemmas on guesses rest on the assumption that the conditional anonymities of a guesser and all of its children are known. However, while the node $\vartheta$ has a known conditional anonymity (by definition, $a'(\vartheta)=x$), the same is not necessarily true of the descendants of $\vartheta$ and all other nodes that are eventually marked as counted by the algorithm. This justifies the following definition.

\begin{condition}\label{cond:1}
During the execution of \textup{$\texttt{ApproxCount}$}, if a guess is made on a node $v$ at level $L_{t'}$ of $\mathcal V$, then $\vartheta$ has a (unique) descendant $\vartheta'\in L_{t'}$ and $a(\vartheta)=a(\vartheta')$.
\end{condition}

As we will prove next, as long as \cref{cond:1} is satisfied during the execution of \texttt{ApproxCount}, all of the nodes between levels $L_s$ and $L_t$ that are marked as counted do have correct guesses (i.e., their guesses coincide with their conditional anonymities). Note that in general there is no guarantee that \cref{cond:1} will be satisfied at any point; it is the job of our main Counting algorithm in \cref{s:main} to ensure that the condition is satisfied often enough for our computations to be successful.

\smallskip
\mypar{Correctness.} In order to prove the correctness of \texttt{ApproxCount}, it is convenient to show that it also maintains some \emph{invariants}, i.e., properties that are always satisfied as long as some conditions are met.

\begin{lemma}\label{xl:invariants}
Assume that $\delta\geq 1$. Then, as long as \cref{cond:1} is satisfied, the following statements hold.
\begin{itemize}
\item[(i)] The nodes marked as guessed are always well spread.
\item[(ii)] Whenever Line~13 is reached, there are no heavy nodes.
\item[(iii)] Whenever Line~13 is reached, all complete isles are trivial.
\item[(iv)] The conditional anonymity of any node between $L_s$ and $L_t$ that is marked as counted has been correctly computed.
\end{itemize}
\end{lemma}
\begin{proof}
Statement~(i) is true by design with no additional assumptions, because the algorithm only makes new guesses on unlocked nodes. In turn, a node is locked if and only if it is marked as guessed or has a sibling that is marked as guessed. Thus, no two nodes marked as guessed can be siblings, e.g., guesses are well spread.

All other statements can be proved collectively by induction. They certainly hold the first time Line~13 is ever reached. Indeed, the only nodes marked as counted up to this point are $\vartheta$ and some of its descendants, which are assigned the conditional anonymity $x$. Since $s=t$ and $\vartheta$ has conditional anonymity $x$ by definition, statement~(iv) is satisfied. Note that some descendants of $\vartheta$ that are marked as counted may not have been assigned their correct conditional anonymities, because some branches of the history tree may not appear in $\mathcal V$. However, no guesses have been made yet, and therefore no nodes are heavy; thus, statement~(ii) is satisfied. Moreover, the only isles are formed by $\vartheta$ and its descendants, and are obviously all trivial; so, statement~(iii) is satisfied.

Now assume that statements~(ii), (iii), and~(iv) are all satisfied up to some point in the execution of the algorithm. In particular, due to statement~(iv), all nodes that have been identified as guessers by the algorithm up to this point were in fact guessers according to our definitions. For this reason, all guesses have been computed as expected, and all of our lemmas on guesses apply (because $\delta\geq 1$).

The next guess on a new node $v$ is performed properly, as well. Indeed, \cref{cond:1} states that $\vartheta$ has a descendant $\vartheta'$ at the same level as $v$ such that $a(\vartheta')=a(\vartheta)$, and therefore $a'(\vartheta')=a'(\vartheta)=x$; so, $\vartheta'$ has the correct conditional anonymity. Thus, regardless of what the guesser of $v$ is (either the parent of $\vartheta'$ or some other counted node), the guess at Line~15 is computed properly.

Hence, if a node is identified as heavy at Lines~17--18, it is indeed heavy according to our definitions. Because statement~(ii) held before making the guess on $v$, it follows that any heavy node must have been created after the guess, and therefore should be on the path $P_v$, defined as in Line~16. If no heavy nodes are found on the path, then nothing is done and statements~(ii), (iii), and~(iv) keep being true.

Otherwise, by \cref{xl:crit}, the deepest heavy node $v'$ on $P_v$ has a correct guess and can be marked as counted, provided that the guess is an integer. Thus, statement~(iv) is still true after Line~20. At this point, there are no heavy nodes left, because $v'$ is no longer guessed and all of its ancestors along $P_v$ end up having the same weight they had before the guess on $v$ was made.

Now, because statement~(iii) held before marking $v'$ as counted, there can be at most one non-trivial complete isle, and $v'$ must be its root or one of its leaves. Note that, due to statement~(iv), any isle $I$ identified as complete at Line~21 is indeed complete according to our definitions. Since $I$ is complete, computing the conditional anonymities of its internal nodes as in Line~23 is correct, and therefore statement~(iv) is still true after Line~24. Also, the unique non-trivial isle $I$ is reduced to trivial isles, and statement~(iii) holds again. Finally, since Lines~21--24 may only cause weights to decrease, statement~(ii) keeps being true.
\end{proof}

\mypar{Running time.} We will now study the running time of \texttt{ApproxCount}. We will prove two lemmas that allow us to give an upper bound on the number of rounds it takes for the algorithm to return an output.

Recall that a node $v$ of the history tree $\mathcal H$ is said to be missing from level $L_i$ of the vista $\mathcal V$ if $v$ is at the level of $\mathcal H$ corresponding to $L_i$ but does not appear in $\mathcal V$. Clearly, if a level of $\mathcal V$ has no missing nodes, all previous levels have no missing nodes, either.

\begin{lemma}\label{xl:bound1}
Assume that $\delta\geq 1$. Then, as long as \cref{cond:1} holds, whenever Line~13 is reached, at most $\delta (n-1)$ levels contain locked nodes.
\end{lemma}
\begin{proof}
Note that the assumptions of \cref{xl:invariants} are satisfied, and therefore all the conditional anonymities and weights assigned to nodes up to this point are correct according to our definitions.

We will begin by proving that, if the subtree hanging from a node $v$ of $\mathcal V$ contains more than $a'(v)$ nodes marked as guessed, then it contains a node $v'$ marked as guessed such that $w(v')>a'(v')$. The proof is by well-founded induction based on the subtree relation in $\mathcal V$. If $v$ is guessed, then we can take $v'=v$, for in this case $w(v)>a'(v)$. Otherwise, by the pigeonhole principle, $v$ has at least one child $u$ whose hanging subtree contains more than $a'(u)$ guessed nodes. Thus, $v'$ is found in this subtree by the induction hypothesis.

Now, assume for a contradiction that more than $\delta (n-1)$ levels of $\mathcal V$ contain locked nodes; in particular, $\mathcal V$ contains more than $\delta (n-1)$ nodes marked as guessed. Consider the nodes in level $L_s$ other than $\vartheta$; the sum of their anonymities is at most $n-a(\vartheta)$ (note that some nodes may be missing from $L_s$), and so the sum of their conditional anonymities is at most $\delta (n - a(\vartheta))\leq \delta (n-1)$. Thus, by the pigeonhole principle, there is a a node $v\neq \vartheta$ in $L_s$ whose hanging subtree contains more than $a'(v)$ nodes marked as guessed.

Therefore, as proved above, the subtree hanging from $v$ contains a guessed node $v'$ such that $w(v')>a'(v')$. Since $\delta\geq 1$ and \cref{xl:invariants}~(i) holds, we can apply \cref{xl:limit} to $v'$, which implies that there exist heavy nodes. In turn, this contradicts \cref{xl:invariants}~(ii). We conclude that at most $\delta (n-1)$ levels contain locked nodes.
\end{proof}

\begin{lemma}\label{xl:bound2}
Assume that $\delta\geq 1$. Then, as long as level $L_{t}$ of $\mathcal V$ is not missing any nodes (where $t$ is defined and updated as in \textup{$\texttt{ApproxCount}$}), whenever Line~13 is reached, there are at most $n-2$ levels in the range from $L_{s+1}$ to $L_{t}$ where all guessable nodes are already counted.
\end{lemma}
\begin{proof}
By definition of $t$, either $t=s$ or the algorithm has performed at least one guess on a node at level $L_t$ with a guesser at level $L_{t-1}$. It is easy to prove by induction that the first guesser to perform a guess on this level must be the unique descendant $\vartheta'\in L_{t-1}$ of the selected leader node $\vartheta\in L_s$. Moreover, both $\vartheta'$ and its unique child in $\mathcal V$ have been assigned conditional anonymity $x$ at Lines~9--11, and the same is true of all nodes in the black path $P_\vartheta$ from $\vartheta$ to $\vartheta'$, which is a strand in $\mathcal V$. Since level $L_t$ is not missing any nodes, then each of the nodes in $P_\vartheta$ has a unique child in the history tree, as well. It follows that all descendants of $\vartheta$ up to level $L_t$ have the same anonymity as $\vartheta$. Also, by definition of $t$ and the way it is updated (Line~14), no guesses have been made on nodes at levels deeper than $L_t$, and hence \cref{cond:1} is satisfied up to this point. Thus, \cref{xl:invariants} applies.

Observe that there are no counting cuts, or Line~13 would not be reachable. Due to Lines~9--11, all of the nodes in $P_\vartheta$ initially become guessers. Hence, all levels between $L_s$ and $L_{t-1}$ must have a non-empty set of guessers at all times. Consider any level $L_i$ with $s< i\leq t$ such that all the guessable nodes in $L_i$ are already counted. Let $S$ be the set of guessers in $L_{i-1}$; note that not all nodes in $L_{i-1}$ are guessers, or else they would give rise to a counting cut. Since the network is 1-union-connected, there is a red edge $\{u,v\}$ (with positive multiplicity) such that $u\in S$ and the parent of $v$ is not in $S$. By definition, the node $v$ is guessable; therefore, it is counted. Also, since the parent of $v$ is not a guesser, $v$ must have a non-counted parent or a non-counted sibling; note that such a non-counted node is in $\mathcal V$, because the levels up to $L_t$ are not missing any nodes.

We have proved that every level between $L_{s+1}$ and $L_{t}$ where all guessable nodes are counted contains a counted node $v$ having a parent or a sibling that is not counted: we call such a node $v$ a \emph{bad} node. To conclude the proof, it suffices to show that there are at most $n-2$ bad nodes between $L_{s+1}$ and $L_{t}$. Observe that no nodes in $P_\vartheta$ can be bad.

We will prove that, if a subtree $\mathcal W$ of $\mathcal V$ contains the root $r$, the leader node $\vartheta$, no counting cuts, and no non-trivial isles, then $\mathcal W$ contains at most $f-1$ bad nodes, where $f$ is the number of leaves of $\mathcal W$ not in the subtree hanging from $\vartheta$. We stress that, in the context of $\mathcal W$, a bad node is defined as a counted node in $\mathcal W$ (other than $\vartheta$ and its descendants) that has a non-counted parent or a non-counted sibling in $\mathcal W$.

The proof is by induction on $f$. The case $f=0$ is impossible, because the single node $\vartheta$ yields a counting cut. Thus, the base case is $f=1$, which holds because any bad node $v$ in $\mathcal W$ and not in $P_\vartheta$ gives rise to the counting cut $\{\vartheta, v\}$ (recall that a bad node is counted by definition).

For the induction step, let $v$ be a bad node of maximum depth in $\mathcal W$. Let $(v_1, v_2,\dots, v_k)$ be the black path from $v_1=v$ to the root $v_k=r$, and let $1<i\leq k$ be the smallest index such that $v_i$ has more than one child in $\mathcal W$ ($i$ must exist, because this path eventually joins the black path from $\vartheta$ to $r$). Let $\mathcal W'$ be the tree obtained by deleting the black edge $\{v_{i-1},v_i\}$ from $\mathcal W$, as well as the subtree hanging from it. 

Note that the induction hypothesis applies to $\mathcal W'$: since $v_1$ is counted, and each of the nodes $v_2$, \dots, $v_{i-1}$ has a unique child in $\mathcal W$, the removal of $\{v_{i-1},v_i\}$ does not create counting cuts or non-trivial isles. Also, $v_2$ is not counted (unless $v_2=v_i$), because $v_1$ is bad. Furthermore, none of the nodes $v_3$, \dots, $v_{i-1}$ is counted, or else $v_2$ would be an internal node of a (non-trivial) isle in $\mathcal W$. Therefore, none of the nodes $v_2$, \dots, $v_{i-1}$ is counted. In particular, none of these nodes is bad in $\mathcal W$.

Moreover, a node of $\mathcal W'$ is bad if and only if it is bad in $\mathcal W$. This is trivial for all nodes, except for the siblings of $v_{i-1}$, which require a careful proof. Let $u\neq v_{i-1}$ be a sibling of $v_{i-1}$ in $\mathcal W$. If $u$ is not counted, then it is not a bad node in $\mathcal W$ nor in $\mathcal W'$. Hence, let us assume that $u$ is counted. If $v_i$ is not counted, then $u$ is a bad node in $\mathcal W$ and in $\mathcal W'$. Thus, let us assume that $v_i$ is counted.

Assume for a contradiction that all children of $v_i$ other than $v_{i-1}$ are counted. Then, if $i>2$, the nodes $v$, $v_i$, and the children of $v_i$ (other than $v_{i-1}$) are all counted and form a non-trivial isle in $\mathcal W$, which is impossible. On the other hand, if $i=2$, then $v_i$ and all its children (including $v=v_1$) are counted, which contradicts the fact that $v$ is a bad node. Thus, we have proved that $v_i$ must have a non-counted child in $\mathcal W'$ (hence in $\mathcal W$), and therefore $u$ is a bad node in both $\mathcal W$ and $\mathcal W'$.

It follows that $\mathcal W'$ has exactly one less bad node than $\mathcal W$ and at most $f-1$ leaves (because there is at least one leaf in the subtree of $\mathcal W$ hanging from $v$). Thus, the induction hypothesis implies that $\mathcal W'$ contains at most $f-2$ bad nodes, and therefore $\mathcal W$ contains at most $f-1$ bad nodes.

Observe that the subtree $\mathcal V'$ of $\mathcal V$ formed by all levels up to $L_{t}$ satisfies all of the above conditions, as it contains $\vartheta\in L_s$, the root $r$, and has no counting cuts, because a counting cut for $\mathcal V'$ would be a counting cut for $\mathcal V$, as well (recall that $\mathcal V$ has no counting cuts). Also, \cref{xl:invariants}~(iii) ensures that $\mathcal V'$ contains no non-trivial complete isles. However, since no nodes are missing from the levels of $\mathcal V'$, all isles in $\mathcal V'$ are complete, and thus must be trivial. We conclude that, if $\mathcal V'$ has $f$ leaves not in the subtree hanging from $\vartheta$, it contains at most $f-1$ bad nodes. Since such leaves induce a partition of the at most $n-1$ agents not represented by $\vartheta$, we have $f\leq n-1$, implying that the number of bad nodes up to $L_{t}$ is at most $n-2$.
\end{proof}

Observe that the statement of \cref{xl:bound2} holds for $n=1$ as well, because in this case the single node $\vartheta$ constitutes a counting cut, and Line~13 is never reached.

\smallskip
\mypar{Main lemma.} We are now ready to prove the salient properties of \texttt{ApproxCount} as summarized in \cref{l:approxcount}, which we restate next.

\setcounter{theorem}{0}

\begin{lemma}
Let \textup{$\texttt{ApproxCount}(\mathcal V, s, x, \ell)$} return $(n',t)$. Assume that $\vartheta$ exists and $x\geq a(\vartheta)$. Let $\vartheta'$ be the (unique) descendant of $\vartheta$ in $\mathcal V$ at level $L_{t}$, and let $L_{t'}$ be the last level of $\mathcal V$. Then:
\begin{itemize}
\item[(i)] If $x=a(\vartheta)=a(\vartheta')$, then $n'\neq \texttt{"WrongGuess"}$.
\item[(ii)] If $n'$ is not an error message and $a(\vartheta)=a(\vartheta')$, and if either $t'\geq t+n'$ or level $L_t$ is complete, then $n'=n$.
\item[(iii)] If $t'\geq s+(\ell+2)(n-1)$, then $s\leq t\leq s+(\ell+1)(n-1)$ and $n'\neq \texttt{"MissingNodes"}$. Moreover, if $n'=\texttt{"StrandTooShort"}$, then $L_t$ contains a leader node with at least two children in $\mathcal V$.
\end{itemize}
\end{lemma}
\begin{proof}
Note that $\vartheta'$ is well defined, because the returned pair is $(n',t)$, which means that either $t=s$, and thus $\vartheta=\vartheta'$, or $t>s$, and hence some guesses have been made on nodes in level $L_t$, the first of which must have had the parent of $\vartheta'$ as the guesser.

Let us prove statement~(i). The assumption $x=a(\vartheta)$ implies $\delta=1$. Moreover, since $a(\vartheta)=a(\vartheta')$, \cref{cond:1} is satisfied whenever a guess is made (this is a straightforward induction). Therefore, by \cref{xl:invariants}~(iv), all nodes marked as counted up to $L_t$ indeed have the correct guesses. So, the conditional anonymity that is computed for any node is equal to its anonymity ($a'(v)=\delta a(v)=a(v)$), and hence is an integer. This implies that \texttt{ApproxCount} cannot return the error message \texttt{"WrongGuess"} at Line~19. Also, either $\ell'=\ell$ if all leader agents have been counted, or $\ell'<\ell$ if some leader nodes are missing from the vista. Either way, \texttt{ApproxCount} cannot return the error message \texttt{"WrongGuess"} at Line~31. We conclude that $n'\neq \texttt{"WrongGuess"}$.

Let us prove statement~(ii). Again, because $a(\vartheta)=a(\vartheta')$, \cref{cond:1} is satisfied, and all nodes marked as counted have correct guesses. Also, $x\geq a(\vartheta)$ is equivalent to $\delta\geq 1$. By assumption, \texttt{ApproxCount} returns $(n',t)$, where $n'$ is not an error message and either $t'\geq t+n'$ or level $L_t$ is complete. Since $n'$ is not an error message, a counting cut $C$ was found whose nodes are within levels up to $L_t$, and $n'$ is the sum of the conditional anonymities of all nodes in $C$. Let $S_C$ be the set of agents represented by the nodes of $C$; note that $n'\geq |S_C|$, because $\delta\geq 1$. We will prove that $S_C$ includes all agents in the system. If level $L_t$ is complete, this follows immediately from the fact that $C$ is a counting cut with no nodes after $L_t$. Otherwise, we have $t'\geq t+n'$. Assume the contrary; \cref{xl:propamain} implies that, since $t'\geq t+n'\geq t+|S_C|$, there is a node $z\in L_t$ representing some agent not in $S_C$. Thus, the black path from $z$ to the root $r$ does not contain any node of $C$, contradicting the fact that $C$ is a counting cut with no nodes after $L_t$. Therefore, $|S_C|=n$, i.e., the nodes in $C$ represent all agents in the system. Since \texttt{ApproxCount} does not return an error message, the ``reality check'' $\ell'=\ell$ succeeds (Lines~30--32). However, $\ell'$ is the sum of the conditional anonymities of all leader nodes in $C$, and hence $\ell'=\delta\ell$, implying that $\delta=1$. Thus, $n'=\delta n=n$, as claimed.

Let us prove statement~(iii). Once again, $x\geq a(\vartheta)$ is equivalent to $\delta\geq 1$. By \cref{xl:propa}, if $L_{t'}$ is the last level of $\mathcal V$, then no nodes are missing from level $L_{t'-n+1}$. In fact, since $t'-n+1\geq s+(\ell+1)(n-1)$, no nodes are missing from any level up to $L_{s+(\ell+1)(n-1)}$. Let $\vartheta''$ be the deepest descendant of $\vartheta$ that is marked as counted at Lines~9--11, and let $L_{p}$ be the level of $\vartheta''$. By construction, either all children of $\vartheta''$ are missing from $L_{p+1}$ or at least two children of $\vartheta''$ are in $L_{p+1}$. Also note that $\vartheta'$ must be an ancestor of $\vartheta''$, and so $t\leq p$.

Assume that $p<s+(\ell+1)(n-1)$. This implies that no nodes are missing from level $L_{p+1}$, and therefore $\vartheta''$ must have at least two children in $L_{p+1}$. Since $t\leq p$, we have $t< s+(\ell+1)(n-1)$, as desired. Now assume that $n'=\texttt{"StrandTooShort"}$, which implies that the algorithm was unable to find a counting cut. We claim that in this case $t=p$. So, assume for a contradiction that $t\leq p-1$. It follows that $\vartheta'\in L_t$ is a guesser. Recall that $L_t$ and $L_{t+1}$ are not missing any nodes, because $t\leq p$. Since the network is connected at round~$t+1$, there is at least one node $v\in L_{t+1}$ that is guessable by $\vartheta'$. Also, since no guess has ever been made in level $L_{t+1}$ (due to the way $t$ is updated in Line~14), it follows that $v$ is not counted and not locked. However, the algorithm cannot return $n'=\texttt{"StrandTooShort"}$ as long as there are nodes such as $v$ (Line~12). Thus, $t=p$, which means that $\vartheta'=\vartheta''$, and hence $\vartheta'$ has at least two children in $\mathcal V$, as desired.

Assume now that $p\geq s+(\ell+1)(n-1)$. If $\vartheta$ is the only node in $L_s$, it constitutes a counting cut. In this special case, Line~13 is never reached, $n'=\texttt{"StrandTooShort"}$ is not returned at Line~25, and $t=s$. Hence, we may assume that there are nodes in $L_s$ other than $\vartheta$ and Line~13 is reached (so, we have $n>1$). Recall that no nodes are missing from any level up to $L_{s+(\ell+1)(n-1)}$. In particular, no nodes are missing from the levels in the non-empty interval $\mathcal L$ consisting of the $(\ell+1)(n-1)$ levels from $L_{s+1}$ to $L_{s+(\ell+1)(n-1)}$. Thus, by definition of $p$, as long as no guesses are made outside of $\mathcal L$, \cref{cond:1} holds, and therefore \cref{xl:bound1,xl:bound2} apply. Hence, as long as no guesses are made outside of $\mathcal L$, there are at most $\delta (n-1)$ levels of $\mathcal L$ containing locked nodes (\cref{xl:bound1}) and there are at most $n-2$ levels of $\mathcal L$ where all guessable nodes are already counted (\cref{xl:bound2}). So, there are at most
$$\delta (n-1) + n-2 \leq \ell (n-1) + n-2 = (\ell+1)(n-1)-1$$
levels in $\mathcal L$ were the algorithm can make no new guesses. We conclude that $\mathcal L$ always contains at least one node where a guess can be made, and no guesses are ever made outside of $\mathcal L$ until either a counting cut is found or $n'=\texttt{"WrongGuess"}$ is returned at Line~19. In both cases, $n'=\texttt{"StrandTooShort"}$ is not returned, and moreover $t\leq s+(\ell+1)(n-1)$, as desired.

It remains to prove that $n'\neq \texttt{"MissingNodes"}$. Since $\vartheta$ exists by assumption, the error message \texttt{"MissingNodes"} cannot be returned at Line~6 and can only be returned at Line~30. In turn, this can only occur if a counting cut has been found and $\ell'<\ell$. However, we have already proved that no level up to $L_t$ is missing any nodes, which implies that the counting cut contains nodes representing all agents, and in particular $\ell'=\delta\ell$. Since $\delta\geq 1$, we have $\ell'\geq \ell$, and the condition at Line~30 is not satisfied.
\end{proof}

\mypar{Worst-case example.} For $\ell=\tau=1$, our Counting algorithm in \cref{t:multileadterm} yields a running time of $3n-2$ rounds. The example in \cref{xfig:counter2}, which can easily be generalized to networks of any size $n$, shows that our Counting algorithm can in fact terminate in $3n-3$ rounds. Indeed, the last node in a counting cut is at level $L_{2n-3}$, and then it takes an extra $n$ rounds for the termination condition $t\geq t^\ast +n^\ast$ of \cref{l:algorithm2} to be satisfied.

Note that removing all the double edges from this network and eliminating round~$1$ yields a simple dynamic network where our Counting algorithm terminates in $3n-4$ rounds and stabilization occurs in $2n-3$ rounds, providing a close-to-worst-case example for both \cref{t:multileadterm,t:multileadstab}.

\begin{figure}
\centering
\includegraphics[scale=0.5]{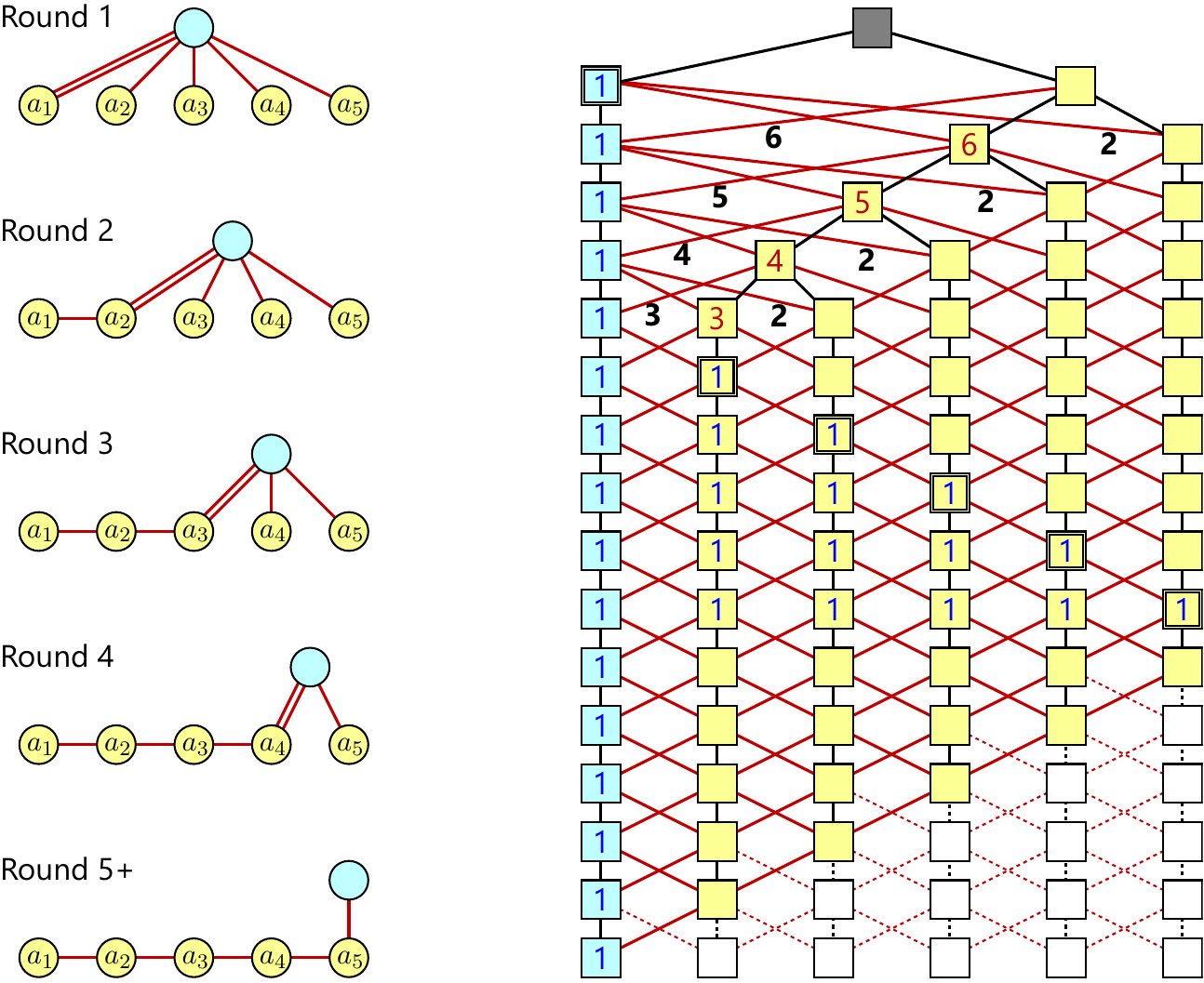}
\caption{An example of a dynamic network with $\ell=\tau=1$ and $n=6$ where the algorithm of \cref{t:multileadterm} terminates in $3n-3$ rounds, almost matching the upper bound of $3n-2$ rounds. The white nodes in the history tree are not in the vista of the leader at the last round. The numbers inside nodes represent guesses; the blue ones correspond to nodes that are marked as counted. The six highlighted nodes constitute a counting cut.}
\label{xfig:counter2}
\end{figure}

\section{Negative Results}\label{s:negative}
In this section we collect several negative results and counterexamples, which not only provide lower bounds asymptotically matching our algorithms' running times, but also justify all of the assumptions made in \cref{s:intermediate,s:termleader} about the a-priori knowledge of agents. Although some of these facts were previously known, here we offer simple and self-contained proofs based on history trees.

We point out that all the impossibility results concerning the assumptions of our algorithms and all the principal lower bounds in this section already hold for simple networks. Moreover, all the counterexamples in \cref{s:unsolvable} are static networks, implying that our impossibility results hold for dynamic, as well as static networks.

\subsection{Unsolvable Problems}\label{s:unsolvable}
Showing that certain functions are not computable in certain network models allows us to argue that our algorithms are \emph{universal} within the topology-independent function class considered in this paper, i.e., they can be applied to every solvable problem in that class.

\smallskip
\mypar{Networks with leaders.} We will prove that the multiset-based functions introduced in \cref{s:2} are the only functions that can be computed deterministically in anonymous networks with leaders, even when restricted to connected static simple networks. This result, together with \cref{xth:compl}, shows that the algorithms in \cref{s:stableader,s:termleader} can be generalized to all functions that can be computed in networks with leaders, i.e., the multiset-based functions.

\begin{proposition}\label{t:aggr}
For any $\ell\geq 1$ and $n\geq \ell$, no algorithm computes a non-multiset-based function (with or without termination) on all static simple networks in $\mathcal N^1_{n,\ell}$.
\end{proposition}
\begin{proof}
Let us consider the (static) network whose topology at round~$t$ is the complete graph $G_t=K_n$, i.e., each agent receives messages from all other agents at every round. We can prove by induction that all nodes of the history tree other than the root have exactly one child. This is because any two agents with the same input always receive equal multisets of messages, and are therefore always indistinguishable. Thus, the history tree is completely determined by the multiset $\mu_\lambda$ of all agents' inputs; moreover, an agent's vista at any given round only depends on the agent's own input and on $\mu_\lambda$ (the fact that $\ell$ is a constant does not affect this). By the fundamental theorem of history trees \cref{xth:view}, this is enough to conclude that if an agent's output stabilizes, that output must be a function of the agent's own input and of $\mu_\lambda$, which is the defining condition of a multiset-based function.
\end{proof}

The following result justifies the assumption made in \cref{s:stableader,s:termleader} that agents have a-priori knowledge of the number of leaders $\ell$ in the system. It states that no algorithm can compute the Counting function $F_C$ (which, as we recall, is a multiset-based function) without knowledge of $\ell$, even when restricted to connected static simple networks with a known and arbitrarily large ratio $n/\ell=k$.

\begin{proposition}\label{t:knowL}
For any integers $k\geq 1$ and $\ell\geq 1$ with $k\ell\geq 3$, no algorithm computes the Counting function $F_C$ (with or without termination) on all static simple networks in $\mathcal N^1_{k\ell,\ell}\cup \mathcal N^1_{k(\ell+1),\ell+1}$.
\end{proposition}
\begin{proof}
Let $\mathcal G$ (respectively, $\mathcal G'$) be the static network consisting of a cycle of $k\ell$ (respectively, $k(\ell+1)$) agents of which $\ell$ are leaders, such that the leaders are evenly spaced among the non-leaders. That is, each leader has a sequence of exactly $k-1$ consecutive non-leaders on each side. Assume that all agents in $\mathcal G$ and $\mathcal G'$ get the same input (apart from their leader flags). Then, at any round, all leaders in both networks have isomorphic vistas, and therefore always give equal outputs. However, in order to compute the Counting function, the leaders of $\mathcal G$ have to eventually output $k\ell$, while the leaders of $\mathcal G'$ have to eventually output $k(\ell+1)$. Hence, at most one of these networks can stabilize on the correct output.
\end{proof}

\mypar{Leaderless networks.} We will now prove that the frequency-based functions introduced in \cref{s:2} are the only  functions that can be computed deterministically in anonymous multigraph networks without leaders, even when restricted to connected static networks. This result, together with \cref{th:concentration}, shows that the algorithms in \cref{s:stableaderless,s:termleaderless} can be generalized to all (topology-independent) functions that can be computed in leaderless networks, i.e., the frequency-based functions.

Although the next theorem is implied by~\cite[Theorem~III.1]{HOT}, here we provide an alternative and simpler proof.

\begin{proposition}\label{t:scale}
No algorithm computes a non-frequency-based function (with or without termination) on all static multigraph networks in $\bigcup_{n\geq 1}\mathcal N^1_{n,0}$.
\end{proposition}
\begin{proof}
The complete-graph argument used in the proof of \cref{t:aggr}, which does not rely on the presence of leaders, shows that every function computable on all static multigraph networks must be multiset-based. Assume for a contradiction that a computable multiset-based function $F(p,\lambda)=\psi(\lambda(p),\mu_\lambda)$ is not frequency-based. Then there are an input multiset $\mu$, an integer $\alpha>1$, and an input $z$ occurring in $\mu$ such that $\psi(z,\mu)\neq\psi(z,\alpha\cdot\mu)$.

Let $n$ be the cardinality of $\mu$ (counting multiplicities), and consider a static cycle $C_n$ whose $n$ agents' inputs realize $\mu$. If $n=1$, interpret $C_n$ as two self-loops; if $n=2$, interpret it as two parallel links. Now consider the cycle $C_{\alpha n}$ obtained by repeating the same sequence of inputs $\alpha$ times. Every agent in $C_n$ and its corresponding agents in $C_{\alpha n}$ have isomorphic vistas at every round, as follows immediately by induction. Hence, due to the fundamental theorem of history trees \cref{xth:view}, corresponding agents with input $z$ must give the same output, contradicting $\psi(z,\mu)\neq\psi(z,\alpha\cdot\mu)$.
\end{proof}

The use of multigraphs for $n\leq 2$ is essential. For example, on connected static simple networks, the non-frequency-based function that outputs~$1$ if $n=1$ and~$0$ otherwise is computable in one round: an agent outputs~$1$ if and only if it receives no messages. A similar counterexample can be constructed for $n=2$, as well. However, the cycles used in the proof of \cref{t:scale} are simple whenever the smaller network has more than two agents. Hence, every function computable on all connected static simple leaderless networks must be frequency-based when restricted to instances with $n>2$ agents.

The following result justifies the need for some a-priori information in \cref{s:termleaderless}, such as an upper bound on $n$ or on the dynamic diameter $d$ of the network. It states that, without any information restricting the possible extent of the network, the leaderless Average Consensus problem (which, as we recall, is a frequency-based function) cannot be solved with explicit termination, even when restricted to connected static simple networks.

\begin{proposition}\label{t:knowN}
No algorithm solves the Average Consensus problem with termination on all static simple networks in $\bigcup_{n\geq 1}\mathcal N^1_{n,0}$.
\end{proposition}
\begin{proof}
Assume for a contradiction that there is such an algorithm $\mathcal A$. Let $\mathcal G$ be a static network consisting of three agents forming a cycle, and assign input~$0$ to all of them. If the agents execute $\mathcal A$, they eventually output the mean value~$0$ and terminate, say in $t$ rounds.

Now construct a static network $\mathcal G'$ consisting of a cycle of $2t+3$ agents $p_1$, $p_2$, \dots, $p_{2t+3}$; assign input~$1$ to $p_1$ and input~$0$ to all other agents. The agent $p_{t+2}$ has distance $t+1$ from $p_1$, and therefore, from round~$0$ to round~$t$, its vista is isomorphic to the vista of any agent in $\mathcal G$. Hence, if $p_{t+2}$ executes $\mathcal A$, it terminates in $t$ rounds with the incorrect output~$0$. Thus, $\mathcal A$ is incorrect.
\end{proof}

\subsection{Lower Bounds}\label{s:lower}
We will now give some lower bounds on the complexity of problems for anonymous dynamic networks. Since our algorithms have linear running times, our focus is on optimizing the multiplicative constants of the leading terms.

\smallskip
\mypar{Preliminary results.} We first prove some simple statements that will be used to derive lower bounds for stabilizing and terminating algorithms.

\begin{lemma}\label{l:lowerlemma}
Let $\mathcal G$ and $\mathcal G'$ be two networks on $n$ and $n'$ agents respectively, where $n\neq n'$. Assume that there is an agent $p$ in $\mathcal G$ and an agent $p'$ in $\mathcal G'$ such that $p$ and $p'$ have isomorphic vistas at round~$t$. Then,
\begin{itemize}
\item No algorithm computes the Counting function $F_C$ and stabilizes within $t$ rounds in both $\mathcal G$ and $\mathcal G'$.
\item No algorithm computes the Counting function $F_C$ in both $\mathcal G$ and $\mathcal G'$ and terminates within $t$ rounds in $\mathcal G$ (or $\mathcal G'$).
\end{itemize}
\end{lemma}
\begin{proof}
Since $p$ and $p'$ have isomorphic vistas at round $t$, they have isomorphic vistas at all rounds up to $t$. Thus, by \cref{xth:view}, if $p$ and $p'$ execute the same algorithm, they give equal outputs up to round $t$. Since the Counting function $F_C$ prescribes that $p$ must output $n$ and $p'$ must output $n'\neq n$, it is impossible for both agents to simultaneously give the correct output within $t$ rounds. In particular, no Counting algorithm can stabilize within $t$ rounds.

Moreover, if the execution of $p$ terminated within $t$ rounds, then the execution of $p'$ would terminate at the same round, as well (again, due to \cref{xth:view}). In that case, both agents would return the same output, which would be incorrect for at least one of them. In particular, no Counting algorithm can terminate within $t$ rounds in $\mathcal G$.
\end{proof}

\begin{corollary}\label{c:lowerlinear}
Let $\mathcal G$ and $\mathcal G'$ be networks on $m$ and $m+k$ agents respectively, with $k>0$. Assume that there are agents in $\mathcal G$ and in $\mathcal G'$ that have isomorphic vistas at round $am+b$, with $a\geq 0$. Then, there is no algorithm that computes the Counting function $F_C$ on both $\mathcal G$ and $\mathcal G'$ and stabilizes in less than $an-ak+b+1$ rounds in each of them, where $n$ is the size of the network (i.e., $n=m$ for $\mathcal G$ and $n=m+k$ for $\mathcal G'$). If termination is required, the bound improves to $an+b+1$ rounds.
\end{corollary}
\begin{proof}
Let $t=am+b$. Assume for a contradiction that a Counting algorithm stabilizes in both networks within $an-ak+b$ rounds. In $\mathcal G$, where $n=m$, this is $am-ak+b\leq t$ rounds; in $\mathcal G'$, where $n=m+k$, this is $a(m+k)-ak+b=t$ rounds. Hence the Counting algorithm stabilizes in both networks within $t$ rounds, contradicting \cref{l:lowerlemma}.

\cref{l:lowerlemma} also states that no Counting algorithm can terminate within $t$ rounds in $\mathcal G$. Hence, if termination is required, no algorithm succeeds in less than $t+1=am+b+1=an+b+1$ rounds (recall that $n=m$ in $\mathcal G$).
\end{proof}

\mypar{Unique-leader networks.} We will now prove a lower bound of roughly $2n$ rounds on the Counting problem for always connected networks with a unique leader.

We first introduce a family of 1-union-connected dynamic networks. For any $m\geq 3$, we consider the dynamic network $\mathcal G_m$ whose topology at round~$t$ is the graph $G^{(m)}_t$ defined on the system $\{p_1,p_2,\dots,p_m\}$ as follows. If $t\geq m-2$, then $G^{(m)}_t$ is the path graph $P_m$ spanning all agents $p_1$, $p_2$, \dots, $p_m$ in order. If $1\leq t\leq m-3$, then $G^{(m)}_t$ is $P_m$ with the addition of the single edge $\{p_{t+1}, p_m\}$. We assume $p_1$ to be the leader and all other agents to have the same input.

\begin{proposition}\label{xth:lower}
For any $m\geq 3$, no algorithm computes the Counting function $F_C$ on all simple networks in $\bigcup_{n=m}^{m+1}\mathcal N^1_{n,1}$ in less than $2n-6$ rounds. If termination is required, the bound improves to $2n-4$ rounds.
\end{proposition}
\begin{proof}
Let us consider the network $\mathcal G_m$ as defined above. It is straightforward to prove by induction that, at every round~$t\leq m-3$, the agent $p_{t+1}$ gets disambiguated, while all agents $p_{t+2}$, $p_{t+3}$, \dots, $p_m$ are still indistinguishable. So, the history tree of $\mathcal G_m$ has a very regular structure, which is illustrated in \cref{xfig:lower1,xfig:lower2}. By comparing the history trees of $\mathcal G_m$ and $\mathcal G_{m+1}$, we see that the leaders of the two systems have identical vistas up to round~$2m-5$. Our claim now follows immediately from \cref{c:lowerlinear}, with $k=1$, $a=2$, and $b=-5$
\end{proof}

A slightly better lower bound can be obtained if we allow self-loops in the network. Consider the static network $\mathcal P_n$ consisting of a path graph spanning all $n$ agents, where one endpoint of the path is the unique leader and the other endpoint has a self-loop. It is easy to see that the leaders in $\mathcal P_n$ and $\mathcal P_{n+1}$ have identical vistas up to round~$2n-2$. This implies a lower bound of $2n-3$ rounds for stabilization and $2n-1$ rounds for termination in static networks with self-loops and a unique leader.

\begin{figure}
\centering
\includegraphics[scale=0.5]{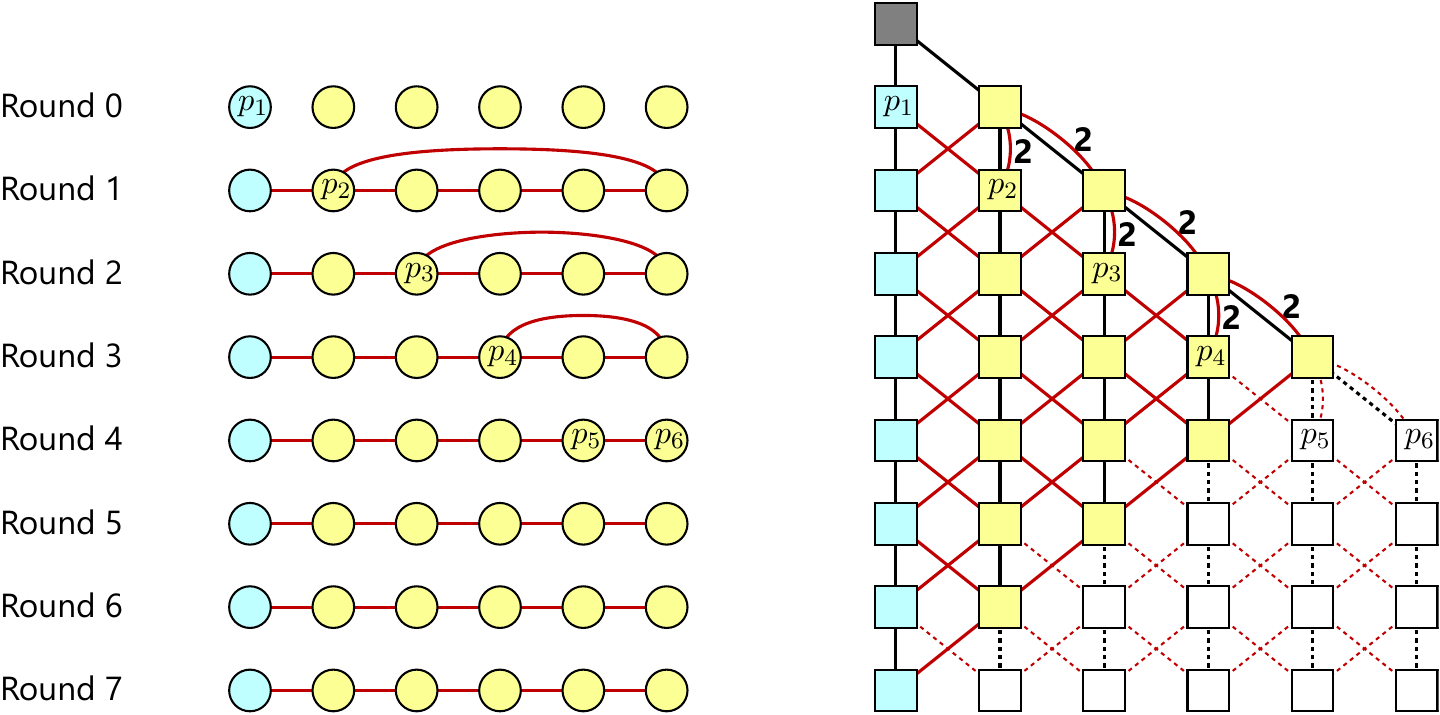}
\caption{The first rounds of the dynamic network $\mathcal G_m$ used in \cref{xth:lower} (left) and the corresponding levels of its history tree (right), where $m=6$; the agent in blue is the leader. The white nodes and the dashed edges in the history tree are not in the vista of the leader at round~$7$. The labels $p_1$, \dots, $p_6$ have been added for the reader's convenience, and mark the agents that get disambiguated, as well as their corresponding nodes of the history tree, which have anonymity~$1$.}
\label{xfig:lower1}
\end{figure}

\begin{figure}
\centering
\includegraphics[scale=0.5]{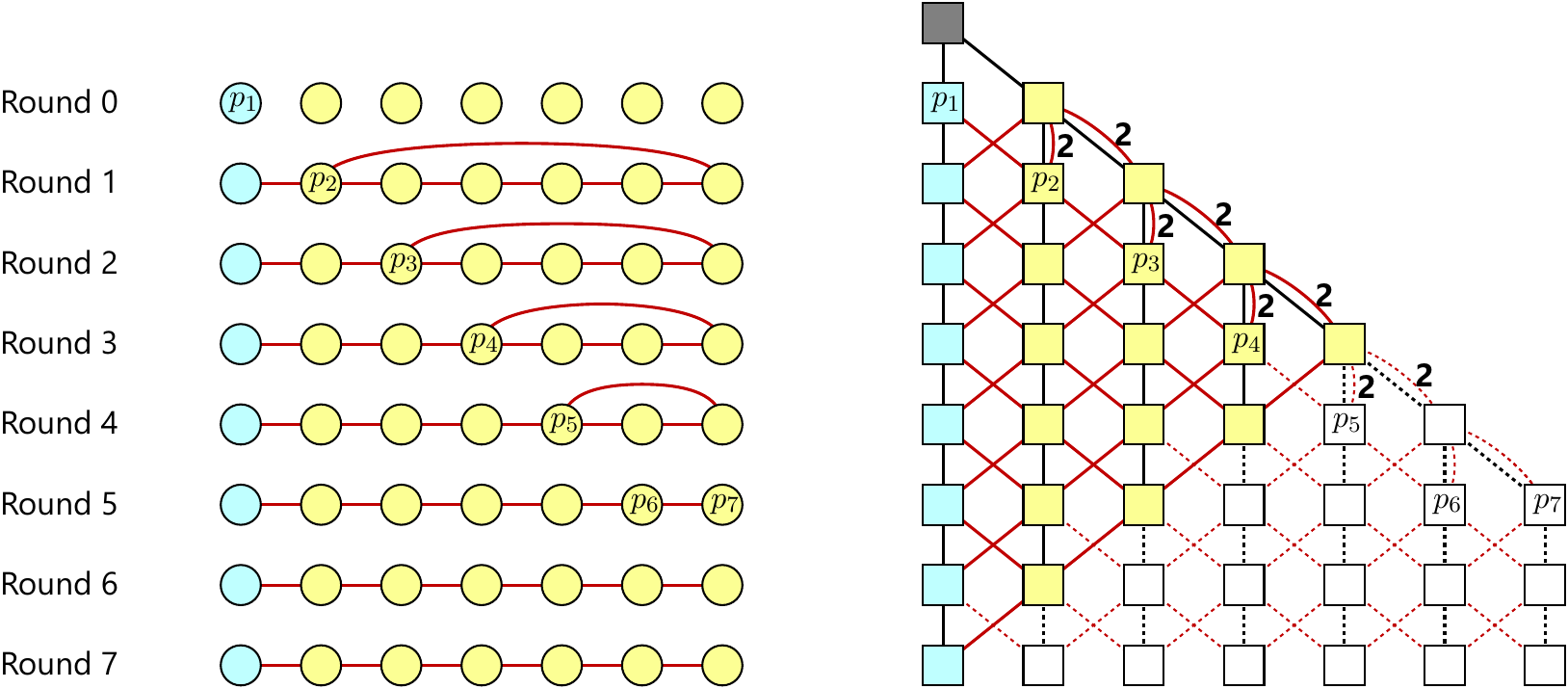}
\caption{The first rounds of the dynamic network $\mathcal G_{m+1}$ with $m=6$. Observe that the vista of the leader at round~$7$ is identical to the vista highlighted in \cref{xfig:lower1}. The intuitive reason is that, from round~$1$ to round~$m-3$, both networks have a cycle whose agents are all indistinguishable (and are therefore represented by a single node in the history tree), except for the one agent with degree~$3$. Thus, the history trees of $\mathcal G_m$ and $\mathcal G_{m+1}$ are identical up to level~$m-3$. After that, the two networks get disambiguated, but this information takes another $m-3$ rounds to reach the leader. Therefore, if the leader of $\mathcal G_m$ and the leader of $\mathcal G_{m+1}$ execute the same algorithm, they must have the same internal state up to round~$2m-5$, due to \cref{xth:view}. In particular, they cannot give different outputs up to that round, which leads to our lower bounds on stabilization and termination for the Counting problem.}
\label{xfig:lower2}
\end{figure}

\smallskip
\mypar{Multi-leader networks.} We can now generalize \cref{xth:lower} to any $\tau$ and any $\ell\geq 1$. This lower bound shows that the algorithms in \cref{s:stableader,s:termleader} are asymptotically optimal for any constant number of leaders $\ell$.

\begin{proposition}\label{t:lower2}
For any $\tau\geq 1$, $\ell\geq 1$, and $m\geq \ell+2$, no algorithm computes the Counting function $F_C$ on all simple networks in $\bigcup_{n=m}^{m+1} \mathcal N^\tau_{n,\ell}$ in less than $\tau(2n-\ell - 5)$ rounds. If termination is required, the bound improves to $\tau(2n-\ell - 3)$ rounds.
\end{proposition}
\begin{proof}
As shown in \cref{xth:lower}, there is a family of simple 1-union-connected networks $\mathcal G_m$, with $m\geq 3$, with the following properties. $\mathcal G_m$ has $\ell=1$ leader and $m$ agents in total; moreover, up to round~$2m-5$, the leaders of $\mathcal G_m$ and $\mathcal G_{m+1}$ have isomorphic vistas.

Let us fix $\ell\geq 1$, and let us construct $\mathcal G'_m$, for $m\geq\ell+2$, as follows. Start from $\mathcal G_{m-\ell+1}$ and rename its agents from $p_1$, \dots, $p_{m-\ell+1}$ to $p_\ell$, \dots, $p_m$, respectively. In the resulting network, at every round, attach a chain of $\ell-1$ additional leaders $p_1$, $p_2$, \dots, $p_{\ell-1}$ to the single leader $p_\ell$. Note that the resulting network $\mathcal G'_m$ has $m$ agents in total and a stable subpath $(p_1, p_2, \dots, p_\ell)$ which is attached to the rest of the network via $p_\ell$.

It is straightforward to see that the agent $p_\ell$ in $\mathcal G'_m$ and the agent $p_\ell$ in $\mathcal G'_{m+1}$, which correspond to the leaders of $\mathcal G_{m-\ell+1}$ and $\mathcal G_{m-\ell+2}$ respectively, have isomorphic vistas up to round~$2(m-\ell)-3$. Since the vista of $p_1$ is completely determined by the vista of $p_\ell$, and it takes $\ell-1$ rounds for any information to travel from $p_\ell$ to $p_1$, we conclude that the agent $p_1$ in $\mathcal G'_m$ and the agent $p_1$ in $\mathcal G'_{m+1}$ have isomorphic vistas up to round~$2m-\ell-4$.

It follows from \cref{c:lowerlinear} (with $k=1$, $a=2$, $b=-\ell-4$) that the Counting function with $\ell\geq 1$ leaders and $\tau=1$ cannot be computed in less than $2n-\ell-5$ rounds in a stabilizing fashion or in less than $2n-\ell-3$ if termination is required. These bounds generalize to an arbitrary $\tau$ by \cref{p:time}.
\end{proof}

Again, a slightly better lower bound can be obtained if we allow self-loops in the network. Starting from the static networks $\mathcal P_{m-\ell+1}$ and $\mathcal P_{m-\ell+2}$ defined above, attach a chain of $\ell-1$ additional leaders to each original leader, as in the proof of \cref{t:lower2}. The new endpoint leaders have identical vistas up to round~$2m-\ell-1$. By \cref{c:lowerlinear,p:time}, this implies a lower bound of $\tau(2n-\ell-2)$ rounds for stabilization and $\tau(2n-\ell)$ rounds for termination in networks with self-loops and $\ell$ leaders (if $\tau=1$, this holds even for static networks).

\smallskip
\mypar{Leaderless networks.} Finally, we can use \cref{xth:lower} to obtain a lower bound for the Average Consensus problem in leaderless networks. This lower bound shows that the leading term in the running time of the algorithm in \cref{s:stableaderless}, i.e., $2\tau n$, is optimal.

\begin{proposition}\label{t:lower1}
For any $\tau\geq 1$ and $m\geq 3$, no algorithm solves the Average Consensus problem on all simple networks in $\bigcup_{n=m}^{m+1} \mathcal N^\tau_{n,0}$ in less than $\tau(2n-6)$ rounds. If termination is required, the bound improves to $\tau(2n-4)$ rounds.
\end{proposition}
\begin{proof}
According to \cref{xth:lower}, the number of agents $n$ in a network with $\ell=1$ and $\tau=1$ cannot be determined in less than $2n-6$ rounds ($2n-4$ rounds if termination is required). We can reduce this problem to Average Consensus with $\ell=0$ and $\tau=1$ as follows. In any given network with $\ell=\tau=1$, assign input~$1$ to the leader and clear its leader flag; assign input~$0$ to all other agents. If the agents can compute the mean input value, $1/n$, they can invert it to obtain $n$ in the same number of rounds. It follows that Average Consensus with $\ell=0$ and $\tau=1$ cannot be solved in less than $2n-6$ rounds ($2n-4$ rounds if termination is required); this immediately generalizes to an arbitrary $\tau$ by \cref{p:time}.
\end{proof}

By setting $m=N-1$, this theorem immediately yields a lower bound of $\tau(2n-4)$ rounds for the terminating algorithm in \cref{t:noleadterm}, provided that $N\geq 4$.

Observe that \cref{t:lower1} says nothing about networks with a fixed dynamic diameter $d$, because $\mathcal G_m$ and $\mathcal G_{m+1}$ have different dynamic diameters. The following result shows that the algorithm in \cref{c:noleadterm}, which pertains to leaderless networks with a known $d$, is asymptotically optimal, as well.
\begin{proposition}\label{t:lower3}
For any $\tau\geq 1$ and $m\geq 3$, no algorithm solves the Average Consensus problem on all simple networks in $\bigcup_{n=m}^{m+1}\mathcal N^{[d]}_{n,0}$ in less than $\tau(2n-6)$ rounds, where $d=\tau(m-1)$. If termination is required, the bound improves to $\tau(2n-4)$ rounds.
\end{proposition}
\begin{proof}
It is easy to see that the network $\mathcal G_m$ has dynamic diameter $m-1$. Let us construct the network $\mathcal G'_{m+1}$ by taking $\mathcal G_{m+1}$ and adding the edge $\{p_{m-1},p_{m+1}\}$ to $G^{(m+1)}_t$, for all $t\geq m-1$. As a result, $\mathcal G'_{m+1}$ has dynamic diameter $m-1$, as well. Also, the leaders of $\mathcal G_{m+1}$ and $\mathcal G'_{m+1}$ have isomorphic vistas up to round $(m-1)+(m-2)-1=2m-4$ (they may become distinguishable only when they are influenced by $p_{m-1}$ after round $m-1$). Thus, since the leaders of $\mathcal G_m$ and $\mathcal G_{m+1}$ have isomorphic vistas up to round $2m-5$, the same holds for the leaders of $\mathcal G_m$ and $\mathcal G'_{m+1}$.

Now we can reason as in \cref{t:lower1}, with $\mathcal G'_{m+1}$ in lieu of $\mathcal G_{m+1}$, to conclude that no algorithm solves the Average Consensus problem on all simple networks in $\bigcup_{n=m}^{m+1}\mathcal N^{[d]}_{n,0}$ in less than $2n-6$ rounds ($2n-4$ if termination is required), where $d=m-1$. Due to \cref{p:time}, this generalizes to an arbitrary $\tau$, provided that $d=\tau(m-1)$.
\end{proof}

\section{Conclusions}\label{s:conclusion}
We introduced the novel concept of \emph{history tree} and used it as our main investigation technique to study computation in anonymous dynamic networks, modeled as sequences of multigraphs. History trees are a powerful tool that completely and naturally captures the concept of symmetry and indistinguishability among agents. In fact, the history tree of a (static or dynamic) network encodes all the information that can be extracted by the agents in the network (\cref{xth:view}). We have demonstrated the effectiveness of our methods by optimally solving a wide class of fundamental problems (\cref{tab:summres}), and we believe that our techniques will find numerous applications in other settings, as well.

We have shown that anonymous agents in $\tau$-union-connected dynamic networks can compute all the multiset-based functions and no other (topology-independent) functions, provided that the network contains a known number of leaders $\ell\geq 1$. If there are no leaders or the number of leaders is unknown, the class of computable functions reduces to the frequency-based functions. We have also identified the Input Frequency function and the Input Multiset function as the complete problems for each class. Notably, the network's dynamic disconnectivity $\tau$ does not affect the computability of functions, but only makes computation proportionally slower.

Moreover, we gave efficient stabilizing and terminating algorithms for computing all the aforementioned functions. Some of our algorithms make assumptions on the agents' a-priori knowledge about the network; we proved that these assumptions are necessary, in the sense that removing any one of them without supplying alternative information makes the corresponding problem unsolvable. All our algorithms have optimal linear running times in terms of $\tau$ and the size of the network $n$.

In one case, there is still a small gap in terms of the number of leaders $\ell$. Namely, for terminating computations with $\ell\geq 1$ leaders, we have a lower bound of roughly $\tau(2n-\ell)$ rounds (\cref{t:lower2}) and an upper bound of roughly $\tau(\ell^2+\ell+1)n$ rounds (\cref{t:multileadterm}). Although these bounds asymptotically match if the number of leaders $\ell$ is a constant (which is a realistic assumption in most applications), determining the optimal dependence on $\ell$ remains an open problem. More fundamentally, it is unclear whether multiple indistinguishable leaders are an advantage or an obstacle: they provide more distinguished agents, but eliminate the unique point of reference supplied by a single leader. Determining which of these effects prevails is an interesting direction for future research.

It is worth noting that for stabilizing computation (i.e., when explicit termination is not required) in networks with a fixed number of leaders, our lower and upper bounds are essentially $2\tau n$ rounds; hence, in this case we were able to optimize the multiplicative constant, as well. As for terminating computation with a unique leader, we have a lower bound of $2\tau n$ rounds and an upper bound of $3\tau n$ rounds. Although we are still unable to completely close this gap, we emphasize that our findings demonstrate the practical feasibility of general computations in anonymous dynamic networks with a unique leader, which was a major open problem in this research area prior to our work.

Observe that our stabilizing algorithms use an unbounded amount of memory, as agents keep adding nodes to their vistas at every round. This can be avoided if the dynamic disconnectivity $\tau$ (as well as an upper bound on $n$, in the case of a leaderless network) is known: in this case, agents can run the stabilizing and the terminating version of the relevant algorithm in parallel, and stop adding nodes to their vistas when the terminating algorithm halts.

Our algorithms require agents to send each other explicit representations of their history trees, which have roughly cubic size in the worst case. It would be interesting to develop algorithms that only send messages of logarithmic size, possibly with a trade-off in terms of running time. We are currently able to do so for leaderless networks and networks with a unique leader, but not for networks with more than one leader~\cite{DVDistComp}. Combining logarithmic-size messages with only $O(\log n)$ bits of internal memory per agent, apart from the space required to represent the agents' inputs and outputs, is another open problem.

Finally, we wonder if our results extend to other communication models, most notably directed networks and networks where communications are not necessarily synchronous. For directed dynamic networks with leaders, an exponential-time terminating Counting algorithm is given in~\cite{VIG24}; whether a polynomial-time algorithm exists remains open. As for asynchrony, we conjecture that our algorithms can be generalized to networks where messages may be delayed by a bounded number of rounds or agents may be inactive for some rounds (provided that a ``global fairness'' condition is met).

\smallskip
\mypar{Acknowledgments.} The authors would like to thank the anonymous reviewers of the Journal of the ACM for corrections and suggestions that greatly improved the readability of this paper. G.\,A.~Di Luna was supported in part by the Ministero dell'Universit\`a e della Ricerca (MUR) National Recovery and Resilience Plan funded by the European Union-NextGenerationEU through the Project SEcurity and RIghts In the CyberSpace (SERICS) under Grant PE00000014 and project ATENEO Sapienza RM1241911201548A. G.~Viglietta was partially funded by the JSPS KAKENHI Grants 23K10985 and 26K14709 and the University of Aizu Competitive Research Fund.

\addcontentsline{toc}{section}{References}
\bibliographystyle{plainurl}
\bibliography{bibliography}
\end{document}